\newcommand{\fullversionref}{accunbounded-full}
\newcommand{\appendixorfull}[1]{%
\processifversion{conf}{the full version~\cite[\Cref{F-#1}]{\fullversionref}}%
\processifversion{full}{\Cref{#1}}%
}
\newcommand{\xmark}{\ding{55}}%
\author{}
  \author{%
  \IEEEauthorblockN{Kevin Morio and Robert Künnemann}
  \IEEEauthorblockA{CISPA Helmholtz Center for Information Security \\
                    Saarland Informatics Campus, Germany}}
\title{Verifying Accountability for \\ Unbounded Sets of Participants
    \processifversion{full}{\\(Full Version)\thanks{This is an extended version of \cite{morio2021verifying}.}}
}
\date{}
\begin{document}

\maketitle

\begin{abstract}
Little can be achieved in the design of security protocols without trusting at least some participants.
This trust should be justified or, at the very least, subject to examination.
One way to strengthen trustworthiness is to hold parties accountable for their actions, as this provides a strong incentive to refrain from malicious behavior.
This has led to an increased interest in accountability in the design of security protocols.

In this work, we combine the accountability definition of \textcite{kuennemann2019automated} with the notion of case tests to extend its applicability to protocols with unbounded sets of participants.
We propose a general construction of verdict functions and a set of verification conditions that achieve soundness and completeness.

Expressing the verification conditions in terms of trace properties allows us to extend \Tamarin{}---a protocol verification tool---with the ability to analyze and verify accountability properties in a highly automated way.
In contrast to prior work, our approach is significantly more flexible and applicable to a wider range of protocols.
\end{abstract}

\section{Introduction}
\label{sec:introduction}
\label{sec:overview}

Holding parties accountable for their misconduct---most often detection is deterrent enough---is an incentive to avoid malicious behavior from the outset.
Participants have to weigh up whether an action is worth the consequences.
Accountability is applicable to a wide range of protocols, such as e-voting, electronic payment processing, and electronic health care transactions.

\textcite{kuennemann2019automated} proposed a protocol-agnostic definition of accountability
and an automated verification technique.
They consider accountability a meta-property defined with respect to a security property $\secprop$.
A protocol that provides accountability for $\secprop$ provides the information necessary to determine whether $\secprop$ has been violated and, if so, which parties should be held accountable.
The \emph{verdict} contains
all groups of parties that are (jointly) accountable for a violation.
Verdicts are returned by a total function---the \emph{verdict function}---given the trace of a protocol execution.

To prove that a specified verdict function provides the protocol with accountability for a security property, a set of verification conditions must be verified.
\textcite{kuennemann2019automated} show that their proposed verification conditions are sound and complete:
If and only if all conditions hold, accountability is provided.
Expressing the verification conditions as trace properties allows them to exploit existing protocol verification tools and achieve a high degree of automation.

However, the verdict function and verification conditions they propose
require the parties to be explicitly stated in each verdict,
thus inherently limiting the set of parties that can be blamed.
This restricts the expressiveness of the approach as in almost all
real-world protocols the same party can be involved in multiple parallel sessions (e.g., TLS) or the number of participants is not known a priori and can change dynamically during the protocol execution (e.g., the Signal chat protocol).%

In this work, we address this shortcoming by combining their approach with the notion of case tests---an idea inspired by the accountability tests of \mbox{\textcite{bruni2017automated}}.
Case tests are trace properties with free variables,
where each free variable stands for a party that should be blamed for a violation.
This is in contrast to accountability tests, where one test applies to
one party and joint accountability is not expressible.
In contrast to an explicitly stated verdict function, case tests can
match multiple parts of a trace.
The verdict function is thus implicitly defined as the union of
groups of parties blamed by instances of the case tests. 
This approach provides flexibility and allows---for the first time---
the analysis of protocols with unbounded sets of participants.
It also improves readability, as intuitively, each case test stands
for a specific way a violation can be triggered, in contrast to the
previous, explicit formulation of the verdict function, in which all
combinations of actions that constitute a violation had to be captured.

We list our contributions as follows.
\label{sec:contributions}

\begin{enumerate}
  \item We derive a set of verdict-based verification conditions based
      on the accountability definition of
      \textcite{kuennemann2019automated} and show that they provide
      soundness and completeness.
  \item We introduce the notion of case tests and use them to define
      verdict functions that are significantly more flexible than
      previous ones.
  \item We show how the verification conditions can be rewritten using
      case tests and formalize requirements to encode them in terms of
      trace properties.
      By proving their relation to the verdict-based variant, we can
      transfer their
      soundness and completeness to the
      encoded verification conditions.
  \item We implemented our approach in \Tamarin{} by adding the ability
      to define accountability lemmas and case tests.
  \item We showcase our methodology by extending previous models of
      OCSP Stapling~\cite{rfc6960,rfc6066}, and
      Certificate Transparency~\cite{rfc6962}
      to an unbounded
      number of parties,
      and applying it to new models of mixnets
      and the e-voting protocol MixVote/Alethea~\cite{basin2018alethea,DBLP:conf/csfw/BasinRS20}.
\end{enumerate}

The paper is structured as follows.
Related work is discussed in \cref{sec:related-work} and the accountability definition of \textcite{kuennemann2019automated} is elucidated in \cref{sec:background}.
We show a sound and complete decomposition of accountability into verification conditions in \cref{sec:verif-acc} and
introduce case tests to define a general verdict function in \cref{sec:verdict-functions}.
We elaborate on the counterfactual relation in
\cref{sec:count-rel} and present the verification conditions expressed as trace properties
in \cref{sec:verification-conditions-trace-prop}.
In \cref{sec:automated-verif}, we explain the implementation in
\Tamarin{}.
In \cref{sec:case-studies}, we describe the case studies, evaluate our verification results, and compare them with the results in the framework of \cite{kuennemann2019automated}.
We conclude in \cref{sec:conclusion}.

\section{Related Work}
\label{sec:related-work}

In the security setting considered in this work, we regard accountability as the ability to identify malicious parties.
Different approaches and notions of accountability have been proposed.
However, in previous works, these are only described informally or tailored to specific protocols and security properties \parencites{asokan1998asynchronous,backes2005anonymous,backes2013privacy,kroll2015phd}.
An emerging problem is the difficulty of defining when a party's behavior should be considered \emph{malicious} and the implications this has on completeness---holding \emph{all} malicious parties accountable.%
\footnote{
  Note that this notion of completeness concerns what we would consider the verdict.
  This is opposed to the completeness of the verification conditions with respect to the accountability of a protocol, saying that all protocols that provide accountability are recognized as correct by the verification conditions.
}

In the past, misbehaving and dishonest parties were treated as equivalent.
While this is a reasonable approximation for some cryptographic tasks---for example, secure multi-party computation---it is not suitable in the context of accountability.
Completeness would require identifying all dishonest parties, but a dishonest party does not have to deviate or may behave in a way that is indistinguishable from the protocol.
Some approaches \parencites{haeberlen2007peerreview,jagadeesan2009towards} in the distributed setting assume that all communication is observable and classify any trace not producible by honest parties as malicious behavior.
In the security setting, this assumption is impossible to satisfy and the definition of malicious behavior unreasonable, as parties may communicate through hidden channels and deviate in harmless ways.

\textcite{haeberlen2007peerreview} propose PeerReview, a system which can detect Byzantine faults in the distributed setting.
The system requires that all communication is observable which is a suitable assumption in a distributed environment but unrealistic in the security setting.

\textcite{jagadeesan2009towards} provide multiple general notions of accountability based on an abstract labeled transition system in the distributed setting.
However, the authors admit that \enquote{the only auditor capable of providing [completeness] is one which blames all principals who are capable of dishonesty, regardless of whether they acted dishonestly or not.}

\textcite{kusters2010accountability} define accountability in the symbolic and computational model using accountability properties.
These are specified in a formal language.
\textcite{kuennemann2018accountability} point out that these properties are not expressive enough.
Furthermore, they identify significant weaknesses in the case of joint misbehavior.

Another approach is to consider protocol actions as the actual causes for security violations \parencites{feigenbaum2011towards,datta2015program,gossler2015general}.
However, protocol actions may be causally related to a security violation but still be harmless and without any malicious intent.

In recent work, \textcite{kuennemann2018accountability} propose a general protocol agnostic definition of accountability in which the fact that a party deviated is considered a potential cause for a security violation.
Based on this approach, \textcite{kuennemann2019automated} provide an automated verification technique in the single-adversary setting.
They define the \emph{a posteriori verdict} (apv), which given a trace of a protocol execution, returns all groups of parties that are jointly accountable for the security violation.
If there exists a function---called the verdict function---which coincides with the apv for all traces of the protocol, the verdict function is said to provide the protocol with accountability for a specified security property.
In their work, the verdict function uses a case distinction over traces and specifies a verdict per case.
This form requires that the parties be explicitly stated in a verdict and thus fixes the number of parties.
Most protocols have a fixed number of \emph{roles}, but the same party
can run many sessions with different communication partners (e.g.,
several servers in TLS, or partners in chat protocols).

\textcite{bruni2017automated} give a definition of accountability based on the existence of per-party accountability tests which decide whether the party should be held accountable for a violation.
A verdict is obtained by considering all parties for which their test is positive as singleton sets.
Since each singleton verdict contains exactly one party, joint accountability is not expressible. 
Moreover, as noted by \textcite{kuennemann2019automated}, there are some flaws in the criteria of their definition allowing parties to be blamed even if the security of a protocol cannot be violated or violations remain undetected under certain circumstances.

\section{Background}
\label{sec:background}

We provide an overview of the notation and concepts we use throughout
this work and recall the accountability definition of \textcite{kuennemann2019automated}.
\subsection{Preliminaries}
\label{sec:notation}

\paragraph{Sets, sequences, and multisets}
We denote the set of integers $\Set{ 1, \dots, n }$ by $[n]$,
the power set of $S$ by $2^S$ and the set of finite sequences of elements from $S$ by $S^*$.
For a sequence $s$, we write $s_i$ for the $i$-th element, $\abs{ s }$ for the length of $s$, and $\idx(x) \defas \Set[\big]{1, \dots, \abs{ s } }$ for the set of indices of $s$.
We write $\vec{s}$ to emphasize that $s$ is a sequence.
\begin{full}
For a set $A$, we write $A^{\#}$ for the set of finite multisets of elements from $A$.
We use the superscript $^{\#}$ to denote the usual operations on multisets.
For example, we write $\emptyset^{\#}$ for the empty multiset and $m_1 \union^{\#} m_2$ for the union of two multisets $m_1$ and $m_2$.
\end{full}

\paragraph{Terms}
Cryptographic messages are modeled as abstract terms.
We specify an order-sorted term algebra with the sort $\sort{msg}$ and two incomparable subsorts $\sort{pub}$ and $\sort{fresh}$ for two countably infinite sets of public names ($\PN$) and fresh names ($\FN$).
We assume pairwise disjoint, countably infinite sets of variables $\Vars_s$ for each sort $s$.
The set of all variables $\Vars$ is the union of the set of variables for all sorts $\Vars_s$.
We write $u \colon s$ when the name or variable $u$ is of sort $s$.
A signature $\Sig$ is a set of function symbols, each with an arity.
We write $f / n$ for a function symbol $f$ with arity $n$.
\begin{full}
A subset $\Sig_{\text{priv}} \subseteq \Sig$ consists of \emph{private} function symbols which cannot be applied by the adversary.
\end{full}
The set of well-sorted terms constructed over $\Sig$, $\PN$, $\FN$, and $\Vars$ is denoted by $\Terms_{\Sig}$.
The subset of ground terms---terms without variables---is denoted by $\GTerms_{\Sig}$.
If $\Sig$ can be inferred from context, we write $\Terms$ and $\GTerms$ respectively.

\paragraph{Equational theories}
An \emph{equation} over the signature $\Sig$ is an unordered pair $\Set{ s, t }$ of terms $t$, $s \in \Terms_{\Sig}$, written $s \backsimeq t$ or $s = t$ when the meaning can be inferred from context.
Equality is defined with respect to an equational theory $E$, a binary relation $=_E$ induced by a finite set of equations which is closed under the application of function symbols, bijective renaming of names, and substitution of variables by terms of the same sort.
An equational theory $E$ formalizes the semantics of the function symbols in $\Sig$.
We say that two terms $t$ and $s$ are \emph{equal modulo $E$} iff $t =_E s$.
Set membership modulo $E$ is denoted by $\in_E$ and defined as $e \in_E S$ iff $\Exists{ e' \in S } e' =_E e$.
The usual operations on sets modulo $E$ are defined accordingly.

\begin{full}
\begin{example}[Digital signatures]
  \label{ex:signatures}
  To model cryptographic messages built using digital signatures, we use the signature
  \begin{equation*}
    \Sig_{\DS} = \Set{ \fun{sig}/2, \fun{verify}/3, \fun{pk}/1, \fun{sk}/1, \fun{true}/0 }
  \end{equation*}
  and the equational theory $E_{\DS}$ generated by the equation
  \begin{equation*}
    \fun{verify}\del[\big]( \fun{sig}( \var{m}, \fun{sk}( \var{i} ) ), \var{m}, \fun{pk}( \fun{sk}( \var{i} ) ) ) = \fun{true}\,. \qedhere
  \end{equation*}
\end{example}

We assume that the signature $\Sig$ and the equational theory $E$ contain symbols and equations for pairing and projection.
  $\Set[\big]{ \del< \cdot, \cdot >, \fun{fst}/1, \fun{snd}/1 } \subseteq \Sig $
  with
  \begin{align*}
  \fun{fst}\del[\big]( \del< \var{x}, \var{y} > ) &= x,
                                                  &
    \fun{snd}\del[\big]( \del< \var{x}, \var{y} > ) &= y\,.
  \end{align*}
We write $\langle \var{x}_1, \langle \dots, \langle \var{x}_{n-1}, \var{x}_n \rangle \dots \rangle$ simply as $\del< \var{x}_1, \dots, \var{x}_n >$.
\end{full}

\paragraph{Facts}
We assume an unsorted signature $\Sig_{\FACT}$ which is disjoint from $\Sig$.
The set of \emph{facts} is defined by
\begin{equation*}
  \Facts \defas \DisplaySet[\big]{ \fact{F}(t_1, \dots, t_n) \given t_i \in \Terms, \fact{F} \in \Sig_{\FACT}^k }\,,
\end{equation*}
where $\Sig_{\FACT}^k$ denotes all function symbols of arity $k$ in $\Sig_{\FACT}$.
The set of \emph{ground facts} is denoted by $\GFacts$.

\paragraph{Substitutions}
A \emph{substitution} $\sub$ is a well-sorted function from variables $\Vars$ to terms $\Terms_{\Sig}$ that corresponds to the identity function on all variables except on a finite set of variables.
Overloading notation, we call this finite set of variables the \emph{domain} of $\sub$, which we denote by $\dom(\sub)$.
The image of $\dom(\sub)$ under $\sub$ is denoted by $\rng(\sub)$.
For the homomorphic extension of $\sub$ to a term $t$ or a trace formula $\varphi$, we write $t \sub$ and $\varphi \sub$ respectively.
We write $\sub \subst{ v \mapsto w }$ to denote the update of $\sub$ at $v$ such that $\sub\subst{v \mapsto w}(x) = w$ for $x = v$ and $\sub(x)$ otherwise.
If $\sub$ is injective, we denote its inverse by $\sub^{-1}$.
We say that two substitutions $\sub$, $\sub'$ are equal modulo $E$ if $\dom(\sub) = \dom(\sub')$ and $\sub(x) = \sub(x')$ for all $x$ in $\dom(\sub)$.

\paragraph{Valuation}
Each sort $s$ is associated with a domain $\Dom_s$.
The domain for temporal variables is the rational numbers $\Dom_{\sort{temp}} \defas \QQ$ and the domains for messages are $\Dom_{\sort{msg}} \defas \GTerms$, $\Dom_{\sort{fresh}} \defas \FN$, and $\Dom_{\sort{pub}} \defas \PN$.
A function $\valu$ from $\Vars$ to $\QQ \union \GTerms$ is a \emph{valuation} if it respects sorts, that is, $\valu(\Vars_s) \subseteq \Dom_s$ for all sorts $s$.
We write $t \valu$ for the homomorphic extension of $\valu$ to a term $t$.

\paragraph{Trace properties}
Trace properties are sets of traces which are specified by trace formulas in a two-sorted first-order logic which supports quantification over messages and timepoints.
\begin{definition}[Trace formula]\label{def:trace-formula}
  A trace atom is either false $\bot$, a term equality $t_1 \approx t_2$, a timepoint ordering $i \lessdot j$, a timepoint equality $i \doteq j$, or an action $\fact{F}@i$ for a fact $\fact{F} \in \Facts$ and a timepoint $i$.
  A trace formula is a first-order formula over trace atoms.\looseness=-1
\end{definition}

\begin{definition}[Satisfaction relation]
  \label{def:satis-rel}
  The satisfaction relation $(\tr, \valu) \holds \varphi$ between a trace $\tr$, a valuation $\valu$, and a trace formula $\varphi$ is defined as follows.
  \begingroup
  \allowdisplaybreaks
  \begin{align*}
    &(\tr, \valu) \holds \bot                          &&\phantom{{}\iff{}} \text{never} \\
    &(\tr, \valu) \holds \fact{F}@i                    &&\iff \valu(i) \in \idx(\tr) \land \fact{F} \valu \in_E \tr_{\valu(i)} \\
    &(\tr, \valu) \holds i \lessdot j                  &&\iff \valu(i) < \valu(j) \\
    &(\tr, \valu) \holds i \doteq j                    &&\iff \valu(i) = \valu(j) \\
    &(\tr, \valu) \holds t_1 \approx t_2               &&\iff t_1 \valu =_E t_2 \valu \\
    &(\tr, \valu) \holds \neg \varphi                  &&\iff \text{not } (\tr, \valu) \holds \varphi \\
    &(\tr, \valu) \holds \varphi_1 \land \varphi_2     &&\iff (\tr, \valu) \holds \varphi_1 \text{ and } (\tr, \valu) \holds \varphi_2 \\
    &(\tr, \valu) \holds \Exists{ x \colon s } \varphi &&\iff \begin{aligned}[t]
                                                               &\text{there is } u \in \dom(s) \\
                                                               &\text{such that } (\tr, \valu \subst{ x \mapsto u }) \holds \varphi\,.
                                                              \end{aligned}
  \end{align*}
  \endgroup
\end{definition}
For completeness, we define
\begin{align*}
  &\varphi_1 \lor \varphi_2 &&\equiv \neg ( \neg \varphi_1 \land \neg \varphi_2 ) \\
  &\Forall{ x \colon s } \varphi &&\equiv  \neg ( \Exists{ x \colon s } \neg \varphi ) \\
  &\varphi_1 \impliestp \varphi_2  &&\equiv  \neg \varphi_1 \lor \varphi_2 \\
  &\varphi_1 \ifftp \varphi_2      &&\equiv \varphi_1 \impliestp \varphi_2 \land \varphi_2 \impliestp \varphi_1
\end{align*}

We write $t_1 = t_2$, $i < j$, and $i = j$ when the meaning is clear from context.
Timepoints are used to indicate the position of facts in a trace.
The free variables of $\varphi$ are denoted by $\fv(\varphi)$ which may be used as a set or sequence depending on the context.
We say $\varphi$ is a \emph{ground formula} if it does not contain free variables, that is, $\fv(\varphi) = \emptyset$.
When $\varphi$ is a ground formula, we may write $\tr \holds \varphi$ since the satisfaction of $\varphi$ is independent of the valuation.
The renaming of the free variables of $\varphi$ by a sequence $\vec{v}$ of equal length is denoted by $\varphi\subst{ \fv(\varphi) \mapsto \vec{v} }$ or simply $\varphi\subst{ \vec{v} }$.
We write $\varphi\del[\big](\vec{x})$ to denote that the variables $\vec{x}$ are bound in $\varphi$, that is, $\fv\del[\big](\varphi(\vec{x})) = \fv(\varphi) \setminus \vec{x}$.

\begin{full}
\end{full}
\begin{definition}[Validity, satisfiability]
  Let $\Tr$ be a set of traces.
  A trace formula $\varphi$ is \emph{valid} for $\Tr$, written $\Tr \holds^{\forall} \varphi$, iff $(\tr, \valu) \holds \varphi$ for every trace $\tr \in \Tr$ and every valuation $\valu$.
  A trace formula $\varphi$ is \emph{satisfiable} for $\Tr$, written $\Tr \holds^{\exists} \varphi$, iff there exists a trace $\tr \in \Tr$ and a valuation $\valu$ such that $(\tr, \valu) \holds \varphi$.
\end{definition}
Note that $\Tr \holds^{\forall} \varphi$ iff $\Tr \not\holds^{\exists} \neg \varphi$.

\paragraph{Instantiations}
An \emph{instantiation} $\inst$ is a substitution from variables $\Vars$ to ground terms $\GTerms$.
We say that $\inst$ is \emph{grounding} with respect to a trace formula $\varphi$ if $\varphi \inst$ is a ground formula.
For two instantiations $\inst$, $\inst'$ we only consider equality modulo $E$  and simply write $\inst = \inst'$.
In particular, all operations involving instantiations are considered modulo $E$.

\paragraph{Accountability protocol}

The conditions we derive are independent of the formalism of
choice. For now, we assume a function $\traces$ from protocols to sets
of ground traces, i.e., sequences of ground facts.
Given a protocol $\ps{P}$ (e.g., a ground process or a set of multiset-rewrite rules),
$\varphi$ is valid for $\ps{P}$, written $\ps{P} \holds^{\forall}
\varphi$, if $\traces(\ps{P}) \holds^{\forall} \varphi$; $\varphi$ is
satisfiable for $\ps{P}$, written $\ps{P} \holds^{\exists} \varphi$,
if $\traces(\ps{P}) \holds^{\exists} \varphi$.

An accountability protocol is a protocol $\ps{P}$ with a countably infinite set of participants $\Parties$.
We assume that $\Parties \subseteq \GTerms$, that is, a party can be any ground term.
Due to the huge variety in the design of protocols, we leave the concrete structure of this process open.
However, we require that each party which is not trusted specifies a corruption procedure that emits a $\fact{Corrupted}$ fact and reveals its secrets.
The set of corrupted parties of a trace $t$ is defined by
\begin{equation*}
  \cor(t) \defas \DisplaySet[\big]{ \Party{A} \in \Parties \given \fact{Corrupted}(\Party{A}) \in t }\,.
\end{equation*}

In this work, we implicitly assume an accountability protocol $\ps{P}$.
If not stated otherwise, quantification over traces is always with respect to $\traces(\ps{P})$.

\subsection{A Definition of Accountability}
\label{sec:def-acc}

We review the accountability definition of
\textcite{kuennemann2019automated}.
This definition holds parties accountable for violations of a
\emph{security property} which is expressed as a trace property $\secprop$.
To allow any meaningful analysis, there have to be at least two traces, one satisfying and one violating the security property.
Following intuition, if all parties adhere to the protocol, the security property $\secprop$ must hold.
Otherwise, either the protocol or $\secprop$ is ill-defined.

If a violation occurred, i.e., $\neg \secprop$, at least one party must have deviated from the protocol.
Each party is either \emph{honest} and follows the protocol or \emph{dishonest} and may deviate from its specified behavior.
The definition assumes a single adversary controlling all dishonest parties 
(see~\cite{accountability-decentralised} for a discussion of this
topic).
An honest party becomes dishonest when it receives a \emph{corruption} message from the adversary
and remains dishonest for the rest of the protocol execution.
We may refer to parties as dishonest or corrupted interchangeably throughout this work.

A dishonest party does not have to deviate and may behave in a way that is indistinguishable from the protocol.
It is thus impossible to detect all dishonest parties.
Furthermore, parties may deviate by communicating through hidden channels and thus it is also impossible to detect all deviating parties.
Instead, \textcite{kuennemann2019automated} build on sufficient causation \parencites{datta2015program, kuennemann2017sufficient}, and focus on parties that are the actual cause of a violation.
This requires protocols to be defined in such a way that deviating parties leave publicly observable evidence for security violations.
In this sense, a protocol provides accountability with respect to $\secprop$ if we can determine all parties \emph{for which the fact that they are deviating at all} is a cause for the violation of $\secprop$.

Assume a countably infinite set of parties $\Parties$.%
\footnote{In contrast to \cite{kuennemann2019automated} where a finite set of parties is assumed.}
Deviations of a set of parties $S \subseteq \Parties$ are a cause for a violation iff
\begin{description}[labelwidth=3em,leftmargin=3em,labelsep=0pt]
  \item[\itemlabel{SC1}{it:sc-1}:] A violation occurred and the parties in $S$ deviated.
  \item[\itemlabel{SC2}{it:sc-2}:] If all deviating parties, except those in $S$, behaved honestly, the same violation would still occur.
  \item[\itemlabel{SC3}{it:sc-3}:] $S$ is minimal; SC1 and SC2 hold for no strict subset of $S$.
\end{description}

\hyperref[it:sc-1]{SC1} ensures that a violation has occurred and the parties in $S$ deviated.
\hyperref[it:sc-2]{SC2} ensures that the parties in $S$ are sufficient to cause a violation.
There may be other deviating parties not in $S$, but their deviation has no influence on the violation.
In this vein, \hyperref[it:sc-2]{SC2} describes a situation which differs from the actual observed events---called a \emph{counterfactual}.
\hyperref[it:sc-3]{SC3} ensures that only minimal sets $S$ are considered, that is, we always hold the least number of parties accountable.

\begin{example}
  \label{ex:db}
  Consider a protocol in which access to a central user database is logged and each request must be signed.
  A violation occurs whenever user data is leaked.
  The parties involved are a manager $\Party{M}$ and two employees $\Party{E_1}$ and $\Party{E_2}$.
  The manager can directly sign a request to get access to the database and can thus cause a violation on its own.
  For the employees to gain access, both need to sign a request.
  Assume this is the case and a leak occurs.
  Then $\Party{E_1}$ and $\Party{E_2}$ are \emph{jointly} accountable.
  In the counterfactual scenarios in which only $\Party{E_1}$ or $\Party{E_2}$ deviates, a violation is not possible and thus \hyperref[it:sc-2]{SC2} is not satisfied.
  
  If parties $\Party{M}$ and $\Party{E_1}$ cause a violation, then \hyperref[it:sc-1]{SC1} and \hyperref[it:sc-2]{SC2} hold.
  However, as $\Party{M}$ can cause a violation on its own, which also satisfies \hyperref[it:sc-1]{SC1} and \hyperref[it:sc-2]{SC2}, \hyperref[it:sc-3]{SC3} does not hold.
\end{example}

The counterfactual situations considered in \hyperref[it:sc-2]{SC2} cannot be chosen arbitrarily.
They have to be related to the actual situation to obtain meaningful and justifiable results.
This relationship is specified by a \emph{counterfactual relation} $\rel$.
If $(t,t') \in \rel$, also written as $\rel(t,t')$, then the \emph{counterfactual trace} $t'$ is related to the \emph{actual trace} $t$.

We only consider counterfactual traces if they do not
consider additional parties as corrupted, as these are
not being causally relevant for a security violation in the
actual trace.
\begin{definition}[Counterfactual relation]\label{def:counterfactual-relation} 
    A counterfactual relation is a reflexive and transitive relation
    between traces s.t.:
\begin{align}
    \rel(t,t') & \implies \cor(t') \subseteq \cor(t) 
  \label{eq:rel-corrupted}
\end{align}
\end{definition}

The \emph{a posteriori verdict} (apv) specifies for a given trace all minimal subsets of parties that are sufficient to cause a security violation.
\begin{definition}[A posteriori verdict]
  \label{def:apv}
  Let $\ps{P}$ be a protocol, $t$ a trace, $\secprop$ a security property, and $\rel$ a relation on traces.
  The a posteriori verdict is defined by
  $
    \apv_{\ps{P},\secprop,\rel}(t) \defas $
  \begin{align}
    \Big\{\, S\, \SetDelim[\Big] &t \holds \neg \secprop \reqlabel{eq:apv-req-1}\\
                                          {}\land{} &\Exists{ \qs{t'} } \rel(t,t') \land \cor(t') = S \land t' \holds \neg \secprop \reqlabel{eq:apv-req-2}\\
                                          {}\land{} &\NExists{ \qs{t''} } \rel(t,t'') \land \cor(t'') \subsetneq S \land t'' \holds \neg \secprop \,\Big\} \reqlabel{eq:apv-req-3}
  \end{align}
\end{definition}
We may leave out any of the subscripts $\ps{P},\secprop,\rel$ if they can be inferred from context.
The output of the apv is called a \emph{verdict}.
\begin{example}
  \label{ex:db-apv}
  In the situation of \cref{ex:db}, the following verdicts may be returned by the apv:
  \begin{labeling}[]{$\verdict{(M), (E_1, E_2)}$}
    \item[$\mkern2mu\verdict{}$]
      The empty verdict---no violation and no parties to blame.
    \item[$\verdict{(M)}$]
      The manager leaked the data on its own.
    \item[$\verdict{(E_1, E_2)}$]
      The employees colluded to leak the data.
    \item[$\verdict{(M), (E_1, E_2)}$]
      The manager as well as the employees leaked the data. \qedhere
  \end{labeling}
\end{example}

Each set $S \in \apv_{\ps{P},\secprop,\rel}(t)$ satisfies \hyperref[it:sc-1]{SC1}, \hyperref[it:sc-2]{SC2}, and \hyperref[it:sc-3]{SC3}.
\Cref{eq:apv-req-1} ensures that a violation occurred and therefore at least one party in $S$ deviated in $t$.
If not all parties in $S$ would deviate, there would be a counterfactual trace $t''$, where a strict subset of $S$ would deviate, thereby violating \cref{eq:apv-req-3}.
Hence, \hyperref[it:sc-1]{SC1} is satisfied.
\hyperref[it:sc-2]{SC2} is captured by \cref{eq:apv-req-2}, which ensures that there exists a counterfactual trace $t'$, showing that the parties in $S$ are sufficient to cause a violation.
\hyperref[it:sc-3]{SC3} follows directly from \cref{eq:apv-req-3}.

The following corollary shows that accountability with respect to $\secprop$ implies verifiability of $\secprop$.
If no violation occurred, no parties are blamed.
If no parties are blamed, no violation occurred.
\begin{corollary}
  \label{cor:apv}
  For all traces $t$,
  $  \apv_{\ps{P},\secprop,\rel}(t) = \verdict{} \iff t \holds \secprop$.
\end{corollary}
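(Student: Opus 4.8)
The plan is to prove the two implications separately by unfolding \Cref{def:apv}, using only reflexivity of $\rel$ and the finiteness of a trace. Recall that for a ground trace formula exactly one of $t \holds \secprop$ and $t \holds \neg \secprop$ holds, so the corollary reduces to showing $t \holds \neg \secprop \iff \apv_{\ps{P},\secprop,\rel}(t) \neq \verdict{}$.

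For the direction $t \holds \secprop \implies \apv(t) = \verdict{}$, I would simply observe that membership of any $S$ in $\apv(t)$ requires \cref{eq:apv-req-1}, i.e.\ $t \holds \neg \secprop$. Since $t \holds \secprop$, this conjunct is false, so no set $S$ satisfies the defining condition and $\apv(t)$ is the empty verdict $\verdict{}$. This direction is immediate and needs nothing beyond reading off the definition.

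The converse, $\apv(t) = \verdict{} \implies t \holds \secprop$, I would prove by contraposition: assuming $t \holds \neg \secprop$, I exhibit a witness $S \in \apv(t)$, so that $\apv(t) \neq \verdict{}$. First, reflexivity of $\rel$ gives $\rel(t,t)$, and together with $t \holds \neg \secprop$ this shows that the candidate family $\mathcal{C} \defas \Set{ \cor(t') \given \rel(t,t') \land t' \holds \neg \secprop }$ is non-empty (it contains $\cor(t)$). By \cref{eq:rel-corrupted}, every element of $\mathcal{C}$ is a subset of $\cor(t)$. I would then pick $S$ to be a $\subsetneq$-minimal element of $\mathcal{C}$, realized by some trace $t'$ with $\rel(t,t')$, $\cor(t') = S$, and $t' \holds \neg \secprop$. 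This $S$ satisfies \cref{eq:apv-req-1} by assumption and \cref{eq:apv-req-2} by the choice of $t'$; \cref{eq:apv-req-3} holds because any $t''$ with $\rel(t,t'')$, $t'' \holds \neg \secprop$, and $\cor(t'') \subsetneq S$ would contribute an element of $\mathcal{C}$ strictly below $S$, contradicting minimality. Hence $S \in \apv(t)$.

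The only point requiring care---and the step I would treat as the crux---is the existence of a $\subsetneq$-minimal element of $\mathcal{C}$. I would justify it by noting that a trace is a finite sequence of ground facts, so $\cor(t)$ is finite and $\mathcal{C} \subseteq 2^{\cor(t)}$ is a finite, non-empty family of sets; any such family admits a minimal element under strict inclusion. Note that no appeal to transitivity of $\rel$ is needed, since both the initial witness and the minimality are taken over traces directly related to $t$.
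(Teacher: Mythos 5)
Your proposal is correct and takes essentially the same approach as the paper: the forward direction is read off \cref{eq:apv-req-1}, and the converse uses reflexivity of $\rel$ to obtain a violating related trace and then minimizes the corrupted set among subsets of $\cor(t)$. The only difference is presentational---the paper reaches the minimal set by iterative descent, arguing termination via strictly decreasing cardinality, whereas you directly pick a $\subsetneq$-minimal element of a finite nonempty family, making explicit the finiteness of $\cor(t)$ that the paper's termination argument silently relies on.
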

\begin{proof}
  Assume $t \holds \secprop$.
  Then $\apv_{\ps{P},\secprop,\rel}(t) = \verdict{}$ follows by \cref{def:apv}.
  For the other direction,
  assume $t \holds \neg \secprop$.
  As $\rel$ is reflexive, \cref{eq:apv-req-2} holds for $t$ and $S = \cor(t)$.
  If $S$ is already minimal, there does not exist a trace $t''$ which corrupts a strict subset of $S$ and thus $\apv_{\ps{P},\secprop,\rel}(t) = \Set{ S } \neq \verdict{}$.
  If $S$ is not minimal, there exists a trace $t''$ which corrupts $S' \subsetneq S$.
  The counterfactual trace $t'$ can then be instantiated with $t''$ and $S'$.
  If $S'$ is not minimal, this step can be repeated until a minimal set $S\sy{3}$ is obtained.
  As the cardinality of the sets decreases in each step, this approach is guaranteed to terminate.
\end{proof}

The apv can only be computed after the fact, that is, it requires full knowledge of the actual trace $t$.
The task of an accountability protocol is to always compute the apv without this information.
For generality, we assume that an accountability protocol comes with
a total function that extracts this information, the \emph{verdict function}:
\begin{equation}
  \label{eq:vf-dec}
  \vf(t) : \traces(\ps{P}) \to 2^{2^{\Parties}}\,.
\end{equation}

Accountability is now defined in terms of the apv and the verdict function.
If the apv coincides for all traces with the verdict function, the latter provides the protocol with accountability for a security property $\secprop$.
\begin{definition}[Accountability]
  \label{def:acc}
  A verdict function $\vfup$ provides a protocol $\ps{P}$ with accountability for a security property $\secprop$ with respect to a relation $\rel$, if
  \begin{equation}
    \Forall{ t \in \traces(\ps{P}) } \vf(t) = \apv_{\ps{P},\secprop,\rel}(t)\,. \tag{$\Acc_{\ps{P},\secprop,\rel}^{\vf}$}
  \end{equation}
\end{definition}

\begin{remark}[Counterfactual relation]
  \textcite{kuennemann2019automated} note that there is no consensus
  in the causality literature about how actual and counterfactual
  scenarios should relate.  They propose three approaches for relating
  actual and counterfactual traces: By control flow, by kind of
  violation, and the weakest relation with respect to
  \cref{eq:rel-corrupted}.
  Our focus will be on relating traces with the same kind of
  violations. As we will discuss in \cref{sec:count-rel}, our method
  may also be used to encode other relations. The axiomatic
  characterization in the next section, however, is independent of
  the choice of $\rel$.
\end{remark}

\section{Axiomatic characterization}
\label{sec:verif-acc}

Accountability (\cref{def:acc}) requires that the apv coincides with
a given verdict function for all traces of the protocol.
\begin{full}
Since the apv can only be computed after the fact and the number of
traces is most often infinite, this coincidence cannot be shown
directly.
However, the definition of the apv imposes multiple requirements on the verdicts returned by the verdict function.
\end{full}
In this \lcnamecref{sec:verif-acc},
we reformulate this requirement 
into five equivalent \emph{verification conditions}
that are sound and complete.
Soundness allows us to prove that a verdict function provides
a protocol with accountability by verifying that all conditions hold.
Completeness ensures that if a verdict function provides a protocol
with accountability, then all conditions hold.

This axiomatic characterization  bears resemblance to the verification
conditions  presented by \textcite{kuennemann2019automated}, but is
more general. It is valid for any counterfactual relation and any
verdict function. It is also simpler.%
\footnote{%
    More precisely, the present conditions (P) differ from the
    coarse-grained conditions (C)~\cite[Section~III]{kuennemann2019automated}
    and the fine-grained condition (F)~\cite[Section~IV]{kuennemann2019automated}
    as follows. 
    The completeness condition in (F) was found to be incompatible with the definition of the apv.
    Completeness in (P) is necessarily weaker and in line with the requirements of the other conditions (sufficiency, minimality, uniqueness).
    Verification is the same in all three.
    Sufficiency in (P) is slightly weaker than sufficiency in (C) and (F),
    as it allows for the witness trace to corrupt a subset of the
    blamed parties.
    Additionally, sufficiency in (P) requires no violation, but this
    requirement is
    superfluous, as it follows from verifiability.
    Sufficiency for composite verdicts in (F) follows from sufficiency in
    (P).
    Uniqueness in (P) and uniqueness for singletons in (F) are the same.
    Uniqueness in (C) is logically equivalent, but expressed differently.
    Minimality in (P) is weaker than in (C).  Minimality for composite
    verdicts in (F) can be considered equivalent, but for singleton
    verdicts it vanishes, because it follows from uniqueness in (F).
}\begin{full}\enlargethispage{2\baselineskip}\end{full}
These axioms will help us derive verification
conditions for unbounded sets of participants in the next section in
a systematic manner.

A verdict function $\vf$ providing accountability for $\varphi$
and $\rel$ is characterized by the following axioms.

\begin{conditions}
  \item[Verifiability (\cndlabel{\ensuremath{\Ver_{\secprop}}}{cnd:ver})]
    This follows directly from \cref{cor:apv}.
    \begin{equation*}
      \Forall{ t } \vf(t) = \verdict{} \iff t \holds \secprop
    \end{equation*}
    We require that whenever the verdict function returns an empty verdict, the security property holds.
  \item[Minimality (\cndlabel{\ensuremath{\Min_S}}{cnd:min})] 
    This follows directly from \cref{cor:min-apv}. 
    \begin{equation*}
      \Forall{ t } S \in \vf(t) \implies \NExists{ \qs{S'} } S' \in \vf(t) \land S' \subsetneq S
    \end{equation*}
    We require that the verdict does not contain a strict subset of one of its sets.
    Intuitively, this axiom ensures that we only blame the least number of parties which caused a violation.
  \item[Sufficiency (\cndlabel{\ensuremath{\Suff_S}}{cnd:suff})]
    This axiom is similar to \cref{eq:apv-req-2} and guarantees that each set of parties in a verdict is sufficient to cause a violation on their own.
    \begin{align*}
        \Forall{ t } S \in \vf(t) \implies \begin{aligned}[t]
          \Exists{ \qs{t'} } &\vf(t') = \verdict{ S } \\
                   {}\land{} &\cor(t') \subseteq S \land \rel(t,t')
          \end{aligned}
    \end{align*}
     For each set of parties in a verdict, there exists a related trace in which only a subset of these parties has been corrupted and for which the verdict function returns a singleton verdict only blaming these parties.
\end{conditions}

We could also define a slightly stronger sufficiency condition, where we would require equality between the set of corrupted parties and the set in the verdict, that is, $\cor(t') = S$.
Instead, we capture this requirement in its own condition---uniqueness.%
\footnote{
  The name goes back to the case distinction used in the verdict function of \textcite{kuennemann2019automated}, in which the condition ensures that only a unique, sufficient, and minimal verdict exists for each case.
}
With this approach, we get more precise information when
a condition does not hold.
\begin{conditions}
  \item[Uniqueness (\cndlabel{\ensuremath{\Uniq_S}}{cnd:uniq})]
  This condition guarantees that all parties in a verdict have been corrupted; or in other words, no honest parties are blamed for a security violation.
  \begin{equation*}
    \Forall{ t } S \in \vf(t) \implies S \subseteq \cor(t)
  \end{equation*}
\end{conditions}

The previous four conditions state the requirements that a group of parties in the verdict must satisfy.
The next condition ensures that indeed all groups of parties which satisfy these requirements are included in the verdict.

\begin{conditions}
  \item[Completeness (\cndlabel{\ensuremath{\Comp_S}}{cnd:comp})]
  \begin{align*}
    \Forall{ t } &\Bigg[ \begin{aligned}
          \Exists{ \qs{t'} } &\vf(t') = \verdict{S} \\
                   {}\land{} &\cor(t') \subseteq S \land \rel(t,t')
       \end{aligned} \Bigg] \conjlabel{eq:conj-comp-1}\\
       {}\land{} &\del[\big]( \NExists{ \qs{S'} } S' \in \vf(t) \land S' \subsetneq S ) \conjlabel{eq:conj-comp-2}\\
       {}\land{} &S \subseteq \cor(t) \conjlabel{eq:conj-comp-3}\\
       {}\land{} &t \holds \neg \secprop \conjlabel{eq:conj-comp-4}\\
       \phantom{\land} &\implies{} S \in \vf(t)
  \end{align*}
\end{conditions}

We write $\Suff$, $\Ver_{\secprop}$, $\Min$, $\Uniq$, and $\Comp$ if the respective condition holds for all $S$.
We denote the conjunction of these conditions by $\VC_{\secprop}$ or by $\VC$ if the security property can be inferred from context.

\begin{full}
\begin{example}
  \label{ex:db-verif-cond}
  Consider the protocol described in \cref{ex:db} with the following verdict function.
  \begin{equation}
    \label{eq:ex-db-verif-cond-vf}
    \vf(t) = \begin{dcases}
      \verdict{(\Party{M})}                & \hspace{-4pt}\text{if } \fact{Access} \del[\big]( \Party{M} )@i \in t \\
      \verdict{(\Party{E_1}, \Party{E_2})} & \hspace{-4pt}\text{if } \fact{Access} \del[\big]( \del< \Party{E_1}, \Party{E_2} > )@i \in t \\
      \mkern2mu\verdict{}                  & \hspace{-4pt}\text{otherwise}
     \end{dcases}
  \end{equation}

  To prove that this verdict function provides accountability for not leaking data, we have to verify that the verdict function is indeed total and that all verification conditions hold.
  The former follows directly from \cref{eq:ex-db-verif-cond-vf}.
  For the latter, one would show the following.
  \begin{labeling}{\nameref{cnd:suff}:}
    \item[\nameref{cnd:suff}:] Knowing the signing key of the parties in the verdict is sufficient to leak the data.
    \item[\nameref{cnd:ver}:] If no party accesses the data (the otherwise case in \cref{eq:ex-db-verif-cond-vf}), no data can be leaked.
    \item[\nameref{cnd:min}:] Without any signing key or only the signing key of a single employee, the data cannot be leaked.
    \item[\nameref{cnd:uniq}:] Accessing and leaking the data requires corrupting the respective parties.
    \item[\nameref{cnd:comp}:] Each set of parties satisfying the above conditions is included in the verdict. \qedhere
  \end{labeling}
\end{example}
\end{full}

\begin{theorem}
  \label{thm:axiom-eq}
  For any protocol $\ps{P}$, security property $\secprop$, and verdict function $\vfup$, 
  $\vfup$ provides $\ps{P}$ with accountability for $\secprop$
  iff
  $\VC$.
\end{theorem}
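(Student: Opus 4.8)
The plan is to prove the biconditional by establishing the two directions separately, unfolding \cref{def:acc} into the pointwise equality $\vf(t)=\apv_{\ps{P},\secprop,\rel}(t)$ and comparing it against the five conditions bundled in $\VC$. Throughout I would lean on the three structural facts about $\rel$—reflexivity, transitivity, and the corruption‑monotonicity \eqref{eq:rel-corrupted}—together with \cref{cor:apv}.

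For the forward direction ($\Acc_{\ps{P},\secprop,\rel}^{\vf}\Rightarrow\VC$) I would assume $\vf(t)=\apv(t)$ everywhere and read each condition off the definition of the apv. Here $\Ver_{\secprop}$ is immediate from \cref{cor:apv} and $\Min$ from the analogous \cref{cor:min-apv}. For $\Uniq$, if $S\in\apv(t)$ then \eqref{eq:apv-req-2} supplies a related $t'$ with $\cor(t')=S$, and \eqref{eq:rel-corrupted} forces $S\subseteq\cor(t)$. The condition needing real work is $\Suff$: from $S\in\apv(t)$ I obtain the witness $t'$ of \eqref{eq:apv-req-2} and must check $\apv(t')=\verdict{S}$. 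That $S\in\apv(t')$ holds by reflexivity (taking $t'$ itself in \eqref{eq:apv-req-2}) together with transitivity, which transports the minimality \eqref{eq:apv-req-3} of $S$ from $t$ to $t'$; that nothing else lies in $\apv(t')$ again uses \eqref{eq:rel-corrupted} and transitivity to push any competing set below $S$, contradicting minimality at $t$. Finally $\Comp$ is handled by showing its four hypotheses let one reconstruct the three apv requirements for $S$ at $t$, using $\Uniq$ to upgrade $\cor(t')\subseteq S$ to $\cor(t')=S$ and a minimization of the corrupted set (as in the proof of \cref{cor:apv}) to contradict \eqref{eq:conj-comp-2}.

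For the backward direction ($\VC\Rightarrow\Acc_{\ps{P},\secprop,\rel}^{\vf}$) I would fix a trace $t$ and prove the two inclusions $\vf(t)\subseteq\apv(t)$ and $\apv(t)\subseteq\vf(t)$. The inclusion $\vf(t)\subseteq\apv(t)$ is the crux: given $S\in\vf(t)$, requirement \eqref{eq:apv-req-1} follows from $\Ver_{\secprop}$, and \eqref{eq:apv-req-2} from $\Suff$ combined with $\Uniq$ (tightening $\cor(t')\subseteq S$ to equality) and $\Ver_{\secprop}$ (yielding $t'\holds\neg\secprop$). To obtain the minimality requirement \eqref{eq:apv-req-3}, I would assume a counterfactual $t''$ with $\cor(t'')\subsetneq S$, pick such a $t''$ with inclusion‑minimal $\cor(t'')$ (possible since $\cor(t)$, and hence $S$, is finite), and then use $\Ver_{\secprop}$, $\Suff$, $\Uniq$ and the minimality of $\cor(t'')$ to verify all four premises of $\Comp$ for the set $\cor(t'')$ at $t$; $\Comp$ then places $\cor(t'')\subsetneq S$ into $\vf(t)$, contradicting $\Min$. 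The reverse inclusion $\apv(t)\subseteq\vf(t)$ is comparatively short: for $S\in\apv(t)$ I verify the premises of $\Comp$, reusing the already‑proven inclusion $\vf(t')\subseteq\apv(t')$ at the witness trace $t'$ (where $\apv(t')=\verdict{S}$ is computed exactly as in the forward direction) to conclude $\vf(t')=\verdict{S}$ for \eqref{eq:conj-comp-1}, and using $\vf(t)\subseteq\apv(t)$ together with \eqref{eq:apv-req-3} to discharge \eqref{eq:conj-comp-2}.

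I expect the main obstacle to be the minimality requirement \eqref{eq:apv-req-3} in the $\vf(t)\subseteq\apv(t)$ inclusion. Unlike the other requirements it is a nonexistence statement that no single condition delivers; it must be forced by a descent on the corrupted set together with careful transitivity‑and‑\eqref{eq:rel-corrupted} bookkeeping that feeds a strictly smaller, sufficient‑and‑unique verdict into $\Comp$, which then collides with $\Min$. Getting the order of the conditions right—so that the set handed to $\Comp$ is genuinely minimal and its witness trace genuinely exists—is the delicate part; everything else reduces to routine manipulation of reflexivity, transitivity, and corruption‑monotonicity of $\rel$.
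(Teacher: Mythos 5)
Your proposal is correct, and its top-level decomposition matches the paper's: the paper likewise splits the biconditional into two theorems (\cref{thm:snd-set} and \cref{thm:cmpl-set}), derives \hyperref[cnd:ver]{$\Ver_{\secprop}$}, \hyperref[cnd:min]{$\Min$}, \hyperref[cnd:uniq]{$\Uniq$} from \cref{cor:apv}, \cref{cor:min-apv}, and \cref{eq:rel-corrupted} exactly as you do, and establishes $\vf(t') = \verdict{S}$ at the witness trace by the same reflexivity-plus-transitivity argument (the paper phrases it as a case split on $\abs{\vf(t')}$, you phrase it as computing $\apv(t')$ directly; these are the same modulo $\Acc$). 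Where you genuinely diverge is in the soundness direction, on the minimality requirement \cref{eq:apv-req-3} for $\vf(t) \subseteq \apv(t)$: the paper proves $\apv(t) \subseteq \vf(t)$ \emph{first}, then handles minimality via the helping lemma \cref{lem:rel-apv} (showing the inclusion-minimal witness $t'$ satisfies $S' \in \apv(t')$, transporting a subset $S''$ into $\apv(t)$, and feeding it through the already-proven inclusion into $\vf(t)$ to contradict \hyperref[cnd:min]{$\Min$}), whereas you reverse the dependency and dispense with \cref{lem:rel-apv} entirely: you verify the four premises of \hyperref[cnd:comp]{$\Comp$} directly at $t$ for the minimal corrupted set $S' = \cor(t'')$ (using \hyperref[cnd:ver]{$\Ver_{\secprop}$} to get a nonempty verdict at $t''$, \hyperref[cnd:suff]{$\Suff$} and \hyperref[cnd:uniq]{$\Uniq$} to produce a singleton-verdict trace, and minimality of $S'$ to pin its corrupted set down to exactly $S'$), so that \hyperref[cnd:comp]{$\Comp$} itself injects $S'$ into $\vf(t)$ and collides with \hyperref[cnd:min]{$\Min$}. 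Both routes work; the paper's buys a reusable structural fact about how the apv behaves under $\rel$ (which it also uses to explain the counterintuitive weakening of minimality in counterfactuals), while yours is more self-contained, makes transparent that \hyperref[cnd:comp]{$\Comp$} is precisely the axiom forcing minimality, and—unlike the paper, which silently takes a minimal $t''$—explicitly notes the finiteness of $\cor(t)$ that justifies choosing an inclusion-minimal corrupted set. One cosmetic elision: in your $\apv(t) \subseteq \vf(t)$ step, the inclusion $\vf(t') \subseteq \apv(t') = \verdict{S}$ alone gives only $\vf(t') \subseteq \verdict{S}$; you need \hyperref[cnd:ver]{$\Ver_{\secprop}$} and $t' \holds \neg\secprop$ to rule out the empty verdict, but this is routine and the ingredient is in your stated toolkit.
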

\begin{proof}
    In
    \cref{sec:verif-acc-proof},
    we show soundness and
    completeness in two separate theorems.
\end{proof}

\section{Verdict Functions for\texorpdfstring{\\}{ }Unbounded Sets of Participants}
\label{sec:verdict-functions}

The structure of verdict functions proposed by
\textcite{kuennemann2019automated}
\begin{full}
and exemplified in
\cref{ex:db-verif-cond}
\end{full}
considers an explicit mapping from observations, i.e.,
sets of traces, to
 verdicts.
All parties that can occur in a verdict are thus fixed a priori.
This prohibits the analysis of several protocol instances in parallel
and is inadequate for protocols such as TLS, where a single
responder may react to incoming requests from many clients.

If we allow for multiple protocol sessions and consider the set of
parties that participate to be unbounded, then, for some protocols,
we cannot bound the number of possible verdicts or their size.
Therefore, we must define the verdict function indirectly.
To this end, we lift the \emph{accountability tests} of \textcite{bruni2017automated},
which determine whether a given party is to blame,
to \emph{case tests}, which can contain variables instead of concrete parties.
Case tests are trace properties with free variables. Each free
variable is instantiated with a party that should be blamed for
a violation. A case test ought to have at least one free variable
and there should be at least one trace where it applies.

\begin{definition}[Case test]
    A case test $\ct$ is a trace property which satisfies
    \begin{enumerate}[label=(\alph*)]
      \item $\abs{ \fv(\ct) } \geqslant 1$ and
      \item $\Exists{ t, \inst } t \holds \ct \inst$\,.
    \end{enumerate}
\end{definition}
We say a case test $\ct$ \emph{matches} a trace $t$ if there exists an instantiation $\inst$ such that $t \holds \ct \inst$.
\begin{full}
We say a case test $\ct$ matches if the trace can be inferred from context.
\end{full}
The verdict function is now given as the union of all matches.
\begin{definition}[Verdict function]
  \label{def:verdict-function}
  Let $\CT$ be a set of case tests.
  The verdict function induced by $\CT$ is given by
  \begin{equation}
    \vf_{\CT}(t) \defas \Union{ \ct \in \CT } \DisplaySet[\Big]{ \fv(\ct) \inst \given \Exists{ \inst } t \holds \ct \inst }\,,
  \end{equation}
  where the union is modulo the equational theory $E$.
\end{definition}
In the following, we assume a fixed set of user-defined case tests which are denoted by $\CT = \ct_1, \dots, \ct_n$ and write $\vf(t)$.

\begin{example}
  \label{ex:db-case-tests}
  Consider the protocol described in \cref{ex:db} in the multi-session setting, that is, there may be multiple managers, employees, and data leaks.
  There are two possibilities, how a data leak can arise.
  Either by a manager or by two colluding employees.
  We want to hold all groups of parties accountable which are responsible for a leak.
  In contrast to the single-session setting, the protocol must now provide evidence which group of parties leaked the data.
  Only knowing the parties which accessed the data is not sufficient to identify the parties responsible for a violation.
  In the case of a single violation, we would suspect all groups of parties that accessed the data.
  
  The security property indicates that neither a manager nor employees leaked the data.
  \begin{align*}
    \secprop \defas \NExists{ m, e_i, e_j, \var{data}, i } &\fact{LeakManager}(\var{m}, \var{data})@i \\
                                                  {}\lor{} &\fact{LeakEmployees}(\var{e_i}, \var{e_j}, \var{data})@i
  \end{align*}

  We define the following two case tests.
  \begin{align*}
    \ct_1 &\defas \Exists{ \var{data}, i } \fact{LeakManager}(\var{m}, \var{data})@i \\
    \ct_2 &\defas \Exists{ \var{data}, i } \fact{LeakEmployees}(\var{e_i}, \var{e_j}, \var{data})@i
  \end{align*}

  We note that the identities of the manager ($m$) and employees ($e_i,e_j$) are free in $\ct_1$ and $\ct_2$ respectively.
  Given a trace $t$ where two managers $\Party{M_1}$, $\Party{M_2}$ and each pair of employees $\Party{E_1}$, $\Party{E_2}$, $\Party{E_3}$ caused a violation, the following instantiations exist for $\ct_1$ and $\ct_2$.
  \begin{equation*}
    \begin{aligned}[c]
      \inst_1^{(1)} &= \subst{ m \mapsto \Party{M_1} } \\
      \inst_1^{(2)} &= \subst{ m \mapsto \Party{M_2} }
    \end{aligned}
    \qquad
    \begin{aligned}[c]
      \inst_2^{(1)} &= \subst{ e_i \mapsto \Party{E_1}, e_j \mapsto \Party{E_2} } \\
      \inst_2^{(2)} &= \subst{ e_i \mapsto \Party{E_2}, e_j \mapsto \Party{E_3} } \\
      \inst_2^{(3)} &= \subst{ e_i \mapsto \Party{E_1}, e_j \mapsto \Party{E_3} }
    \end{aligned}
  \end{equation*}

  We obtain all singleton verdicts of the trace by applying the instantiations to the free variables of the case tests.
  Hence, the complete verdict is
  \begin{equation*}
    \vf(t) = \verdict{(\Party{M_1}), (\Party{M_2}), (\Party{E_1}, \Party{E_2}), (\Party{E_2}, \Party{E_3}), (\Party{E_1}, \Party{E_3})}\,. \qedhere
  \end{equation*}
\end{example}

This example also illustrates that case tests are well suited to
distinguish different kinds of a violation, which are identified by the
test and its instantiation. We can formalize this notion by assigning each trace the set of case tests with their
corresponding satisfying instantiations.
\begin{equation}
  \label{eq:ctr}
  \ctr(t) \defas \Union{ i \in [n] } \Set[\big]{ (\ct_i, \inst) \given \Exists{ \inst } t \holds \ct_i \inst }
\end{equation}
We call a trace $t$ \emph{single-matched} if $\abs{ \ctr(t) } = 1 $
and \emph{multi-matched} if $\abs{ \ctr(t) } > 1 $.
\begin{example}
  \label{ex:db-ctr}
  In the situation of \cref{ex:db-case-tests}, we obtain
  \begin{equation*}
    \ctr(t) = \Set[\big]{ (\ct_1, \inst_1^{(1)}), (\ct_1, \inst_1^{(2)}), (\ct_2, \inst_2^{(1)}), (\ct_2, \inst_2^{(2)}), (\ct_2, \inst_2^{(3)}) }\,. \qedhere
  \end{equation*}
\end{example}
We use the following corollary to justify switching between the verdict-based notation of \cref{sec:verif-acc} and the notation based on case tests of this \lcnamecref{sec:verdict-functions}.
\begin{corollary}
  \label{cor:verdict-function}
  \cref{def:verdict-function} implies that
  for all traces $t$
  \begin{equation*}
    S \in \vf(t) \iff \Exists{ i, \inst } t \holds \ct_i \inst \land \fv(\ct_i) \inst = S\,.
  \end{equation*}
\end{corollary}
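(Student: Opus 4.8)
The plan is to prove the \lcnamecref{cor:verdict-function} by unfolding \cref{def:verdict-function}: the statement is an equivalence that drops straight out of the definitions of set union and set-builder notation, so I would establish both directions simultaneously through a single chain of equivalences rather than arguing $\Rightarrow$ and $\Leftarrow$ separately. The only point demanding any care is that every membership test and every equality is taken modulo the equational theory $E$, as fixed by the conventions for instantiations (``all operations involving instantiations are considered modulo $E$'') and reiterated in \cref{def:verdict-function} (``the union is modulo the equational theory $E$'').

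First I would rewrite the index set of the union: since we work with the fixed finite family $\CT = \ct_1, \dots, \ct_n$, the union $\Union{\ct \in \CT}$ is literally the union $\Union{i \in [n]}$ over the $\ct_i$. Then, by the defining property of union, $S \in \vf(t)$ holds iff there exists some $i \in [n]$ with $S \in \DisplaySet[\big]{ \fv(\ct_i) \inst \given \Exists{ \inst } t \holds \ct_i \inst }$. Next I would unfold the set-builder notation of this $i$-th component set: $S$ lies in it iff there is an instantiation $\inst$ with $t \holds \ct_i \inst$ and $\fv(\ct_i)\inst = S$ (this last equality read modulo $E$). Finally, consolidating the two existential quantifiers over $i$ and $\inst$ yields exactly $\Exists{ i, \inst } t \holds \ct_i \inst \land \fv(\ct_i)\inst = S$, the right-hand side of the claim. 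Every link in this chain is an ``iff,'' so both implications are discharged at once.

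I expect no genuine obstacle, since the content of the corollary is purely definitional bookkeeping; as the surrounding text makes explicit, its sole purpose is to license freely switching between the verdict-based notation of \cref{sec:verif-acc} and the case-test notation of this \lcnamecref{sec:verdict-functions}. The one thing I would be careful to state explicitly is the modulo-$E$ reading on both sides: the membership $S \in \vf(t)$ is membership modulo $E$ in a union formed modulo $E$, and correspondingly the equation $\fv(\ct_i)\inst = S$ on the right is to be understood modulo $E$. With this convention made uniform across the equivalence, the unfolding goes through verbatim and no further hypotheses about the case tests or the protocol are needed.
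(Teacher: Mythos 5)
Your proof is correct and matches the paper's treatment: the paper states \cref{cor:verdict-function} without any proof, regarding it as an immediate definitional unfolding of \cref{def:verdict-function}, which is precisely the chain of equivalences (finite union over $\CT$, set-builder membership, consolidation of the existentials, all modulo $E$) that you spell out. Your explicit attention to the modulo-$E$ reading on both sides is consistent with the paper's conventions and adds nothing beyond what the definition already fixes.
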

We see that $\ctr(t)$ contains all the information to compute the
verdict for the trace $t$.  \cref{def:verdict-function} implies
\begin{equation}
  \label{eq:ctr-vf}
  \ctr(t') \subseteq \ctr(t) \implies \vf(t') \subseteq \vf(t)\,.
\end{equation}
However, $\ctr$ provides a more precise picture, since the same set in the
verdict may be produced by multiple case tests and instantiations.

We can now instantiate the verification conditions from \cref{sec:verif-acc}
with case tests. If $\CT$ is finite, we obtain a finite set of
conditions, all of which (except sufficiency) are predicates on traces,
but not yet trace formulas according to \cref{def:trace-formula}.
We first apply \cref{cor:verdict-function} to each occurrence of $S \in \vf(t)$ and $\vf(t) = \verdict{ S }$ in the conditions.
Since the original conditions are parameterized by $S$, the resulting conditions are parameterized by a case test $\ct_i$ and an instantiation $\inst$.
We reparameterize these conditions with a case test $\ct_i$ by introducing quantifiers for the instantiations.
As the set of case tests is finite, we also replace quantification over case tests by conjunctions and disjunctions.
For an instantiation $\inst$, we have
\begin{align}
  \begin{split}
    \Exists{ i } t \holds \ct_i \inst \iff \Or*{ i \in [n] } t \holds \ct_i \inst\,, \label{eq:ct-ex-quan}
  \end{split}\\
  \shortintertext{and}
  \begin{split}
    \Forall{ i } t \holds \ct_i \inst \iff \And*{ i \in [n] } t \holds \ct_i \inst\,. \label{eq:ct-all-quan}
  \end{split}
\end{align}

Moreover, we split the equivalence in the verifiability condition \nameref{cnd:ver}.
This step is not a technical requirement, but we may gain more insight in case a condition does not hold.
Finally, we obtain the following intermediate representation.
\begin{conditions}
  \item[Sufficiency (\cndlabel{\ensuremath{\Suff_{\ct_i}^{\mathrm{in}}}}{cnd:suffin})]
    Assume a case test matches a trace $t$.
    Then there exists a related trace $t'$ in which only the instantiated parties are corrupted.
    Moreover, if multiple case tests match, all sets of instantiated parties have to be the same.
    This ensures that the verdict is a singleton.
    \begin{align*}
        \Forall{ t, \inst } &t \holds \ct_i \inst \implies \\
        \Exists{ \qs{t'} } &\del[\Big][ \And[r]{ j \in [n] } \Forall{ \qs{\inst'} } t' \holds \ct_j \inst' \implies \fv(\ct_i) \inst = \fv(\ct_j) \inst' ] \\
        &\land \cor(t') \subseteq \fv(\ct_i) \inst \\
        &\land \rel(t,t')
    \end{align*}
  \item[Verifiability Empty (\cndlabel{\ensuremath{\VerE_{\secprop}^{\mathrm{in}}}}{cnd:verin-e})]
    If there is no case test that matches, then the security property holds.
    This ensures that the security property can only be violated in the ways described by the case tests.
    \begin{align*}
      \Forall{ t } \del[\Big][ \And[r]{ i \in [n] } \NExists{ \inst } t \holds \ct_i \inst ] \implies t \holds \secprop
    \end{align*}
  \item[Verifiability Nonempty (\cndlabel{\ensuremath{\VerNE_{\secprop,\ct_i}^{\mathrm{in}}}}{cnd:verin-ne})]
    The condition requires that if a case test matches, then the security property is violated.
    This ensures that each case test describes a way to violate the security property.
    \begin{align*}
      \Forall{ t, \inst } t \holds \ct_i \inst \implies t \holds \neg \secprop
    \end{align*}
  \item[Minimality (\cndlabel{\ensuremath{\Min_{\ct_i}^{\mathrm{in}}}}{cnd:minin})]
    The condition ensures that, when a case test matches, then no other case test matches with a strict subset of the instantiated parties.
    \begin{align*}
      \Forall{ t, \inst } t \holds \ct_i \inst \implies \And*{j \in [n] } \NExists{ \qs{\inst'} } t \holds \ct_j \inst' \land \fv(\ct_j) \inst' \subsetneq \fv(\ct_i) \inst
    \end{align*}
  \item[Uniqueness (\cndlabel{\ensuremath{\Uniq_{\ct_i}^{\mathrm{in}}}}{cnd:uniqin})]
    The condition requires that the instantiated parties of a case test have been corrupted.
    This ensures that we do not blame honest parties for a security violation.
    \begin{align*}
      \Forall{ t, \inst } t \holds \ct_i \inst \implies \fv(\ct_i) \inst \subseteq \cor(t)
    \end{align*}
\end{conditions}

\begin{remark}
  The completeness condition \nameref{cnd:comp} does not need to be encoded as a trace property.
  We show in \cref{lem:comp} that it follows from \nameref{cnd:verin-ne}, \cref{eq:ctr-vf}, and \nameref{cnd:rel-el}, a requirement on the counterfactual relation we introduce in the next \lcnamecref{sec:count-rel}.
\end{remark}

\section{Counterfactual Relation}
\label{sec:count-rel}

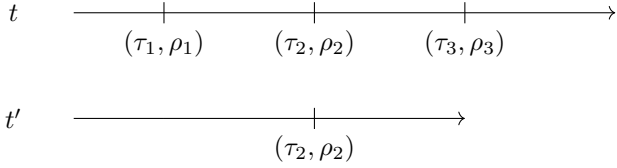
\begin{figure}
\centering
\caption{Example: We consider $t'$ a valid counterfactual for $t$.}
\label{fig:ex-counterfactual}
\begin{tikzpicture}[scale=2
]

\draw (0,0) node {$t$};
\draw (0.4,0) -- (1,0);
\draw (1,0) -- (2,0);
\draw (2,0) -- (3,0);
\draw[->] (3,0) -- (4,0);

\foreach \x in {1,2,3}
   \draw (\x cm,2pt) -- (\x cm,-2pt);

   \draw (1,0) node[below=3pt] {$ (\tau_1,\rho_1) $} node[above=3pt] {};
\draw (2,0) node[below=3pt] {$ (\tau_2,\rho_2) $} node[above=3pt] {};
\draw (3,0) node[below=3pt] {$ (\tau_3,\rho_3) $} node[above=3pt] {};

\draw (0,-2em) node {$t'$};
\draw (0.4,-2em) -- (1,-2em);
\draw (1,-2em) -- (2,-2em);
\draw[->] (2,-2em) -- (3,-2em);

\foreach \x in {2}
   \draw (\x cm,-2em+2pt) -- (\x cm,-2em-2pt);

\draw (2,-2em) node[below=3pt] {$ (\tau_2,\rho_2) $} node[above=3pt] {};

\end{tikzpicture}

\end{figure} 

As we consider an unbounded number of sessions, we can, in general,
expect to have multiple causally independent security violations in the
same trace.
Consider $t$ in \cref{fig:ex-counterfactual}, where three case
tests $\tau_1,\ldots,\tau_3$ match. By
\hyperref[cnd:vertp-ne]{$\VerNE_{\secprop}$},
each match implies a security violation by itself.
For our counterfactual analysis, we want to consider traces that
contain only one of these matches causally relevant.
We require the counterfactual relation to be compatible with this
intuition, which we formalize as follows.
Note that neither condition restricts the relation for non-violating
traces.
Indeed, the relation is irrelevant for the apv of those (see
\cref{def:apv}).

 \begin{conditions}
   \item[Relation Introduction (\cndlabel{\ensuremath{\RelI}}{cnd:rel-in})]
       For all $i \in [n]$,  traces $t,t'$ and instantiations $\inst$
       \begin{align*}
         \phantom{{}\land{}} &t \holds \ct_i \inst \\
         {}\land{} &\ctr(t') = \Set[\big]{ (\ct_i, \inst) } \\
         {}\land{} &\cor(t') = \fv(\ct_i) \inst \subseteq \cor(t) \\
         &\implies \rel(t,t')\,.
       \end{align*}
 
     If there is at least one match in some trace $t$ and we can
     identify a trace $t'$ with the exact same match,
     corrupting no more parties than $t$ and exactly those indicated
     by the match, then
     we consider $t'$ a relevant counterfactual.

  \item[Relation Elimination (\cndlabel{\ensuremath{\RelE}}{cnd:rel-el})]
    For all traces $t$, $t'$
    \begin{equation*}
      \rel(t,t') 
      \land t \holds \neg\secprop
      \land t' \holds \neg\secprop
      \implies \ctr(t') \subseteq \ctr(t) 
    \end{equation*}

    Intuitively, if $t'$ is a relevant counterfactual for $t$, then it
    cannot have \emph{additional}
    matches.
\end{conditions}

\textcite{kuennemann2019automated} discuss the lack of a consensus in
the causality literature about how actual and counterfactual scenarios
should relate.
They consider three frequently used relations:
$\rel_k$, 
where two traces relate if
they have a \enquote{similar kind} of violation;
$\rel_c$,
where they need to
have the same control flow;
 and
$\rel_w$, 
which is the
weakest possible relation, where $\rel_w(t,t') \iff \cor(t') \subseteq
    \cor(t)$.
Neither gives an indication of how to deal with several parallel infractions in
the same trace, but they give us a framework to discuss the present
proposal.

Considering the kind of violation, $\rel_k$ provides the most
promising interpretation of case tests. 
This notion originates from criminal law and is used to solve 
causal problems with the classical \enquote{what-if} by considering the event
in question in greater detail, e.g.,
by distinguishing death from shooting from death from poisoning
(\cite[p.~188]{dressler1995understanding}; see
also~\cite[p.~46]{Mackie1980-MACTCO-17}). 
As such, this notion is informal and depends on intuition. 
Using case tests, we could 
formalize $\rel_k$ using $\rel_\ctr$ with
    $\rel_{\ctr}(t,t')$ iff
\[
        \emptyset \neq \ctr(t') \subseteq \ctr(t)
        \land \cor(t') \subseteq \cor(t)\,.
\] 
The counterfactual has a subset of the matches of the actual, but at
least one.
$\rel_{\ctr}$ is consistent with both
\hyperref[cnd:rel-el]{$\RelE$}
and
\hyperref[cnd:rel-in]{$\RelI$}.

While $\rel_k$ is only informally defined, it is usually straightforward
to apply it to a given protocol and a set of case tests.
Case by case, we can thus
confirm
that $\rel_{\ctr}$ is a formalization of $\rel_k$.
Each test and possible instance should mark a different \enquote{kind} of
violation.

By contrast,
a direct adoption of the control flow aware relation
$\rel_c$~\parencite{datta2015program,kuennemann2018accountability,kuntz2011probabilistic}
would not allow for holding all involved parties responsible, as the
control flow of $t'$ (\cref{fig:ex-counterfactual}) is clearly different from $t$. 
If we relax the relation to accept
a counterfactual trace if its control flow is a prefix of the
actual control flow, we would  only collect the parties involved in the
\emph{first} violation, which is not our goal.
We should thus consider only the control flow per session,
and allow the order of sessions to be changed.\footnote{%
As our execution model cannot capture the control flow of deviating
parties, this only concerns the trusted parties.
}
This could be encoded by
splitting the case tests, so that each test applies only to a single
per-session control flow.

The weakest possible counterfactual relation $\rel_w$
can not guarantee
\hyperref[cnd:rel-el]{$\RelE$}
as a related trace could match
a completely different case test.

\begin{full}
\begin{example}
    In general, it is possible that two traces have the same verdict,
    but are not related according to
\hyperref[cnd:rel-el]{$\RelE$}.
  Consider two traces $t$, $t'$ such that $\ctr(t') = \Set[\big]{ (\ct_i, \inst_i) }$ and $\ctr(t') = \Set[\big]{ (\ct_j, \inst_j) }$ with $\fv(\ct_i) \inst_i = \fv(\ct_j) \inst_j$.
  Then $\vf(t') \subseteq \vf(t)$ but $\ctr(t') \nsubseteq \ctr(t)$.
  In both traces, the same set of parties causes a violation, but in $t'$ in the way described by $\ct_i$ and in $t$ in the way described by $\ct_j$.
\end{example}
In the soundness proof, we need \nameref{cnd:rel-in} to introduce the relation which occurs in \nameref{cnd:suff} but not in the trace properties.
In the completeness proof, we need \nameref{cnd:rel-el} to lift the verdict-based verification conditions to the ones based on case tests.
In this way, the relation has to provide the expressiveness missing in the verdict-based conditions.
\end{full}

\section{ Verification Conditions as Trace Properties }
\label{sec:verification-conditions-trace-prop}

In this section, we bring
the axioms from \cref{sec:verif-acc},
which we instantiated with
case tests in the last section,
into a form that can be verified in an automated way.
The most challenging among these is sufficiency.

\subsection{Subset relations}
\label{sec:towards-trace-props}

To encode minimality, sufficiency, and uniqueness, we need to express
the subset operator and the function $\cor$ in terms of protocol actions.
Let $\vec{a} = ( a_1, \dots, a_m )$ and $\vec{b} = ( b_1, \dots, b_n )$. 
The strict subset operator in \nameref{cnd:minin} can be expressed by
\begin{equation}
  \label{eq:strict-sub-tp}
  \TraceProp[\big]{ \vec{a} \subsetneq \vec{b} } \defas \del[\Big][ \And{ i \in [m] } \Or[r]{ j \in [n] } a_i = b_j ] \land \del[\Big][  \Or{ j \in [n] } \And[r]{ i \in [m] } b_j \neq a_i ]\,.
\end{equation}

The corruption of a party $\Party{A}$ is recorded as an action $\fact{Corrupted}(\Party{A})@i$ in the trace.
Hence, the subsets in \nameref{cnd:suffin} and \nameref{cnd:uniqin} can be expressed for a trace $t$ by
\begin{align}
\cor(t) \subseteq \vec{a} & \iff t \holds \TraceProp[\big]{ \fact{Corrupted} \subseteq \vec{a} } \label{eq:cor-sub-tp}\\
    \vec{a} \subseteq \cor(t) & \iff t \holds \TraceProp[\big]{ \vec{a} \subseteq \fact{Corrupted} } \label{eq:cor-sup-tp}
\end{align}
where
\begin{align*}
    \TraceProp[\big]{ \fact{Corrupted} \subseteq \vec{a} } &\defas \Forall{ \var{x}, k } \fact{Corrupted}(\var{x})@k \impliestp \Or*{ i \in [m] } \var{x} = \var{a_i} \\
    \TraceProp[\big]{ \vec{a} \subseteq \fact{Corrupted} } &\defas \And*{ i \in [m] } \Exists{ k } \fact{Corrupted}(\var{a_i})@k\,.
\end{align*}

\subsection{Sufficiency as a trace property}

These encodings allow us to express all conditions as trace properties, except one: \nameref{cnd:suffin}.
It has two particularities.
First, it is of the form $\Forall{ t } \Exists{  \qs{t'} } \gamma(t, t')$, which classifies it as a
hyperproperty~\parencite{clarkson2010hyperproperties}.
Since hyperproperties are in general more expressive than trace properties,
they cannot be directly converted to the latter.
Second, it is the only condition that contains the counterfactual relation $\rel$.

To derive a trace property, we need to get rid of the outermost
universal quantifier and abstract the relation $\rel$.
To avoid the quantifier, we will focus on single-matched
traces, i.e., traces with exactly one violation and introduce three
additional conditions. They ensure
that there exists a single-matched trace
\begin{enumerate*}[label=(\alph*)]
\item for any case test,
\item for any instance thereof, and
\item that matching assignments are always injective.
\end{enumerate*}
These properties can be considered well-formedness
conditions on the case tests and are automatically verified.
They define a class of protocols and case tests for which our trace
properties are sound and complete.\footnote{%
Alternatively, they can be understood as part of the verification
conditions. In this case, we offer two sets of 
conditions, one that is sound and one that is complete.
}
To abstract the relation $\rel$, we make use of the assumption
introduced in \cref{sec:count-rel}, which is not automatically
verified. 

\begin{table*}[th]
    \newlength{\logcellwidth}
    \setlength{\logcellwidth}{4cm}

    \newcommand{\vertcent}[1]{\begin{tabular}{@{}c@{}}#1\end{tabular}}

    \centering
    \caption{Verification conditions}
    \label{tab:ver-cond}
    \small
    \begin{tabular}{llll}
        \toprule
        & name & definition & logical relation \\
        \midrule
        \multirow{3}{*}[-2ex]{\vertcent{\rotatebox[origin=c]{90}{sufficiency (tr. prop.)}}}
            & \cndlabel{\ensuremath{\SinM_{\ct_i}}}{cnd:single} & $\begin{aligned}
                  &\ps{P} \holds^{\exists}
                  \Exists{ \vec{v} } \ct_i\subst{\vec{v}} &\land \del[\Big][ \Forall{ \vec{w} } \ct_i \subst{\vec{w}} \impliestp \vec{w} = \vec{v} ] 
                  \land \del[\Big][ \And[r]{ j \in [n] \setminus \Set{i} } \NExists{ \vec{x} } \ct_j \subst{\vec{x}} ]
                  \end{aligned}$
                  & \multirow{3}{*}[-5ex]{\parbox[m]{\logcellwidth}{
                        \centering
                        $\hyperref[cnd:sufftp]{\Suff^{\tp}} \land \hyperref[cnd:vertp-ne]{\VerNE_{\secprop}^{\tp}} \land \hyperref[cnd:uniqtp]{\Uniq^{\tp}} \land \hyperref[cnd:ins-inj]{\InsI} \land \hyperref[cnd:rep-prop]{\RepP} \implies \hyperref[cnd:suff]{\Suff}$ \\
                        (\Cref{lem:suff-snd}) \\ \medskip
                        $\hyperref[cnd:suff]{\Suff} \land \hyperref[cnd:single]{\SinM} \implies \hyperref[cnd:sufftp]{\Suff^{\tp}}$ \\
                        (\Cref{lem:suff-cmpl})}} \\
            & \cndlabel{\ensuremath{\InsI_{\ct_i}}}{cnd:ins-inj} & $\begin{aligned}
                &\ps{P} \holds^{\forall}
                \Forall{ \vec{v} } \ct_i \subst{\vec{v}} \impliestp \And{ i \in \idx(\vec{v}) } \And[r]{ \substack{j \in \idx(\vec{v}) \\ j \neq i } } v_i \neq v_j
                \end{aligned}$ \\
            & \cndlabel{\ensuremath{\Suff_{\ct_i}^{\tp}}}{cnd:sufftp} & $\begin{aligned}
                  \ps{P} \holds^{\exists}
                  \Exists{ \vec{v} } \ct_i\subst{\vec{v}} &\land \del[\Big][ \Forall{ \vec{w} } \ct_i \subst{\vec{w}} \impliestp \vec{w} = \vec{v} ] 
                  \land \del[\Big][ \And[r]{ j \in [n] \setminus \Set{i} } \NExists{ \vec{x} } \ct_j \subst{\vec{x}} ] \\
                  &\land \Forall{ \var{a}, k } \fact{Corrupted}(\var{a})@k \impliestp \Or*{ \ell \in \idx(\vec{v}) } \var{a} = \vec{v_{\ell}}
                  \end{aligned}$ \\ \midrule
        \multirow{4}{*}[1ex]{\vertcent{\rotatebox[origin=c]{90}{other conditions (tr. prop.)}}}
            & \cndlabel{\ensuremath{\VerE_{\secprop}^{\tp}}}{cnd:vertp-e} & $\begin{aligned}
                  \ps{P} \holds^{\forall} 
                  \del[\Big][ \And[r]{ i \in [n] } \NExists{ \vec{v} } \ct_i \subst{\vec{v}} ] \impliestp \secprop
                  \end{aligned}$
                  & \multirow{2}{*}[-2ex]{\parbox[m]{\logcellwidth}{
                        \centering
                        $\hyperref[cnd:ver]{\Ver_{\secprop}} \iff \hyperref[cnd:vertp-e]{\VerE_{\secprop}^{\tp}} \land \hyperref[cnd:vertp-ne]{\VerNE_{\secprop}^{\tp}}$ \\
                        (\Cref{lem:ver-equiv})}} \\[3ex]
            & \cndlabel{\ensuremath{\VerNE_{\secprop,\ct_i}^{\tp}}}{cnd:vertp-ne} & $\begin{aligned}
                  \ps{P} \holds^{\forall} 
                  \Forall{ \vec{v} } \ct_i \subst{\vec{v}} \impliestp \neg \secprop
                  \end{aligned}$ & \\[3ex]
            & \cndlabel{\ensuremath{\Min_{\ct_i}^{\tp}}}{cnd:mintp} & $\begin{aligned}
                  \ps{P} \holds^{\forall}
                  \Forall{ \vec{v} } \ct_i \subst{\vec{v}} \impliestp \And*{ j \in [n] } \NExists{ \vec{w} } \ct_j \subst{\vec{w}} \land \TraceProp[\big]{ \vec{w} \subsetneq \vec{v} }
                  \end{aligned}$
                  & \parbox[m]{\logcellwidth}{
                        \centering
                        $\hyperref[cnd:min]{\Min} \iff \hyperref[cnd:mintp]{\Min^{\tp}}$ \\
                        (\Cref{lem:min-equiv})} \\[3ex]
            & \cndlabel{\ensuremath{\Uniq_{\ct_i}^{\tp}}}{cnd:uniqtp} & $\begin{aligned}
                  \ps{P} \holds^{\forall}
                  \Forall{ \vec{v} } \ct_i \subst{\vec{v}} \impliestp \And*{ \ell \in \idx(\vec{v}) } \Exists{ k } \fact{Corrupted}(\vec{v_{\ell}})@k
                  \end{aligned}$ 
                  & \parbox[m]{\logcellwidth}{
                       \centering
                       $\hyperref[cnd:uniq]{\Uniq} \iff \hyperref[cnd:uniqtp]{\Uniq^{\tp}}$ \\
                       (\Cref{lem:uniq-equiv})} \\ \midrule
        \addlinespace[1ex]
        \multirow{2}{*}{\vertcent{\rotatebox[origin=c]{90}{syntactic}}}
            & \cndlabel{\ensuremath{\RepP_{\ct_i}}}{cnd:rep-prop} & $\begin{aligned}
                  &\Forall{ t, t', \inst, \inst' } t \holds \ct_i \inst \land \ctr(t') = \Set[\big]{ (\ct_i, \inst') } \\
                  & \implies \Exists{ \qs{t''} } \ctr(t'') = \Set[\big]{ (\ct_i, \inst) } \land \cor(t'') = \cor(t')(\inst \circ {\inst'}^{-1})
                  \end{aligned}$
                  & \parbox[m]{\logcellwidth}{
                       \centering
                       $\hyperref[cnd:br]{\BR} \implies \hyperref[cnd:rep-prop]{\RepP}$} \\[4ex]
            & \cndlabel{\ensuremath{\BR}}{cnd:br} & $
                  \Forall { t, \sub \colon \Parties \leftrightarrow \Parties} t \in \traces(\ps{P}) \implies t \sub \in \traces(\ps{P})$
                  & \parbox[m]{\logcellwidth}{
                        \centering
                        $\Parties \subseteq \PN \land \fn(P) \cap \PN = \emptyset$\\
                    $\implies \hyperref[cnd:br]{\BR}$ \\
                    (\Cref{lem:br-sufficiency})} \\[1ex] \bottomrule
    \end{tabular}
\end{table*}

We can express that a trace is single-matched as a trace property $\SinM_{\ct_i}$ (see \cref{tab:ver-cond}).
For a trace $t$ to be single-matched, i.e., $\ctr(t) = \Set[\big]{
(\ct_i, \inst) }$, three conditions have to be satisfied.  First,
$\ct_i$ has to match $t$; second, if it matches multiple times, then
all variable assignments have to be equal; and finally, no other case
test may match $t$.
We write $\SinM$ if $\SinM_{\ct_i}$ holds for all $i \in [n]$.

With \nameref{cnd:single},  \cref{eq:cor-sub-tp},
and \nameref{cnd:rel-in},
we can express the consequent of \nameref{cnd:suffin} 
as a trace property
\begin{equation}
  \label{eq:suffin-pre}
  \Exists{ t, \inst } \ctr(t) = \Set[\big]{ (\ct_i, \inst) } \land \cor(t) \subseteq \fv(\ct_i) \inst
\end{equation}
This guarantees the existence of a single trace for
each case test, but not for all possible instantiations.
We thus need to ensure that if \cref{eq:suffin-pre} holds for
a single instantiation, then it also holds for all possible
instantiations.
To achieve this, we first introduce an additional requirement on the
counterfactual relation, the replacement property \nameref{cnd:rep-prop} (see \cref{tab:ver-cond}).

Assuming \nameref{cnd:single} holds, i.e., 
some single-matched $t'$ exists, then
\nameref{cnd:rep-prop} ensures that for any
multi-matched 
trace $t$ with a match for $\tau_i$ and $\inst$,
there is a single-matched trace $t''$ with the same case test and
instantiation as $t$. The property is slightly more general, as
$t'$ could corrupt more parties than
necessary. To illustrate this point: If $t'$ corrupts the minimal
set of parties, i.e., $\cor(t')=\fv(\tau_i)\rho'$, then 
$\cor(t'')=\fv(\tau_i)\rho$. 
We write $\RepP$ if $\RepP_{\ct_i}$ holds for all $i \in [n]$.
In other words, \hyperref[cnd:single]{$\SinM$} and
\hyperref[cnd:rep-prop]{$\RepP$} ensure that there is a decomposition
of each trace that separates interleaving causally relevant events so
they can be regarded in isolation.
A sufficient criterion is that traces are closed under bijective
renaming, which we can ensure syntactically by verifying that no
public names appear in the protocol and that $\fact{Corrupted}$ actions
contain only variables of sort public.\footnote{%
This simple check applies to both multiset-rewrite rules and SAPiC processes.
A more refined syntactic condition is possible by allowing for public names unless
they are compared to variables that occur in verdicts.
In our case studies, we verified this condition by hand.
}

To ensure that ${\inst'}^{-1}$ is well defined, we require 
each free variable to be instantiated with a distinct value.
This can be expressed
as a trace property, \hyperref[cnd:ins-inj]{Instance Injectivity $\InsI$} (see \cref{tab:ver-cond}).
We write $\InsI$ if $\InsI_{\ct_i}$ holds for all $i \in [n]$.
This condition is w.l.o.g.  If a case test violates
\hyperref[cnd:ins-inj]{$\InsI$}, it can be split into multiple case
tests for each coincidence of instantiated variables.

\begin{example}
  \label{ex:sub-inj-split}
  Assume a case test $\ct_i$ with $\fv(\ct_i) = \Set{ x, y, z }$ that violates \nameref{cnd:ins-inj} and all free variables coincide in any combination.
  These are given by the partitions of the free variables.
  \begin{equation*}
    \begin{aligned}[c]
      &\Set[\big]{ \Set{ x, y, z } } \\
      &\Set[\big]{ \Set{ x },  \Set{ y, z } } \\
      &\Set[\big]{ \Set{ y }, \Set{ x, z } }
    \end{aligned}
    \qquad\qquad
    \begin{aligned}[c]
      &\Set[\big]{ \Set{ z }, \Set{ x, y } } \\
      &\Set[\big]{ \Set{ x }, \Set{ y }, \Set{ z } }
    \end{aligned}
  \end{equation*}

  We then need to split $\ct_i$ into five case tests in which the variables in each group are replaced by a single variable.
  For example, if $y$ and $z$ coincide, we replace each occurrence of them by a new variable $v_{y,z}$.
\end{example}

Injectivity of the instantiations also ensures that the number of instantiated variables corresponds to the number of free variables.
\begin{equation*}
  \abs[\big]{ \fv(\ct_i) \inst } = \abs[\big]{ \fv(\ct_i) }
\end{equation*}

\begin{example}
  Consider the situation of \cref{ex:db-case-tests} and a trace $t$ in which a manager $\Party{M_1}$ and the employees $\Party{E_1}$, $\Party{E_2}$ cause a violation.
  Assume there exist single-matched traces $t_1$, $t_2$ with
  \begin{align*}
    \ctr(t_1) &= \Set[\big]{ (\ct_1, \subst{ m \mapsto \Party{M_2} }) } \\
    \ctr(t_2) &= \Set[\big]{ (\ct_2, \subst{ e_i \mapsto \Party{E_3}, e_j \mapsto \Party{E_4} }) }\,,
  \end{align*}
  where only the necessary parties are corrupted.
  By \hyperref[cnd:rep-prop]{$\RepP_{\ct_1}$}, there exists a trace $t_1'$ with
  \begin{align*}
      \ctr(t_1') &= \Set[\big]{ (\ct_1, \del[ m \mapsto \Party{M_1} ]) } \\
      \cor(t_1') &= \cor(t_1)\del[ \Party{M_2} \mapsto \Party{M_1} ] = \Set{ \Party{M_1} }\,.
  \end{align*}
  By \hyperref[cnd:rep-prop]{$\RepP_{\ct_2}$}, there exists a trace $t_2'$ with
  \begin{align*}
      \ctr(t_2') &= \Set[\big]{ (\ct_2, \del[ e_i \mapsto \Party{E_1}, e_j \mapsto \Party{E_2} ] ) } \\
      \cor(t_2') &= \cor(t_2) \del[ \Party{E_3} \mapsto \Party{E_1}, \Party{E_4} \mapsto \Party{E_2} ] \\
                 &\phantom{={}} = \Set{ \Party{E_1}, \Party{E_2} }\,. \tag*{\qedhere}
  \end{align*}
\end{example}

\subsection{Soundness and Completeness}
\label{sec:sound-comp-trace-prop}

In this section, we defined the class of protocols and case tests
where we can express sufficiency as a trace property by stating
\hyperref[cnd:single]{$\SinM$}, \hyperref[cnd:ins-inj]{$\InsI$}
and
\hyperref[cnd:rep-prop]{$\RepP$}.
The latter can be checked syntactically, while the first two
can be verified directly.

Using the results we obtained above, we can now finally define the verification conditions in terms of trace properties.
In the following, we assume $\ps{P}$ to be an accountability protocol.

We write $\Suff^{\tp}$, $\VerE_{\secprop}^{\tp}$, $\VerNE_{\secprop}^{\tp}$, $\Min^{\tp}$, and $\Uniq^{\tp}$ if the respective condition holds for all $i \in [n]$.
We denote the conjunction of these conditions by $\VC_{\secprop}^{\tp}$ or by $\VC^{\tp}$ if the security property can be inferred from context.

We show the correctness of these conditions by relating them 
to the axiomatic characterization from \cref{sec:verif-acc},
which has been proven equivalent to \cref{def:apv} in \cref{thm:axiom-eq}.
\Cref{lem:ver-equiv} to \cref{lem:suff-cmpl} in
\cref{sec:sound-comp-trace-prop-proof} and
\cref{tab:ver-cond} give a nuanced picture of their relationship,
which is useful to interpret counterexamples (see also
\cref{sec:inter-results}).
\Cref{thm:snd-trace-prop} and \cref{thm:cmpl-trace-prop} in
\cref{sec:sound-comp-trace-prop-proof} show soundness and
completeness.

\section{Verifying Accountability using Tamarin}
\label{sec:automated-verif}
\label{sec:acc-lemm}
\label{sec:syntac-add}

\begin{full}
\begin{table}
  \caption{Suffixes of generated lemmas}
  \label{tab:lemma-suffixes}
  \centering
  \begin{tabular}{ll}
    \toprule
    suffix & condition \\
    \midrule
    \texttt{suff}           & \nameref{cnd:sufftp} \\
    \texttt{verif_empty}    & \nameref{cnd:vertp-e} \\
    \texttt{verif_nonempty} & \nameref{cnd:vertp-ne} \\
    \texttt{min}            & \nameref{cnd:mintp} \\
    \texttt{uniq}           & \nameref{cnd:uniqtp} \\
    \texttt{inj}            & \nameref{cnd:ins-inj} \\
    \texttt{single}         & \nameref{cnd:single} \\
    \bottomrule
  \end{tabular}
\end{table}

\end{full}

\Tamarin{} \parencite{meier2013tamarin} is a protocol verification
tool that supports falsification and unbounded verification in the
symbolic model. 
\begin{full}
Security protocols are specified using multiset-rewrite
rules, but support for specifying protocols in SAPiC
\parencite{kremer2014automated} has recently been added.
\end{full}
This makes \Tamarin{} particularly suitable for integrating our results.
We extended \Tamarin{} with two syntactical elements, case tests
and accountability lemmas.
Case tests are specified by
\begin{spreadlines}{0pt}
  \begin{flalign*}
    &\texttt{\textbf{test} ⟨name⟩:} &&\\
    &\texttt{\ \ "⟨τ⟩"} &&
  \end{flalign*}
\end{spreadlines}
where \texttt{⟨name⟩} is the name of the case test and \texttt{⟨τ⟩} is its formula.
Accountability lemmas are defined similarly to standard lemmas
\begin{spreadlines}{0pt}
  \begin{flalign*}
    &\texttt{\textbf{lemma} ⟨name⟩:} && \\
    &\texttt{\ \ ⟨name₁⟩,…,⟨nameₙ⟩ \textbf{account(s) for} "⟨φ⟩"} &&
  \end{flalign*}
\end{spreadlines}
where \texttt{⟨name⟩} is the name of the lemma, \texttt{⟨nameᵢ⟩} are the names of previously defined case tests, and \texttt{⟨φ⟩} is the security property.
The implementation allows defining an arbitrary number of
accountability lemmas. This is especially useful when experimenting
with different sets of case tests, discovering potential attacks, and
analyzing accountability properties in general.
\begin{full}
The names of the lemmas generated for an accountability lemma have the following structure
\begin{equation*}
  \texttt{⟨lemma-name⟩_⟨case-test-name⟩_⟨suffix⟩}
\end{equation*}
where \texttt{⟨suffix⟩} is named according to \cref{tab:lemma-suffixes} and \texttt{⟨case-test-name⟩} is not used for \nameref{cnd:vertp-e}.
\end{full}
Each accountability lemma consists of a set of case tests and
a security property and thus specifies a verdict function according to
\cref{def:verdict-function}.

We translate each accountability lemma into a set of standard
lemmas stating the trace properties \nameref{cnd:sufftp},
\nameref{cnd:vertp-e}, \nameref{cnd:vertp-ne}, \nameref{cnd:mintp},
\nameref{cnd:uniqtp}, \nameref{cnd:ins-inj}, \nameref{cnd:single}.
In 
\appendixorfull{sec:guardedness}
we show that all these lemmas
adhere to the guardedness requirement of \Tamarin{} provided that the
case tests are guarded.
An accountability lemma holds for a protocol $\ps{P}$ if \Tamarin{}
can successfully verify all generated lemmas and the replacement
property \hyperref[cnd:rep-prop]{$\RepP$} holds.
A protocol can be specified in terms of multiset-rewrite rules or as a SAPiC process.

When analyzing an accountability lemma,
two outcomes are possible.
Either \Tamarin{} is able to verify all conditions or at least one condition is violated.
In the latter case, it can be difficult to interpret the attack,
depending on whether the condition was necessary.
To this end, we provide a detailed decision diagram in \cref{sec:inter-results}.

\section{Case studies}\label{sec:case-studies}

\begin{table*}
    \centering
    \caption{Verification results for the DMN and MixVote case studies
        in two frameworks. We compare type of attack
        (\xmark{}=attack,$\checkmark{}$=verification),
        number of lemmas and overall verification time.
    }
    \label{tab:case-studies-new}
    \begin{tabular}{l | c@{\hspace{0.65\tabcolsep}}r@{\hspace{0.65\tabcolsep}}r | c@{\hspace{0.65\tabcolsep}}r@{\hspace{0.65\tabcolsep}}r | c@{\hspace{0.65\tabcolsep}}r@{\hspace{0.65\tabcolsep}}r | c@{\hspace{0.65\tabcolsep}}r@{\hspace{0.65\tabcolsep}}r | c@{\hspace{0.65\tabcolsep}}r@{\hspace{0.65\tabcolsep}}r }
        \toprule
        Our proposal                       & \multicolumn{3}{c}{1 role}          & \multicolumn{3}{c}{2 roles}           & \multicolumn{3}{c}{3 roles}             & \multicolumn{3}{c}{4 roles}              & \multicolumn{3}{c}{5 roles}              \\
        \midrule
        Basic DMN (duplicate ciphertexts)  & \multicolumn{3}{c|}{---}            & \multicolumn{3}{c|}{---}              & \checkmark{} & 13 & \SI{26}{\second}    & \multicolumn{3}{c|}{---}                 & \multicolumn{3}{c}{---}                  \\
        DMN + message tracing (first)      & \checkmark{} &  7 & \SI{8}{\second} & \checkmark{} &  7 & \SI{124}{\second} & \checkmark{} &  7 & \SI{1373}{\second}  & \checkmark{} &   7 & \SI{14178}{\second} & \checkmark{} & 7 & \SI{134160}{\second} \\
        DMN + message tracing (all)        & \checkmark{} &  7 & \SI{6}{\second} & \xmark{}     &  7 & \SI{12}{\second}  & \xmark{}     &  7 & \SI{22}{\second}    & \xmark{}     &   7 & \SI{100}{\second}   & \xmark       & 7 & \SI{355}{\second}     \\
        MixVote (unbounded)                & \checkmark{} & 14 & \SI{6}{\second} & \multicolumn{3}{c|}{---}              & \multicolumn{3}{c|}{---}                & \multicolumn{3}{c|}{---}                 & \multicolumn{3}{c}{---}                  \\
        \midrule[\heavyrulewidth]
        \cite{kuennemann2019automated} & \multicolumn{3}{c}{1 party}         & \multicolumn{3}{c}{2 parties}         & \multicolumn{3}{c}{3 parties}           & \multicolumn{3}{c}{4 parties}            & \multicolumn{3}{c}{5 parties}                                                    \\
        \midrule
        DMN + message tracing (first)      & \checkmark{} &  7 & \SI{7}{\second} & \checkmark{} & 17 & \SI{133}{\second} & \checkmark{}  & 46 & \SI{2146}{\second} & \checkmark{}                & 149 & \SI{23827}{\second}      & \multicolumn{1}{c}{---$^*$} &  544 & \multicolumn{1}{c}{---} \\
        DMN + message tracing (all)        & \checkmark{} &  7 & \SI{4}{\second} & \xmark{}     & 17 & \SI{23}{\second}  & \xmark{}      & 46 & \SI{115}{\second}  & \xmark{}                    & 149 & \SI{548}{\second}        & \xmark                      &  544 & \SI{2922}{\second}      \\
        MixVote (unbounded)$^{**}$         & \checkmark{} & 14 & \SI{5}{\second} & \checkmark{} & 34 & \SI{58}{\second}  & \checkmark{}  & 92 & \SI{2721}{\second} & \multicolumn{1}{c}{---$^*$} & 298 & \multicolumn{1}{c|}{---} & \multicolumn{1}{c}{---$^*$} & 1112 & \multicolumn{1}{c}{---} \\
        \midrule[\heavyrulewidth]
        \multicolumn{16}{l}{
        \footnotesize
        $^*$ No verification results due to memory exhaustion.
        $^{**}$ Each party acts in the same role, that of the server.
    } \\
        \bottomrule
    \end{tabular}%
\end{table*}
\begin{table*}
    \centering
    \caption{Verification results for case studies from \cite{kuennemann2019automated} in the unbounded setting.}
    \label{tab:case-studies-previous}
    \begin{tabular}{l | c@{\hspace{0.65\tabcolsep}}r@{\hspace{0.65\tabcolsep}}r | l@{\hspace{0.65\tabcolsep}}r@{\hspace{0.65\tabcolsep}}r }
        \toprule
                                                                                                 & \multicolumn{3}{c}{Our proposal}   & \multicolumn{3}{c}{\cite{kuennemann2019automated}} \\    
        \midrule
        WhoDunit (fixed)                    & \checkmark{}                  & 7                  & \SI{52}{\second}                   & \checkmark{} ($\rel_c)$ &  8 & \SI{24}{\second}  \\
                                            &                               &                    &                                    & \checkmark{} ($\rel_w)$ &  7 & \SI{11}{\second}  \\
        Certificate Transparency (extended) & \checkmark{}                  & 27                 & \SI{17}{\second}                   & \checkmark{}            & 31 & \SI{21}{\second}  \\
        OCSP Stapling (trusted resp.)       & \checkmark{}                  & 7                  & \SI{1}{\second}                    & \checkmark{}            &  7 & \SI{515}{\second} \\
        OCSP Stapling (untrusted resp.)     & \xmark{}                      & 7                  & \SI{1}{\second}                    & \xmark{}                &  7 & \SI{75}{\second}  \\
        \bottomrule
    \end{tabular}%
\end{table*}

We demonstrate our methodology on eight case studies, four from prior work~\cite{kuennemann2019automated} and four more in the domain of mixnets and electronic voting.
We summarize our findings in \cref{tab:case-studies-previous,tab:case-studies-new}.
For each case study, we provide the verification result (\checkmark{} for successful verification, \xmark{} if we found an attack), the number of generated lemmas, and the time needed to verify all lemmas (even if an attack is found).

Before describing the case studies, we want to emphasize the importance of distinguishing between sessions, roles, and parties.
The number of sessions specifies how many instances of a protocol can be executed in parallel.
In each session, there can be multiple roles---for example, a server or a client---with different frequencies.
Within a protocol trace, these roles are instantiated with concrete party identifiers
drawn from a countably infinite set of public names.
Depending on the protocol, 
a party may participate in multiple sessions 
and
each session may be run by different sets of parties.
Even if a protocol has just one role, an unbounded number of parties
may be involved.

\subsection{Case studies from~\texorpdfstring{\cite{kuennemann2019automated}}{Künnemann, Esiyok, and Backes (2019)}}

We briefly recall the case studies from prior work~\cite{kuennemann2019automated}.
\emph{WhoDunit} illustrates a situation where a third party $J$ cannot provide a correct verdict.
$S$ sends some value to $A$ and $J$ and $A$ should forward it to $J$.
We are interested in accountability for $J$ receiving the same value from $A$ and $S$.
Without signatures it is impossible to distinguish between $A$ tampering with the message that it should forward and $S$ sending different values to $A$ and $J$.
The fixed version uses signatures to give evidence of provenance.
We extended the fixed version to an unbounded number of parties in the roles of $A$ and $S$.
The original version considers only a single communication session, hence both the analysis with respect to $\rel_c$ and $\rel_w$ (see \cref{sec:count-rel}) run faster, because they need to consider only a very small number of possible interleavings (three protocol messages).

\emph{Certificate Transparency}~\cite{rfc6962} is an accountability protocol that provides transparency for a public key infrastructure.
\textcite{kuennemann2019automated} extended a simple model~\cite{bruni2017automated} for a single certificate authority and a single logging authority.
We adapted the model to allow for an unbounded number of both, but otherwise adhered to their original formulation.
We observe a slight speed up in the verification, which is likely due to the removal of logical redundancies in the axiomatic characterization shown in \cref{sec:verif-acc}.
In contrast to WhoDunit, the original model already considered an unbounded number of interactions between concrete parties.
Hence the proofs are similarly structured.

\emph{OCSP Stapling}~\cite{rfc6066} is a mechanism to attach signed Online Certificate Status Protocol (OCSP~\cite{rfc6960}) messages during a TLS handshake.
The server's goal is to provide evidence that their certificate has not been revoked recently without the client exposing their browsing behavior to the OCSP server.
The model from \textcite{kuennemann2019automated} used an explicit clock process to model time.
We extended their model to an unbounded number of clients, TLS servers, and OCSP servers.
Moreover, we ported their SAPiC model to multiset-rewrite rules to exploit a more effective modeling of timepoints, improving the verification time by two orders of magnitude.
Otherwise, in particular concerning the communication, we remained faithful to their modeling.
The new model of timepoints avoids the use of helping lemmas compared to three required previously.
It also reduces the verification time by two orders of magnitude in the case the OCSP responder is trusted and accountability holds and by at least one in case the OCSP responder is untrusted.

\subsection{Mixnets}

Mixnets are a building block for many privacy-preserving technologies, e.g., e-voting systems, anonymous messaging, anonymous routing, and oblivious RAM (see a recent survey~\cite{hainesSoKTechniquesVerifiable2020a}).
While basic mixnets are only suitable in the honest-but-curious attacker model, they can be extended to provide verifiability and even accountability.

In this work, we focus on basic decryption mix nets (basic DMN) and their extension with message tracing (DMN + message tracing) as proposed in~\cite{hainesSoKTechniquesVerifiable2020a}.
To the best of our knowledge, this case study provides the first automated formal verification results for these kinds of DMNs.

In a DMN, each sender encapsulates their plaintext within several layers of encryption using the last mix server's public key first and the first mix server's public key last.
Each mix server decrypts the messages it receives, removing the outermost layer.
Each mix server shuffles the messages before sending them to the next server on a public channel.
For accountability, we assume these messages to be stored on a public append-only bulletin board that cannot be tampered with.

In basic DMNs, the ciphertexts on the bulletin board are continuously checked for duplicates and, in the case of a duplicate, the protocol is terminated.
Depending on which phase they were posted in, this audit correctly identifies the responsible mix server or sender.

In DMNs with message tracing, the senders store the random coins they used for encryption and each intermediate ciphertext they produce.
During the audit, each sender verifies for each mixing step that their intermediate ciphertexts appears on the bulletin board.
If this is not the case, the sender in question uses their stored random coins to prove that the mix server misbehaved.
We consider two cases of
DMNs with message tracing: in the first, the sender stops the audit once the first misbehaving mix server is found.  In the second variant, the audit continues until the last mix server.%

For this case study, we modeled basic DMNs and two variants of DMNs with message tracing in \Tamarin{}.
For the former, we allow three mix server roles and two sender roles.
For the latter, we fix the number of sender roles to two, and scale the number of mix server roles, the \emph{mix length}, from one to five.
Note that there is still an unbounded number of sessions with an
unbounded number of potential senders and mix servers in these roles,
similar to how the Tor network fixes the number of onion routers to
three, but has millions of users.%

For the basic DMN, we can show accountability for duplicate ciphertexts when we limit the senders and mix servers to only duplicate messages, but not submit otherwise dishonestly generated messages.
We define two case tests, one for senders, which checks for duplicates in the senders' output, and one for mix servers, which checks for duplicates in the mix servers' output including the final output.
The tests hold any party accountable that posts a ciphertext that has already been posted on the bulletin board.
Together, they provide accountability for the property that no duplicates occur in the same phase of a session.

For DMN with message tracing, we define a single case test holding the mix servers accountable that have been identified by a sender during the audit
\begin{equation}
    \label{eq:case-study-dmn-ct}
    \ct \defas \Exists{ \var{sid}, \var{x}, \var{i} } m = \del<\var{sid}, \var{x}> \land \fact{B}(m)@\var{i}\,,
\end{equation}
where $\fact{B}(\del<\var{sid}, \var{m}>)$ denotes that a server blamed the mix server $\var{m}$ in session $\var{sid}$.
We note that the variable $\var{m}$ is free in the case test. Hence, the parties in the verdict are pairs consisting of
a session identifier and a mix server. 
This allows a mix server to be honest in one session and dishonest in another.%

Up to a mix length of five,
we can show accountability when the senders/auditors
only blame the first mix server they catch cheating.
This confirms an existing formal result in the 
cryptographic model~\cite{kustersSElectLightweightVerifiable2016}.
For the variant where they blame all mix servers who have not posted
the correct intermediate ciphertext on the bulletin board, we find
a counterexample up to a mix length of five---with one exception. If
there is only one mix server role, this case is equivalent to the
other variant and accountability holds.
For a mix length of two or more, we find that uniqueness is
violated, indicating that a mix server can be blamed despite acting
honestly.
This happens when a dishonest mix server tampers with the ciphertext
in one of the previous stages, as the mix servers down the line will
themselves produce ciphertexts that fail the audit.%

For comparison and to
evaluate the impact of using case tests instead of defining the
verdict function explicitly,
we ported the two variants of DMNs with message tracing to the
framework of \cite{kuennemann2019automated}.
First, we had to limit the number of sessions to one. 
Listing all pairs of mix server identities (which include session
identifiers) would have been impossible.

Comparing the results of our approach with the results of \cite{kuennemann2019automated} in 
\cref{tab:case-studies-new}, we see that they agree on the outcome.
We note, however, that while the number of generated lemmas stays constant with an increasing number of mix servers in our approach, they increase exponentially in the other.
This is, again, due to the explicit enumeration of all cases in the
verdict function in \cite{kuennemann2019automated}. Even though we
fix the identities to the number of roles, i.e., 
$M1$ to $M3$ in the case of a mix length of three, 
we have to account for each combination of mix servers in the verdict function, e.g.,
\begin{align*}
\vf(t) \defas \begin{cases}
    \verdict{(M1)}             & \text{ if } \omega_{M1}(t) \\
    \verdict{(M2)}             & \text{ if } \omega_{M2}(t) \\
    \verdict{(M3)}             & \text{ if } \omega_{M3}(t) \\
    \verdict{(M1), (M2)}       & \text{ if } \omega_{M1,M2}(t) \\
    \verdict{(M1), (M3)}       & \text{ if } \omega_{M1,M3}(t) \\
    \verdict{(M2), (M3)}       & \text{ if } \omega_{M2,M3}(t) \\
    \verdict{(M1), (M2), (M3)} & \text{ if } \omega_{M1,M2,M3}(t) \\
    \mkern2mu\verdict{}        & \text{ otherwise}\,,
\end{cases}
\end{align*}
Here, each $\omega_S$ is a trace property that is satisfied if and only
if the annotated mix servers in $S$ are blamed.
The number of cases in the verdict function equals the cardinality of
the powerset of the set of \enquote{blameable} parties, which grows
exponentially.
Hence, scalability is severely limited by this approach.%

Thanks to the use of case tests, our approach permits the
specification of the verdict function independent of the number of
parties and even the number of mix server roles,
keeping the user's specification effort minimal.
Another consequence is that the accountability lemmas we produce are
actually the same. 
We nevertheless observe an increase in verification time with the mix
length, but this is expected, as the backward-resolution approach in
\Tamarin{} has to explore a larger state space. While a smarter
encoding of the case study might be possible---we tried several---this
effect would likely occur when verifying other properties, e.g.,
correspondence of output and input, in the same model.
Compared to the previous approach, we see that the verification time is
drastically reduced, sometimes by a factor of five. This is
despite the restriction to a fixed set of parties and a single
session for the previous approach. The difference is more pronounced
the longer the mixnet is, which can be explained as follows: When the
mix lengths is increased, the search space for the backward-resolution
is increased, which, in both approaches, affects the verification time
\emph{per lemma}. Since a lot more lemmas need to be proven for the
previous approach, the effect is amplified by a factor that increases
with the mix length.%

\subsection{Dispute resolution in MixVote}

MixVote~\cite{DBLP:conf/csfw/BasinRS20} adds a dispute resolution procedure to the mixnet-based voting protocol Alethea~\cite{basin2018alethea}.

We first give a high-level overview of the protocol.
A voter $H$ uses their device $D$ to compute their ballot by first encrypting their vote under the voting authority's (server $S$) public key which is then signed by $D$.
The voter casts their ballot $b$ by submitting it to some platform $P$, which forwards it over the network to $S$.
The ballot is verified by $S$ by checking whether it contains a signature corresponding to an eligible voter who has not previously voted.
In this case, $b$ is added to the list of recorded ballots $[b]$.
$S$ then signs $b$ and sends it back to $H$ as an evidence that $b$ was indeed received by the authority.
This evidence is kept by $H$ in the case for later disputes.
Once the voting phase is over, $S$ computes the tally of the recorded ballots $[b]$ by decrypting them using a mixnet.
Finally, $S$ publishes the encrypted ballots and the decrypted votes on the public bulleting board such that a voter can verify that their ballot is included.

This case study shows that our approach can be applied to existing specifications with minimal effort and is based on one of the \Tamarin{} models from \textcite{DBLP:conf/csfw/BasinRS20} ({\mbox{mixvote_SmHh}).
In this model, the voting authority $S$ can be corrupted while the voters are honest.
When corrupted, the authority's secret key is given to the adversary and the incoming and outgoing channels are modeled as insecure.
$S$ is partially trusted to sign and return a valid ballot received from $P$.
This model covers the case where a voter $H$ claims to have cast a ballot $b$ while the authority $S$ claims that this is not the case.

The original model runs a single session of the protocol with the identity of the server fixed to `$\cnt{S}$'.
We extended the model to support an unbounded number of sessions, used an unreliable insecure channel from $P$ to $S$, and added a corruption mechanism for the server.
Note that the restriction to a single server role per session is a property of the protocol and not a limitation of our approach.%

We focus on the two properties protecting an honest voter in the case of a dispute:
\begin{itemize}
  \item $\VoterC$: ensures that whenever an honest voter detects that one of their ballots was not recorded correctly, they can convince others that $S$ is dishonest.
  \item $\TimelyP$ ensures that whenever an honest voter casts a ballot, they cannot be prevented from continuing the protocol until their ballot is recorded or they can convince others that $S$ is dishonest.
\end{itemize}
We define an accountability lemma for each property.

\paragraph{Accountability for $\VoterC$}
We first define the security property, which is directly encoded in $\VoterC$.
Whenever an honest voter $H$ validates their ballot, it is indeed included in the list of recorded ballots on the bulletin board.
\begin{align*}
  \secprop_{\VoterC} {}\defas{} &\Forall{ \var{H}, \var{b}, \var{b1}, \var{i} } \fact{Verify}(\var{H}, \var{b}, \var{b1})@i \\
                                &\impliestp \Exists{ \var{BB}, \var{j} } \fact{BB}_{\mathit{rec}}(\var{BB}, \del< \mlq\cnt{b}\mrq, \var{b} >)@j
\end{align*}
We then define a case test which blames $S$ whenever a ballot is recorded on the bulletin board that has not been signed by the voter's device $D$ and the verifiability check is reached.

\paragraph{Accountability for $\TimelyP$}
The security property follows with a slight change from $\TimelyP$.
Whenever an honest voter $H$ casts a ballot and all relevant information is published on the bulletin board, the ballot is indeed included in the list of recorded ballots on the bulletin board.
\begin{align*}
  \secprop_{\TimelyP} {}\defas{} &\Forall{ \var{H}, \var{b}, \var{i}, \var{j} } \fact{Ballot}(\var{H}, \var{b})@i \land \fact{End}@j \\
                                 &\begin{aligned}[t] 
                                    \impliestp \Exists{ \var{BB}, \var{k} } &\fact{BB}_{\mathit{rec}}(\var{BB}, \del< \mlq\cnt{b}\mrq, \var{b} >)@k \\
                                   {}\land{} &i < k \land k = j
                                 \end{aligned}
\end{align*}
We define a case test which blames $S$ whenever a ballot is recorded on the bulletin board which has not been signed by the voter's device $D$ and the point where the voter receives their ballot from $D$ is reached.
We note that this case test is the same as the one for $\VoterC$ with the exception of the point in the protocol needed to be reached.
Here, we have to slightly strengthen the accountability lemma compared to $\TimelyP$.
We move the requirement that the ballot is cast before the voting ends ($i < k$) from the premise to the conclusion.
Otherwise, when a ballot is cast after the vote has ended, we have a matching case test without a security violation, i.e., a counterexample to \hyperref[cnd:vertp-ne]{$\VerNE^{\tp}$}.

Our approach can automatically show that accountability holds for the two properties described above without requiring helping lemmas.

We ported the model to the framework of \cite{kuennemann2019automated} to provide a comparison with our approach.
This version also supports an unbounded number of sessions, but due to the restriction on concrete party identifiers, we had to limit the set of parties that could act as the server.
We analyzed the protocol with up to five distinct server parties and obtained results with up to three.
In the case of four and five identities, \Tamarin{}'s search algorithm exceeded the amount of available memory.%

The results in the framework of \cite{kuennemann2019automated} agree with the results of our approach,
but the time needed to obtain them increases exponentially with the
number of parties, whereas the result presented here holds for an
unbounded number of parties.%

For future work, it might be interesting to hold both the voter and server accountable at the same time,
by merging the
MixVote model that covers a dishonest voter and an honest authority (\mbox{mixvote_ShHm}) 
with the one we investigated here.

\section{Conclusion}
\label{sec:conclusion}

In this work, we provide an automated verification methodology for
accountability that supports an unbounded number of participants, and
thus an unbounded number of security violations. 
This precludes explicit assignment of blame. We therefore 
introduced case tests---a higher-level variant of 
\citeauthor{bruni2017automated}'s accountability tests---
and used them to define highly flexible verdict functions.
Our approach also improves readability, as we may consider each case test as a specific
manifestation of a violation.
We showed how the verdict-based verification conditions can be
expressed using case tests and finally be encoded in terms of trace
properties.
Furthermore, we extended \Tamarin{} with the ability to automatically generate these from accountability lemmas.
Our case studies demonstrate applications for transparency protocols,
revocation protocols, mixnets, and dispute resolution in e-voting.

\paragraph*{Acknowledgements}
This research was partly supported by the ERC Synergy Grant ``imPACT'' (No. 610150).

\printbibliography

\appendix

\subsection{Proofs for \texorpdfstring{\cref{sec:verif-acc}}{Section IV}}
\label{sec:verif-acc-proof}

\begin{conf}
  For the proofs see \appendixorfull{sec:verif-acc-proof}.
\end{conf}

\subsubsection{Helping lemmas}

The following lemma will help us in the soundness proof.
Assume actual and counterfactual traces $t$, $t'$ are related and $S'$ is a set in $\apv(t'$).
Then $\apv(t)$ is empty or there exists a subset of $S$ which is in $\apv(t)$.
\begin{lemma}
  \label{lem:rel-apv}
  For all traces $t$, $t'$
  \begin{align*}
    \rel(t,t') \land S' \in \apv(t') \implies{} &\apv(t) = \verdict{} \\
                                       {}\lor{} &\Exists{ S } S \in \apv(t) \land S \subseteq S'\,.
  \end{align*}
\end{lemma}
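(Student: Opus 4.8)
The plan is to case split on whether the security property holds in the actual trace $t$, and to reduce the nontrivial case to the minimization argument already used in the proof of \cref{cor:apv}.

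First I would dispatch the easy case: if $t \holds \secprop$, then \cref{cor:apv} gives $\apv(t) = \verdict{}$ directly, so the left disjunct of the conclusion holds. Hence I may assume $t \holds \neg\secprop$ for the remainder, and the goal becomes exhibiting some $S \in \apv(t)$ with $S \subseteq S'$.

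Next I would unfold the hypothesis $S' \in \apv(t')$. By \cref{eq:apv-req-2} applied to $t'$, there is a trace $t''$ with $\rel(t',t'')$, $\cor(t'') = S'$, and $t'' \holds \neg\secprop$. Since $\rel$ is transitive and $\rel(t,t')$ holds by assumption, I obtain $\rel(t,t'')$. Thus $S'$ witnesses \cref{eq:apv-req-2} for the actual trace $t$: together with $t \holds \neg\secprop$ (which is \cref{eq:apv-req-1}), the set $S'$ satisfies the first two requirements for membership in $\apv(t)$, lacking only minimality (\cref{eq:apv-req-3}). Note that this step uses only transitivity of $\rel$; the corruption constraint \cref{eq:rel-corrupted} is not needed here.

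Finally I would extract a minimal subset. If $S'$ already satisfies \cref{eq:apv-req-3} relative to $t$, then $S' \in \apv(t)$ and I take $S = S'$. Otherwise there is a trace related to $t$ that corrupts a strict subset of $S'$ and still violates $\secprop$; replacing $S'$ by that smaller corrupted set and iterating, I obtain a strictly descending chain of corrupted sets. Since each $\cor(\cdot)$ is a finite set of parties, strict inclusion is well-founded and the iteration terminates at a minimal set $S \subseteq S'$ with $S \in \apv(t)$, as required by \cref{def:apv}. This termination step---identical in spirit to the one in \cref{cor:apv}---is the only delicate point; everything else is a direct unfolding of the apv together with transitivity of $\rel$.
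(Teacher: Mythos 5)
Your proposal is correct and matches the paper's argument in substance: both use transitivity of $\rel$ to lift the witness trace for \cref{eq:apv-req-2} from $t'$ to $t$, and then extract a minimal corrupted subset of $S'$ via well-foundedness of strict inclusion on finite corruption sets (the same descent as in \cref{cor:apv}). The only differences are cosmetic---you argue directly with a case split on $t \holds \secprop$ where the paper argues by contradiction, and you spell out the descent iteratively where the paper picks a minimal $\cor(t'')$ in one step---and your observation that only transitivity (not \cref{eq:rel-corrupted}) is needed is accurate.
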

\begin{full}
\begin{proofcontra}
  Assume $\rel(t,t')$, $S' \in \apv(t')$, $\apv(t) \neq \verdict{}$ and there does not exist $S \in \apv(t)$ such that $S \subseteq S'$.
  From the latter follows $S' \notin \apv(t)$.
  For this to be the case, at least one of the three requirements in \cref{def:apv} has to be violated.
  From $\apv(t) \neq \verdict{}$ follows with \cref{cor:apv} that $t \holds \neg \secprop$.
  Thus \cref{eq:apv-req-1} is satisfied.
  As $S' \in \apv(t')$, there exists a trace $t\sy{3}$ such that
  \begin{equation*}
    t\sy{3} \holds \neg \secprop \land \cor(t\sy{3}) = S' \land \rel(t',t\sy{3})\,.
  \end{equation*}
  With $\rel(t,t')$ follows $\rel(t,t\sy{3})$ and thus \cref{eq:apv-req-2} is fulfilled.
  Therefore, \cref{eq:apv-req-3} cannot hold.
  Let $t''$ be such that $S = \cor(t'')$ is minimal.
  Then $S \in \apv(t)$ and $S \subsetneq S'$.
  However, this contradicts our assumption that no such set exists.
\end{proofcontra}
\end{full}
It may seem unintuitive that either the apv is empty or for each set of parties in the apv of the counterfactual trace a subset of this set must exist in the apv of the actual trace.
We note that the minimality requirement of the apv is weaker in the counterfactual trace than in the actual trace.
The traces related to $t'$ are a subset of the traces related to $t$ and thus there may be a trace related to $t$ showing that a set $S$ is not minimal, but this trace is not necessarily related to $t'$.

From \cref{eq:apv-req-3} of the apv, we derive that the apv does not contain two sets where one is a strict subset of the other.
\begin{corollary}
  \label{cor:min-apv}
  For all traces $t$ and sets $S$
  \begin{equation}
    \label{eq:cor-apv-min}
    S \in \apv(t) \implies \NExists{ \qs{S'} } S' \in \apv(t) \land S' \subsetneq S\,.
  \end{equation}
\end{corollary}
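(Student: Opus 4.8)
The plan is to argue by contradiction, working directly from the defining requirements of the apv in \cref{def:apv} and invoking nothing else. Suppose $S \in \apv(t)$ but, contrary to the claim, there exists some $S' \in \apv(t)$ with $S' \subsetneq S$. I would then unfold the membership $S' \in \apv(t)$ and extract the witness guaranteed by the sufficiency requirement \eqref{eq:apv-req-2}.

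Concretely, from $S' \in \apv(t)$ the requirement \eqref{eq:apv-req-2} yields a counterfactual trace $t'$ with $\rel(t,t')$, $\cor(t') = S'$, and $t' \holds \neg \secprop$. This single witness is the only ingredient needed. Since $\cor(t') = S' \subsetneq S$, the trace $t'$ is precisely one that is related to $t$, corrupts a strict subset of $S$, and violates $\secprop$. But the minimality requirement \eqref{eq:apv-req-3}, applied to $S \in \apv(t)$, asserts that no such trace exists. This contradiction shows that no $S'$ with $S' \subsetneq S$ can lie in $\apv(t)$, which is exactly the claim.

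I expect no genuine obstacle here: the statement is an immediate consequence of the tension between the existential witness furnished by \eqref{eq:apv-req-2} for the smaller set $S'$ and the non-existence demanded by \eqref{eq:apv-req-3} for the larger set $S$. The only point requiring care is to invoke the correct requirement for the correct set---sufficiency for $S'$ and minimality for $S$---since both sets are assumed to belong to the same apv; conflating them would obscure the argument. Notably, the proof uses neither reflexivity nor transitivity of $\rel$ nor property \eqref{eq:rel-corrupted}, so the corollary holds for the apv under any relation whatsoever.
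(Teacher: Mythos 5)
Your proof is correct and follows essentially the same route as the paper's: both extract the counterfactual witness $t'$ from \eqref{eq:apv-req-2} applied to $S'$ and observe that $\cor(t') = S' \subsetneq S$ directly contradicts the minimality requirement \eqref{eq:apv-req-3} for $S$. Your closing remark that the argument uses no properties of $\rel$ at all is accurate and a nice observation, though the paper does not state it explicitly.
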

\begin{full}
\begin{proofcontra}
  Assume \cref{eq:cor-apv-min} does not hold.
  Then there exist $S$, $S' \in \apv(t)$ with $S' \subsetneq S$.
  From $S' \in \apv(t)$ follows the existence of a trace $t'$ such that
  \begin{equation*}
    t' \holds \neg \secprop \land \cor(t') = S' \land \rel(t,t')\,.
  \end{equation*}
  As $S' \subsetneq S$ this violates \cref{eq:apv-req-3} with respect to $S$.
\end{proofcontra}
\end{full}

\subsubsection{Soundness and Completeness}
\label{sec:sound-compl}

We show that the verdict-based verification conditions are sound and complete with respect to \cref{def:acc}.

\begin{theorem}[Soundness]
  \label{thm:snd-set}
  For any protocol $\ps{P}$, security property $\secprop$, and verdict function $\vfup$, if $\VC$ holds, then $\vfup$ provides $\ps{P}$ with accountability for $\secprop$.
\end{theorem}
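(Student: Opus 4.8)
The goal is to establish $\vf(t) = \apv_{\ps{P},\secprop,\rel}(t)$ for every trace $t \in \traces(\ps{P})$, which I would prove by a double inclusion. First I would dispose of the empty case: by \hyperref[cnd:ver]{$\Ver_{\secprop}$} we have $\vf(t) = \verdict{}$ exactly when $t \holds \secprop$, and by \cref{cor:apv} the same condition characterizes $\apv(t) = \verdict{}$; hence both sides are empty simultaneously, and it remains to treat a trace with $t \holds \neg\secprop$, where both are nonempty and in particular $\apv(t) \neq \verdict{}$.

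For the inclusion $\apv(t) \subseteq \vf(t)$, I would take $S \in \apv(t)$ and apply \hyperref[cnd:comp]{$\Comp_S$}, reducing the work to discharging its four antecedents. Two are immediate from \cref{def:apv}: \cref{eq:conj-comp-4} follows from \cref{eq:apv-req-1}, and \cref{eq:conj-comp-3} from \cref{eq:apv-req-2} together with \cref{eq:rel-corrupted}. For \cref{eq:conj-comp-1} I would take the witness $t'$ of \cref{eq:apv-req-2} (so $\cor(t') = S$, $t' \holds \neg\secprop$, $\rel(t,t')$) and argue $\vf(t') = \verdict{S}$: any $S^* \in \vf(t')$ satisfies $S^* \subseteq \cor(t') = S$ by \hyperref[cnd:uniq]{$\Uniq_S$}, and a strict inclusion $S^* \subsetneq S$ would, via \hyperref[cnd:suff]{$\Suff_S$} and transitivity of $\rel$, produce a counterfactual of $t$ corrupting exactly $S^* \subsetneq S$, contradicting the minimality clause \cref{eq:apv-req-3} of $S \in \apv(t)$. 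The same style of argument, applied to a hypothetical $S' \in \vf(t)$ with $S' \subsetneq S$, discharges \cref{eq:conj-comp-2}. Thus $\Comp_S$ yields $S \in \vf(t)$, and this direction uses no facts about the other inclusion.

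For $\vf(t) \subseteq \apv(t)$, I would take $S \in \vf(t)$ and verify the three requirements of \cref{def:apv}. \Cref{eq:apv-req-1} is immediate from \hyperref[cnd:ver]{$\Ver_{\secprop}$}, and \cref{eq:apv-req-2} follows from \hyperref[cnd:suff]{$\Suff_S$}, upgrading $\cor(t') \subseteq S$ to $\cor(t') = S$ with \hyperref[cnd:uniq]{$\Uniq_S$} and obtaining $t' \holds \neg\secprop$ from $\Ver_{\secprop}$. The crux is \cref{eq:apv-req-3}: supposing a counterfactual $t''$ with $\rel(t,t'')$, $\cor(t'') \subsetneq S$, and $t'' \holds \neg\secprop$, I would use \cref{cor:apv} to obtain some $S'' \in \apv(t'')$ (with $S'' \subseteq \cor(t'')$ by \cref{eq:apv-req-2} and \cref{eq:rel-corrupted}), apply \cref{lem:rel-apv} to find $S_0 \subseteq S''$ with $S_0 \in \apv(t)$, and then invoke the already-established inclusion $\apv(t) \subseteq \vf(t)$ to place $S_0 \in \vf(t)$. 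Since $S_0 \subseteq S'' \subseteq \cor(t'') \subsetneq S$, this contradicts \hyperref[cnd:min]{$\Min_S$}.

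The main obstacle is precisely this minimality requirement \cref{eq:apv-req-3}: it is the single place where the verdict-side condition $\Min_S$ must be bridged with the cause-side definition of the apv, and the bridge is non-local because it quantifies over counterfactual traces $t''$ related to $t$. Handling it cleanly depends on (i) proving $\apv(t) \subseteq \vf(t)$ first, so that \cref{lem:rel-apv} can transport a minimal blame set from the counterfactual back into $\vf(t)$, and (ii) the structural properties of $\rel$ from \cref{def:counterfactual-relation}---reflexivity and transitivity to compose witness traces, and \cref{eq:rel-corrupted} to keep corruption sets monotone. Everything else is bookkeeping with $\Uniq_S$ and $\Ver_{\secprop}$ to convert between $\subseteq$ and $=$ and between emptiness of the verdict and validity of $\secprop$.
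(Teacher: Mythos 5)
Your proposal is correct and follows essentially the same route as the paper's proof: a double inclusion in which $\apv(t) \subseteq \vf(t)$ is established first by discharging the four antecedents of \hyperref[cnd:comp]{$\Comp$} using \nameref{cnd:suff}, \nameref{cnd:uniq}, and \nameref{cnd:ver} together with the requirements of \cref{def:apv}, after which the minimality clause \cref{eq:apv-req-3} in the reverse inclusion is handled exactly as in the paper---via \cref{lem:rel-apv}, the already-proved first inclusion, and \hyperref[cnd:min]{$\Min$}. Your only deviations are cosmetic and harmless: for \cref{eq:conj-comp-2} you contradict \cref{eq:apv-req-3} directly where the paper first shows $S' \in \apv(t)$ and then invokes \cref{cor:min-apv}, and in the final step you obtain a member of $\apv(t'')$ from \cref{cor:apv} where the paper re-runs the minimal-corruption descent by choosing a counterfactual trace with minimal corrupted set.
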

\begin{full}
\begin{proof}
  Assume $\VC$ holds.
  We show that for all traces $t$, $\apv(t) = \vf(t)$.
  Let $t$ be an arbitrary trace.

  From \cref{cor:apv} and \nameref{cnd:ver} directly follows $\apv(t) = \verdict{} \iff \vf(t) = \verdict{}$.
  Hence, we only have to consider nonempty verdicts in the following.
  The proof consists of two parts.
  We first show that $\apv(t) \subseteq \vf(t)$ and then $\vf(t) \subseteq \apv(t)$.

  Assume $S \in \apv(t)$.
  To show that $S \in \vf(t)$, we have to prove that $S$ satisfies \cref{eq:conj-comp-1,eq:conj-comp-2,eq:conj-comp-3,eq:conj-comp-4}.
  \begin{conditions}
    \item[\Cref{eq:conj-comp-1}:]
      From \cref{eq:apv-req-2} follows the existence of a trace $t'$ such that
      \begin{equation*}
        t' \holds \neg \secprop \land \cor(t') = S \land \rel(t,t')\,.
      \end{equation*}
      It suffices to show that $\vf(t') = \verdict{ S }$.
      From \nameref{cnd:ver} and $t' \holds \neg \secprop$ follows $\vf(t') \neq \verdict{}$.
      Assume $\abs{ \vf(t') } \geqslant 2$.
      Then there exist $S\sy{3}$, $S\sy{4}$ such that $\verdict{ S\sy{3}, S\sy{4} } \subseteq \vf(t')$ and $S\sy{3} \neq S\sy{4}$.
      By \hyperref[cnd:suff]{$\Suff_{S\sy{3}}$} and \hyperref[cnd:suff]{$\Suff_{S\sy{4}}$}, there exist traces $t\sy{3}$, $t\sy{4}$ such that
      \begin{alignat*}{3}
        \vf(t\sy{3}) = \verdict{ S\sy{3} } &\land \cor(t\sy{3}) \subseteq S\sy{3} &&\land \rel(t',t\sy{3}) \\
        \vf(t\sy{4}) = \verdict{ S\sy{4} } &\land \cor(t\sy{4}) \subseteq S\sy{4} &&\land \rel(t',t\sy{4})\,.
      \end{alignat*}
      From \hyperref[cnd:uniq]{$\Uniq_{S\sy{3}}$} and $\rel(t',t\sy{3})$ follows $S\sy{3} \subseteq S$.
      From \hyperref[cnd:uniq]{$\Uniq_{S\sy{4}}$} and $\rel(t',t\sy{4})$ follows $S\sy{4} \subseteq S$.
      Since $S\sy{3} \neq S\sy{4}$, either $S\sy{3} \subsetneq S$ or $S\sy{4} \subsetneq S$.
      However, as $S \in \apv(t)$, $\rel(t,t\sy{3})$, and $\rel(t,t\sy{4})$, this would violate the minimality of $S$.
      Thus $S\sy{3} = S\sy{4} = S$ and $\vf(t') = \verdict{ S }$.
    \item[\Cref{eq:conj-comp-2}:]
      Assume \cref{eq:conj-comp-2} does not hold.
      Then there exists $S' \in \vf(t)$ such that $S' \subsetneq S$.
      We argue that $S' \in \apv(t)$ by showing that all three requirements of \cref{def:apv} are satisfied.
      By \hyperref[cnd:suff]{$\Suff_{S'}$}, there exists a trace $t'$ such that
      \begin{equation*}
        \vf(t') = \verdict{ S' } \land \cor(t') \subseteq S' \land \rel(t,t')\,.
      \end{equation*}
      From \hyperref[cnd:uniq]{$\Uniq_{S'}$} follows $S' \subseteq \cor(t')$ and thus with the results from above $\cor(t') = S'$.
      From \nameref{cnd:ver} follows $t' \holds \neg \secprop$.
      Thus $t'$ satisfies \cref{eq:apv-req-2}.
      Since $S \in \apv(t)$, \cref{eq:apv-req-1} is also satisfied.
      If \cref{eq:apv-req-3} would not be fulfilled, then there would exist a trace $t''$ such that
      \begin{equation*}
        t'' \holds \neg \secprop \land \cor(t'') \subsetneq S' \land \rel(t,t'')\,.
      \end{equation*}
      As $S' \subsetneq S$, this would violate the minimality of $S$.
      Thus \cref{eq:apv-req-3} holds and $S' \in \apv(t)$.
      However, this violates \cref{cor:min-apv}.
    \item[\Cref{eq:conj-comp-3}:]
      By \cref{eq:apv-req-2}, there exists a trace $t'$ such that $\cor(t') = S$ and $\rel(t,t')$.
      From \cref{eq:rel-corrupted} follows $S \subseteq \cor(t)$.
    \item[\Cref{eq:conj-comp-4}:]
      \Cref{eq:conj-comp-4} follows directly from \cref{eq:apv-req-1}.
  \end{conditions}

  We now consider the reverse direction.
  Assume $S \in \vf(t)$.
  To show that $S \in \apv(t)$, we have to prove that $S$ satisfies \cref{eq:apv-req-1,eq:apv-req-2,eq:apv-req-3} of the apv.
  \begin{conditions}
    \item[\Cref{eq:apv-req-1}:]
      From \nameref{cnd:ver} directly follows $t \holds \neg \secprop$.
    \item[\Cref{eq:apv-req-2}:]
      By \nameref{cnd:suff}, there exists a trace $t'$ such that
      \begin{equation*}
        \vf(t') = \verdict{ S } \land \cor(t') \subseteq S \land \rel(t,t')\,.
      \end{equation*}
      From \nameref{cnd:uniq} follows $S \subseteq \cor(t')$ and thus $\cor(t') = S$.
      From \nameref{cnd:ver} follows $t' \holds \neg \secprop$.
      Hence, $t'$ satisfies \cref{eq:apv-req-2}.
    \item[\Cref{eq:apv-req-3}:] 
      Assume the \lcnamecref{eq:apv-req-3} does not hold.
      Then there exists a trace $t'$ such that
      \begin{equation*}
        t' \holds \neg \secprop \land \cor(t') \subsetneq S \land \rel(t,t')\,.
      \end{equation*}
      Let $t'$ be minimal with respect to $S' = \cor(t')$.
      Then $S' \in \apv(t')$ and $S' \subsetneq S$.
      Since $\rel(t,t')$, by \cref{lem:rel-apv}, $\apv(t) = \verdict{}$ or there exists $S'' \in \apv(t)$ such that $S'' \subseteq S'$.
      If $\apv(t) = \verdict{}$, it follows from the former proof that $\vf(t) = \verdict{}$, which contradicts our assumption that $S \in \vf(t)$.
      In the other case, the former proof implies $S'' \in \vf(t)$.
      However, as $S'' \subsetneq S$ this violates \nameref{cnd:min}. \qedhere
  \end{conditions}
\end{proof}
\end{full}

\begin{theorem}[Completeness]
  \label{thm:cmpl-set}
  For any protocol $\ps{P}$, security property $\secprop$, and verdict function $\vfup$, if $\vfup$ provides $\ps{P}$ with accountability for $\secprop$, then $\VC$ holds.
\end{theorem}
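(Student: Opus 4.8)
The plan is to use the accountability hypothesis in its raw form: it states exactly that $\vf(t) = \apv(t)$ for every trace $t$, so I can prove each of the five axioms by substituting $\apv$ for $\vf$ everywhere and then reading off the corresponding structural property of the apv. The auxiliary facts I expect to lean on are \cref{cor:apv} (emptiness of the verdict characterises $\secprop$), \cref{cor:min-apv} (the apv contains no strict subset of one of its own sets), \cref{lem:rel-apv} (relating sets in the apv of a counterfactual to those in the apv of the actual trace), and the definitional clauses \cref{eq:apv-req-1}, \cref{eq:apv-req-2}, \cref{eq:apv-req-3} together with \cref{eq:rel-corrupted} and the reflexivity/transitivity of $\rel$ from \cref{def:counterfactual-relation}. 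I would treat the conditions in increasing order of difficulty.

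Three of the conditions are essentially restatements. \nameref{cnd:ver} is \cref{cor:apv} with $\vf$ in place of $\apv$, and \nameref{cnd:min} is \cref{cor:min-apv} likewise. For \nameref{cnd:uniq}, if $S \in \vf(t) = \apv(t)$ then \cref{eq:apv-req-2} supplies a counterfactual $t'$ with $\rel(t,t')$ and $\cor(t') = S$, and \cref{eq:rel-corrupted} immediately gives $S = \cor(t') \subseteq \cor(t)$.

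The main obstacle is \nameref{cnd:suff}, because it asks not merely for a witness $t'$ but for one with $\vf(t') = \verdict{S}$, i.e. for $\apv(t')$ to be the \emph{singleton} $\verdict{S}$. Starting from $S \in \apv(t)$, \cref{eq:apv-req-2} yields $t'$ with $\rel(t,t')$, $\cor(t') = S$, and $t' \holds \neg\secprop$. I would first establish $S \in \apv(t')$: requirement \cref{eq:apv-req-2} holds for $t'$ by reflexivity of $\rel$ (taking $t'$ itself as the witness), and requirement \cref{eq:apv-req-3} holds because any $t''$ with $\rel(t',t'')$, $\cor(t'') \subsetneq S$, $t'' \holds \neg\secprop$ would give $\rel(t,t'')$ by transitivity and thereby contradict the minimality of $S$ in $\apv(t)$. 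To upgrade $S \in \apv(t')$ to $\apv(t') = \verdict{S}$, I would observe that every $S'' \in \apv(t')$ satisfies $S'' \subseteq \cor(t') = S$ (again by \cref{eq:apv-req-2} and \cref{eq:rel-corrupted}), while \cref{cor:min-apv} applied to $S \in \apv(t')$ forbids any strict subset of $S$ from lying in $\apv(t')$; together these force $S'' = S$. This is the one step where reflexivity, transitivity, and the minimality corollary are all genuinely needed.

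Finally, \nameref{cnd:comp} is obtained by reading its antecedent through $\vf = \apv$ and checking the three apv requirements for $S$. Requirement \cref{eq:apv-req-1} is the antecedent conjunct $t \holds \neg\secprop$. For \cref{eq:apv-req-2} I would take the trace $t'$ from the sufficiency-style antecedent conjunct (so $\apv(t') = \verdict{S}$ and $\rel(t,t')$), apply \cref{eq:apv-req-2} to $t'$ to obtain $t''$ with $\cor(t'') = S$ and $t'' \holds \neg\secprop$, and compose by transitivity to get $\rel(t,t'')$. For \cref{eq:apv-req-3} I argue by contradiction: if some $t''$ had $\rel(t,t'')$, $\cor(t'') \subsetneq S$, and $t'' \holds \neg\secprop$, then $\apv(t'') \neq \verdict{}$ by \cref{cor:apv}, so choosing any $S' \in \apv(t'')$ (which satisfies $S' \subseteq \cor(t'') \subsetneq S$) and invoking \cref{lem:rel-apv} gives either $\apv(t) = \verdict{}$---excluded, since $t \holds \neg\secprop$---or some $S''' \in \apv(t)$ with $S''' \subseteq S' \subsetneq S$, i.e. $S''' \in \vf(t)$ with $S''' \subsetneq S$, contradicting the minimality conjunct of the antecedent. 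Hence $S \in \apv(t) = \vf(t)$, which completes $\Comp$ and therefore establishes all of $\VC$.
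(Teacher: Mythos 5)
Your proposal is correct and takes essentially the same route as the paper's proof: both substitute $\apv$ for $\vf$ throughout and verify the five conditions one by one using \cref{cor:apv}, \cref{cor:min-apv}, \cref{eq:rel-corrupted}, and the reflexivity and transitivity of $\rel$. Your only local deviations are cosmetic---for \nameref{cnd:suff} you first establish $S \in \apv(t')$ and then apply \cref{cor:min-apv} inside $\apv(t')$, where the paper instead rules out $\abs{\vf(t')} \neq 1$ via the minimality of $S$ in $\apv(t)$, and for \cref{eq:apv-req-3} in \nameref{cnd:comp} you invoke \cref{lem:rel-apv} (which the paper proves but deploys only in the soundness direction) where the paper directly picks a corruption-minimal witness $t''$, which is the same well-foundedness argument merely packaged inside that lemma.
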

\begin{full}
\begin{proof}
  Assume that for all traces $t$, $\apv(t) = \vf(t)$.
  We have to show that $\VC$ holds.
  Let $t$ be an arbitrary trace.
  \nameref{cnd:ver} follows from $\apv(t) = \vf(t)$ and \cref{cor:apv}.
  Hence, we only have to consider nonempty verdicts in the following.

  \begin{conditions}
    \item[\nameref{cnd:suff}:]
      Assume $S \in apv(t)$ and $S \in \vf(t)$.
      From \cref{eq:apv-req-2} follows the existence of a trace $t'$ such that
      \begin{equation*}
        t' \holds \neg \secprop \land \cor(t') = S \land \rel(t,t')\,.
      \end{equation*}
      It suffices to show that $\vf(t') = \verdict{ S }$.
      If $\vf(t') = \verdict{}$ and thus $\apv(t') = \verdict{}$, then $t' \holds \neg \secprop$ which would violate \cref{cor:apv}.
      Assume $\abs{ \vf(t') } \geqslant 2$.
      Then there exist $S\sy{3}$, $S\sy{4}$ such that $\verdict{ S\sy{3}, S\sy{4} } \subseteq \apv(t')$ and $S\sy{3} \neq S\sy{4}$.
      By \cref{eq:apv-req-2}, there exist traces $t\sy{3}$, $t\sy{4}$ such that
      \begin{alignat*}{3}
        t\sy{3} \holds \neg \secprop &\land \cor(t\sy{3}) = S\sy{3} &&\land \rel(t',t\sy{3}) \\
        t\sy{4} \holds \neg \secprop &\land \cor(t\sy{4}) = S\sy{4} &&\land \rel(t',t\sy{4})\,.
      \end{alignat*}
      From $\rel(t',t\sy{3})$ follows $S\sy{3} \subseteq S$ and from $\rel(t',t\sy{4})$ follows $S\sy{4} \subseteq S$.
      Since $S\sy{3} \neq S\sy{4}$, either $S\sy{3} \subsetneq S$ or $S\sy{4} \subsetneq S$.
      However, as $S \in \apv(t)$ and $\rel(t,t\sy{3})$, $\rel(t,t\sy{4})$, this would violate the minimality of $S$.
      Thus $S\sy{3} = S\sy{4} = S$ and $\vf(t') = \verdict{ S }$.
      Hence, \nameref{cnd:suff} holds.
    \item[\nameref{cnd:min}:]
      Assume $S \in apv(t)$ and $S \in \vf(t)$.
      \nameref{cnd:min} follows from $\apv(t) = \vf(t)$ and \cref{cor:min-apv}.
    \item[\nameref{cnd:uniq}:]
      Assume $S \in apv(t)$ and $S \in \vf(t)$.
      By \cref{eq:apv-req-2}, there exists a trace $t'$ such that $\cor(t') = S$ and $\rel(t,t')$.
      From \cref{eq:rel-corrupted} follows $S \subseteq \cor(t)$.
    \item[\nameref{cnd:comp}:]
      Let $S$ be such that \cref{eq:conj-comp-1,eq:conj-comp-2,eq:conj-comp-3,eq:conj-comp-4} are satisfied.
      Assume $S \notin \vf(t)$ and thus by assumption $S \notin \apv(t)$.
      We provoke a contradiction by showing that $S$ satisfies \cref{eq:apv-req-1,eq:apv-req-2,eq:apv-req-3}.

      From \cref{eq:conj-comp-4} directly follows \cref{eq:apv-req-1}.

      From \cref{eq:conj-comp-1} follows the existence of a trace $t'$ such that
      \begin{equation*}
        \vf(t') = \verdict{ S } \land \cor(t') \subseteq S \land \rel(t,t')\,.
      \end{equation*}
      By assumption $\apv(t') = \verdict{ S }$ and by \cref{eq:apv-req-1} $t' \holds \neg \secprop$.
      Along with \cref{eq:conj-comp-3} follows $\cor(t') = S$ and thus $t'$ satisfies \cref{eq:apv-req-2}.

      Assume \cref{eq:apv-req-3} does not hold.
      Then there exists a trace $t''$ such that
      \begin{equation*}
        r(t,t'') \land \cor(t'') \subsetneq S \land t'' \holds \neg \secprop\,.
      \end{equation*}
      Let w.l.o.g.\ $t''$ be a trace such that $\cor(t'') = S'$ is minimal.
      Then $S' \in \apv(t)$ and $S' \subsetneq S$.
      However, by assumption it follows that $S' \in \vf(t)$ violating \cref{eq:conj-comp-2}.

      Hence, $S \in \vf(t)$. \qedhere
  \end{conditions}
\end{proof}
\end{full}
 
\subsection{Soundness and completeness of verification
conditions}\label{sec:sound-comp-trace-prop-proof}

\begin{conf}
  For the proofs see \appendixorfull{sec:sound-comp-trace-prop-proof}.
\end{conf}
\begin{full}
In the transformations we perform below, we use
\cref{eq:ct-ex-quan,eq:ct-all-quan,eq:strict-sub-tp,eq:cor-sup-tp,eq:cor-sub-tp,def:satis-rel,cor:verdict-function}.
\end{full}

\begin{lemma}[Verifiability]
  \label{lem:ver-equiv}
  \begin{equation*}
    \hyperref[cnd:ver]{\Ver_{\secprop}} \iff \hyperref[cnd:vertp-e]{\VerE_{\secprop}^{\tp}} \land \hyperref[cnd:vertp-ne]{\VerNE_{\secprop}^{\tp}}
  \end{equation*}
\end{lemma}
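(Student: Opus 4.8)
The plan is to reduce both sides to statements about which case tests match a trace, exploiting the characterization of the empty verdict. The pivotal observation is that, by \cref{cor:verdict-function}, some $S$ lies in $\vf(t)$ exactly when some case test matches $t$; hence $\vf(t) = \verdict{}$ holds iff no case test matches, i.e.
\begin{equation*}
  \vf(t) = \verdict{} \iff \And*{ i \in [n] } \NExists{ \inst } t \holds \ct_i \inst\,,
\end{equation*}
where the quantifier over case tests has been turned into a conjunction via \cref{eq:ct-all-quan}. Substituting this equivalence into \hyperref[cnd:ver]{$\Ver_{\secprop}$} rewrites it as
\begin{equation*}
  \Forall{ t } \bigl[ \bigl( \And*{ i \in [n] } \NExists{ \inst } t \holds \ct_i \inst \bigr) \iff t \holds \secprop \bigr]\,.
\end{equation*}

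Next I would split this biconditional into its two implications and match each to one of the trace-property conditions. The forward direction, $\bigl( \And_{i} \NExists{\inst} t \holds \ct_i \inst \bigr) \implies t \holds \secprop$, is precisely \hyperref[cnd:vertp-e]{$\VerE_{\secprop}^{\tp}$} once the implication is read through \cref{def:satis-rel}. For the converse direction, $t \holds \secprop \implies \And_{i} \NExists{\inst} t \holds \ct_i \inst$, I would distribute the conjunction over $i$ to obtain, for each $i$, the statement $\Forall{t} t \holds \secprop \implies \NExists{\inst} t \holds \ct_i \inst$. Because $\secprop$ is a trace property, $t \holds \secprop$ and $t \holds \neg \secprop$ are complementary on every trace, so contraposing each conjunct and pulling the existential out of the antecedent yields $\Forall{t, \inst} t \holds \ct_i \inst \implies t \holds \neg \secprop$, which is exactly \hyperref[cnd:vertp-ne]{$\VerNE_{\secprop,\ct_i}^{\tp}$}; conjoining over all $i \in [n]$ gives \hyperref[cnd:vertp-ne]{$\VerNE_{\secprop}^{\tp}$}. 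Since every step in this chain is an equivalence, reading it backwards establishes the reverse implication of the \lcnamecref{lem:ver-equiv}, so that \hyperref[cnd:vertp-e]{$\VerE_{\secprop}^{\tp}$} and \hyperref[cnd:vertp-ne]{$\VerNE_{\secprop}^{\tp}$} together recover \hyperref[cnd:ver]{$\Ver_{\secprop}$}.

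I expect the main obstacle to be bookkeeping rather than mathematical depth. Two points deserve care: first, pulling the existential quantifier over $\inst$ out of the antecedent of an implication uses $\Forall{\inst} \bigl( A(\inst) \implies B \bigr) \iff \bigl( \Exists{\inst} A(\inst) \bigr) \implies B$, valid only because $\inst$ does not occur in $B = (t \holds \neg \secprop)$, and it must be applied in the correct direction. Second, the contraposition step genuinely relies on the two-valuedness of the security property on a fixed trace, not on mere (non)satisfiability over a trace set; this is what lets $\neg(t \holds \secprop)$ be replaced by $t \holds \neg \secprop$. It is also worth verifying explicitly that the empty-verdict characterization is a genuine equivalence, since both directions of \hyperref[cnd:ver]{$\Ver_{\secprop}$} depend on it.
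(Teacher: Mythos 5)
Your proposal is correct and follows essentially the same route as the paper's proof: both rewrite $\Ver_{\secprop}$ via \cref{cor:verdict-function} into the statement that no case test matches iff $\secprop$ holds, split the biconditional, contrapose the nonempty direction (pulling the existential over $\inst$ out of the antecedent and replacing quantification over case tests by a finite conjunction/disjunction per \cref{eq:ct-ex-quan,eq:ct-all-quan}), and read off \hyperref[cnd:vertp-e]{$\VerE_{\secprop}^{\tp}$} and \hyperref[cnd:vertp-ne]{$\VerNE_{\secprop}^{\tp}$} as validities for $\ps{P}$. Your two flagged care points---the quantifier extraction being legitimate because $\inst$ does not occur in the consequent, and contraposition resting on two-valuedness of satisfaction of the ground formula $\secprop$ on a fixed trace---are exactly the implicit justifications in the paper's chain of equivalences.
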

\begin{full}
\begin{proof}
  \begingroup
  \allowdisplaybreaks[1]
  \begin{align*}
    \Ver_{\secprop} &\ftequiv \vf(t) = \verdict{} \iff t \holds \secprop \\
                    &\ftequiv \del[\big]( \NExists{ i, \inst} t \holds \ct_i \inst) \iff t \holds \secprop \\
                    &\ftequiv \begin{aligned}[t]
                                &\del[\Big][ \del[\Big]( \oskip\And[r]{ i \in [n] } \NExists{ \inst } t \holds \ct_i \inst ) \implies t \holds \secprop ] \\
                      {}\land{} &\del[\Big][ \del[\Big]( \oskip\Or[r]{ i \in [n] } \Exists{ \inst } t \holds \ct_i \inst ) \implies t \holds \neg \secprop ]
                    \end{aligned} \\
                    &\ftequiv \begin{aligned}[t]
                                &\del[\Big][ \del[\Big]( \oskip\And[r]{ i \in [n] } t \holds \NExists{ \vec{v} } \ct_i \subst{\vec{v}} ) \implies t \holds \secprop ] \\
                      {}\land{} &\del[\Big][ \oskip\And[r]{ i \in [n] } \Forall{ \inst } \del[\big]( t \holds \ct_i \inst \implies t \holds \neg \secprop ) ] 
                    \end{aligned} \\
                    &\ftequiv \begin{aligned}[t]
                                &\del[\Big]( t \holds \del[\Big][ \del[\Big]( \oskip\And[r]{ i \in [n] } \NExists{ \vec{v} } \ct_i \subst{ \vec{v} } ) \impliestp \secprop ] ) \\
                      {}\land{} &\And*{ i \in [n] } \del[\Big]( t \holds \del[\Big][ \Forall{ \vec{v} } \ct_i \subst{ \vec{v} } \impliestp \neg \secprop ] )
                    \end{aligned} \\
                    &\ftequiv t \holds \TraceProp[\big]{ \VerE_{\secprop}^{\tp} } \land \And*{ i \in [n] } t \holds \TraceProp[\big]{ \VerNE_{\secprop,\ct_i}^{\tp} } \\
                    &\isequiv \ps{P} \holds^{\forall} \TraceProp[\big]{ \VerE_{\secprop}^{\tp} } \land \And*{ i \in [n] } \ps{P} \holds^{\forall} \TraceProp[\big]{ \VerNE_{\secprop,\ct_i}^{\tp} } \\
                    &\isequiv \VerE_{\secprop}^{\tp} \land \VerNE_{\secprop}^{\tp} \tag*{\qedhere}
  \end{align*}
  \endgroup
\end{proof}
\end{full}
\begin{lemma}[Minimality]
  \label{lem:min-equiv}
  \begin{equation*}
    \hyperref[cnd:min]{\Min} \iff \hyperref[cnd:mintp]{\Min^{\tp}}
  \end{equation*}
\end{lemma}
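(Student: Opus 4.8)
The plan is to prove \cref{lem:min-equiv} by a chain of equivalences, in exactly the style already used for \cref{lem:ver-equiv}, rewriting the verdict-based condition $\Min$ step by step until it becomes the trace-property condition $\Min^{\tp}$. The starting point is the axiom, i.e.\ $\Forall{t} S \in \vf(t) \implies \NExists{S'} S' \in \vf(t) \land S' \subsetneq S$ taken for all $S$, and the target is $\And{ i \in [n] } \Min_{\ct_i}^{\tp}$, so the whole argument is a sequence of semantics-preserving rewrites rather than a separate soundness/completeness split.

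First I would apply \cref{cor:verdict-function} twice to eliminate both occurrences of the set-membership predicate. Applied to the outer $S \in \vf(t)$, it reparameterizes the universal quantifier over sets $S$ into a quantifier over a case test $\ct_i$ and a grounding instantiation $\inst$, with $S = \fv(\ct_i)\inst$; applied to the inner $S' \in \vf(t)$ sitting inside the negated existential, it turns $\Exists{S'} S' \in \vf(t) \land S' \subsetneq \fv(\ct_i)\inst$ into $\Exists{j,\inst'} t \holds \ct_j \inst' \land \fv(\ct_j)\inst' \subsetneq \fv(\ct_i)\inst$. Negating this and using that $\CT$ is finite, so $[n]$ is finite, I would split the $\NExists{j}$ into the finite conjunction $\And{ j \in [n] } \NExists{\inst'}(\dots)$ (via \eqref{eq:ct-all-quan}), and split the outer $\Forall{i}$ into a finite conjunction over $i \in [n]$, which already exposes the per-test shape of $\Min_{\ct_i}^{\tp}$.

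The central step is the passage from quantification over instantiations to quantification over the renamed free variables inside the trace formula, exactly as in \cref{lem:ver-equiv}. Here I would use the satisfaction relation (\cref{def:satis-rel}) to rewrite $\Forall{\inst}(t \holds \ct_i \inst \implies \Psi)$ as $t \holds \Forall{\vec{v}}(\ct_i \subst{\vec{v}} \impliestp \Psi')$ and the inner $\NExists{\inst'}$ as $\NExists{\vec{w}}$ over the renaming $\ct_j \subst{\vec{w}}$, and I would encode the semantic comparison $\fv(\ct_j)\inst' \subsetneq \fv(\ct_i)\inst$ by the syntactic strict-subset trace property $\TraceProp[\big]{\vec{w} \subsetneq \vec{v}}$ of \eqref{eq:strict-sub-tp}. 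This yields, for every $t$, the equivalence of $\Min$ with $\And{ i \in [n] } \big(t \holds \Forall{\vec{v}} \ct_i \subst{\vec{v}} \impliestp \And{ j \in [n] } \NExists{\vec{w}} \ct_j \subst{\vec{w}} \land \TraceProp[\big]{\vec{w} \subsetneq \vec{v}}\big)$; quantifying over all $t$ and moving from $\holds$ to $\holds^{\forall}$ over $\ps{P}$ then gives $\And{ i \in [n] } \Min_{\ct_i}^{\tp}$, which is $\Min^{\tp}$.

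I expect the main obstacle to be this central translation, specifically ensuring that the strict-subset comparison between the instantiated free-variable tuples of two possibly distinct case tests $\ct_i$ and $\ct_j$ --- whose tuples $\vec{v}$ and $\vec{w}$ may differ in length and must be compared as sets modulo $E$ --- is faithfully captured by $\TraceProp[\big]{\vec{w} \subsetneq \vec{v}}$. I would need to check that \eqref{eq:strict-sub-tp} respects equality modulo $E$ and introduces no spurious identifications, and to confirm that the reparameterization via \cref{cor:verdict-function} does not lose the coupling $S = \fv(\ct_i)\inst$ in the case where the same verdict set is produced by several tests and instantiations. All remaining manipulations are the routine propositional and quantifier rearrangements already exercised in \cref{lem:ver-equiv}.
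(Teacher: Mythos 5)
Your proposal is correct and takes essentially the same route as the paper: the paper's proof of \cref{lem:min-equiv} is exactly the chain of equivalences you describe---applying \cref{cor:verdict-function} to both occurrences of set membership, replacing quantification over case tests by finite conjunctions via \eqref{eq:ct-all-quan}, internalizing the instantiation quantifiers as quantifiers over renamed free variables per \cref{def:satis-rel}, encoding the strict subset by \eqref{eq:strict-sub-tp}, and finally lifting from per-trace satisfaction to $\ps{P} \holds^{\forall}$. The pitfalls you flag (comparison modulo $E$ and tuples of differing lengths) are indeed handled by the encoding in \eqref{eq:strict-sub-tp}, whose term equalities are interpreted modulo $E$ and which is defined for arbitrary lengths $m$, $n$, so they do not affect the argument.
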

\begin{full}
\begin{proof}
  \begingroup
  \allowdisplaybreaks[1]
  \begin{align*}
    \Min &\isequiv \Forall{ t, S } S \in \vf(t) \implies \begin{aligned}[t]
             \NExists{ \qs{S'} } &S' \in \vf(t) \\
                       {}\land{} &S' \subsetneq S
         \end{aligned} \\
         &\isequiv \Forall{ t, i, \inst} t \holds \ct_i \inst \implies \begin{aligned}[t]
             \NExists{ j, \qs{\inst'} } &t \holds \ct_j \inst' \\
                              {}\land{} &\fv(\ct_j) \inst' \subsetneq \fv(\ct_i) \inst
         \end{aligned} \\
         &\isequiv \And*{ i \in [n] } \Forall{ t, \inst} t \holds \ct_i \inst \implies \And*{ j \in [n] } \Bigg( \begin{aligned}
             \NExists{ \qs{\inst'} } &t \holds \ct_j \inst' \\
                           {}\land{} &\fv(\ct_j) \inst' \subsetneq \fv(\ct_i) \inst
         \end{aligned} \Bigg) \\
         &\isequiv \And*{ i \in [n] } \Forall{ t } t \holds \Bigg[ \Forall{ \vec{v} } \ct_i \subst{\vec{v}}  \impliestp \And*{ j \in [n] } \Bigg( \begin{aligned}
             \NExists{ \vec{w} } &\ct_j \subst{\vec{w}} \\
                       {}\land{} &\TraceProp[\big]{ \vec{w} \subsetneq \vec{v} }
         \end{aligned} \Bigg) \Bigg] \\
         &\isequiv \And*{ i \in [n] } \Forall{ t } t \holds \TraceProp[\big]{ \Min_{\ct_i}^{\tp} } \\
         &\isequiv \And*{ i \in [n] } \ps{P} \holds^{\forall} \TraceProp[\big]{ \Min_{\ct_i}^{\tp} } \\
         &\isequiv \Min^{\tp} \tag*{\qedhere}
  \end{align*}
  \endgroup
\end{proof}
\end{full}
\begin{lemma}[Uniqueness]
  \label{lem:uniq-equiv}
  \begin{equation*}
    \hyperref[cnd:uniq]{\Uniq} \iff \hyperref[cnd:uniqtp]{\Uniq^{\tp}}
  \end{equation*}
\end{lemma}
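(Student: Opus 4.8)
The plan is to establish this equivalence by the same chain-of-equivalences strategy used for \cref{lem:ver-equiv} and \cref{lem:min-equiv}, observing that uniqueness is structurally the simplest of the three: there is no biconditional to split (unlike verifiability) and no nested non-existence quantifier over proper subsets (unlike minimality). Every rewriting step will be an equivalence in both directions, so soundness and completeness of the encoding fall out together.

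First I would unfold $\Uniq$ into its definition, quantified over all sets $S$ and all traces $t$, and then apply \cref{cor:verdict-function} to replace $S \in \vf(t)$ by the existence of a matching case test, turning the quantification over $S$ into one over an index $i$ and an instantiation $\inst$ with $S = \fv(\ct_i)\inst$:
\[
  \Uniq \isequiv \Forall{ t, S } S \in \vf(t) \implies S \subseteq \cor(t)
  \isequiv \Forall{ t, i, \inst } t \holds \ct_i \inst \implies \fv(\ct_i)\inst \subseteq \cor(t).
\]
Since $\CT$ is finite, the universal over $i$ is pulled out as a conjunction $\And*{i \in [n]}$, which yields exactly the intermediate condition $\Uniq_{\ct_i}^{\mathrm{in}}$ for each $i$.

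The one substantive step is converting the semantic subset statement $\fv(\ct_i)\inst \subseteq \cor(t)$ into an object-level trace formula. Here I would invoke \cref{eq:cor-sup-tp}, namely $\vec{a} \subseteq \cor(t) \iff t \holds \TraceProp{\vec{a} \subseteq \fact{Corrupted}}$ with $\TraceProp{\vec{a} \subseteq \fact{Corrupted}} \defas \And*{\ell} \Exists{ k } \fact{Corrupted}(\var{a_{\ell}})@k$. Matching the tuple $\vec{v}$ against $\fv(\ct_i)\inst$, so that $\inst$ becomes a valuation of the renamed free variables of $\ct_i\subst{\vec{v}}$, rewrites the body as
\[
  \Forall{ \vec{v} } \ct_i\subst{\vec{v}} \impliestp \And*{\ell \in \idx(\vec{v})} \Exists{ k } \fact{Corrupted}(\vec{v_{\ell}})@k,
\]
which is precisely $\TraceProp{\Uniq_{\ct_i}^{\tp}}$. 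Finally, since this formula is closed, $\Forall{ t } t \holds \TraceProp{\Uniq_{\ct_i}^{\tp}}$ is by definition $\ps{P} \holds^{\forall} \TraceProp{\Uniq_{\ct_i}^{\tp}}$, and conjoining over all $i \in [n]$ gives $\Uniq^{\tp}$.

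The main obstacle, such as it is, is purely notational: carefully verifying that the passage between the instantiation notation ($\inst$ applied to $\fv(\ct_i)$) and the renamed-variable notation ($\ct_i\subst{\vec{v}}$ under a valuation of $\vec{v}$) preserves both the match $t \holds \ct_i \inst$ and the identification $S = \fv(\ct_i)\inst = \vec{v}$, so that the subset relation lines up componentwise with the conjunction of $\fact{Corrupted}$ atoms. Given the machinery already fixed in \cref{sec:towards-trace-props}, this bookkeeping is routine, and the proof should be shorter than its two companions.
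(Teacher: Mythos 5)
Your proposal is correct and follows essentially the same chain of equivalences as the paper's proof: unfold \hyperref[cnd:uniq]{$\Uniq$}, apply \cref{cor:verdict-function} to replace $S \in \vf(t)$ by a matching case test and instantiation, turn quantification over the finite set of case tests into a conjunction, and encode the subset relation via \cref{eq:cor-sup-tp} to arrive at $\ps{P} \holds^{\forall} \TraceProp[\big]{\Uniq_{\ct_i}^{\tp}}$ for each $i$. The notational bookkeeping between $\inst$ and the renamed variables $\vec{v}$ that you flag is handled implicitly in the paper's proof in exactly the way you describe.
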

\begin{full}
\begin{proof}
  \begingroup
  \allowdisplaybreaks[1]
  \begin{align*}
    \Uniq &\isequiv \Forall{ t, S } S \in \vf(t) \implies S \subseteq \cor(t) \\
          &\isequiv \Forall{ t, i, \inst} t \holds \ct_i \inst\implies \fv(\ct_i) \inst\subseteq \cor(t) \\
          &\isequiv \And*{ i \in [n] } \Forall{ t } t \holds \del[\Big][ \Forall{ \vec{v} } \ct_i \subst{\vec{v}} \impliestp \TraceProp[\big]{ \vec{v} \subseteq \fact{Corrupted} } ] \\
          &\isequiv \And*{ i \in [n] } \Forall{ t } t \holds \TraceProp[\big]{ \Uniq_{\ct_i}^{\tp} } \\
          &\isequiv \And*{ i \in [n] } \ps{P} \holds^{\forall} \TraceProp[\big]{ \Uniq_{\ct_i}^{\tp} } \\
          &\isequiv \Uniq^{\tp} \tag*{\qedhere}
  \end{align*}
  \endgroup
\end{proof}
\end{full}
\begin{lemma}[Sufficiency---Soundness]
  \label{lem:suff-snd}
  \begin{equation*}
    \hyperref[cnd:sufftp]{\Suff^{\tp}} \land  \hyperref[cnd:uniqtp]{\Uniq^{\tp}} \land \hyperref[cnd:ins-inj]{\InsI} \land \hyperref[cnd:rep-prop]{\RepP} \implies \hyperref[cnd:suff]{\Suff}
  \end{equation*}
\end{lemma}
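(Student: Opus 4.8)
The plan is to prove the implication directly: I fix an arbitrary trace $t$ and a set $S \in \vf(t)$, and exhibit the witness trace required by $\Suff$. First I would apply \cref{cor:verdict-function} to unpack the membership $S \in \vf(t)$ into the case-test world in which all the hypotheses are phrased: there exist an index $i$ and an instantiation $\inst$ with $t \holds \ct_i \inst$ and $\fv(\ct_i)\inst = S$. Throughout I treat the relation-introduction requirement $\RelI$ on the counterfactual relation as a standing assumption, since it is the only means of deriving membership in $\rel$.

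Next I would obtain a single-matched witness from $\Suff^{\tp}_{\ct_i}$: this existential trace property yields a trace $t_s$ with $\ctr(t_s) = \Set{(\ct_i, \inst')}$ for some instantiation $\inst'$ and with $\cor(t_s) \subseteq \fv(\ct_i)\inst'$. In general $\inst' \neq \inst$, and reconciling the two is exactly the role of $\RepP$. Before invoking it I would apply $\InsI_{\ct_i}$ to $t_s$ (which matches $\ct_i$ via $\inst'$) to conclude that $\inst'$ is injective on $\fv(\ct_i)$, so that ${\inst'}^{-1}$ is well-defined on $\fv(\ct_i)\inst'$. Then $\RepP_{\ct_i}$, instantiated with $t,\inst,t_s,\inst'$, produces a trace $t''$ with $\ctr(t'') = \Set{(\ct_i, \inst)}$ and $\cor(t'') = \cor(t_s)(\inst \circ {\inst'}^{-1})$.

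I claim $t''$ is the required witness. From $\ctr(t'') = \Set{(\ct_i, \inst)}$ and \cref{def:verdict-function} I read off $\vf(t'') = \Set{\fv(\ct_i)\inst} = \verdict{S}$. For $\cor(t'') \subseteq S$ I would trace the relabeling: any $p \in \cor(t_s)$ lies in $\fv(\ct_i)\inst'$, hence $p = \inst'(v)$ for some $v \in \fv(\ct_i)$, and applying $\inst \circ {\inst'}^{-1}$ sends $p$ to $\inst(v) \in \fv(\ct_i)\inst = S$.

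The main obstacle is the last requirement, $\rel(t,t'')$, because $\RelI$ asks not for the inclusion $\cor(t'') \subseteq \fv(\ct_i)\inst$ that I have in hand but for the exact equality $\cor(t'') = \fv(\ct_i)\inst$ together with $\fv(\ct_i)\inst \subseteq \cor(t)$. This is precisely where $\Uniq^{\tp}$ is needed: applying $\Uniq^{\tp}_{\ct_i}$ to $t''$ (matched by $\inst$) gives $S = \fv(\ct_i)\inst \subseteq \cor(t'')$, which upgrades the inclusion from the previous paragraph to $\cor(t'') = S$; and applying $\Uniq^{\tp}_{\ct_i}$ to the actual trace $t$ gives $\fv(\ct_i)\inst \subseteq \cor(t)$. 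With $t \holds \ct_i \inst$, $\ctr(t'') = \Set{(\ct_i, \inst)}$, and these two facts, $\RelI$ yields $\rel(t,t'')$, so $t''$ witnesses $\Suff$. I expect the only delicate bookkeeping to be the composition $\inst \circ {\inst'}^{-1}$: checking (via $\InsI$) that it is well-defined and (via the range inclusion from $\Suff^{\tp}$) that it carries $\cor(t_s)$ into $S$; the rest is a matter of routing each trace into the matching hypothesis.
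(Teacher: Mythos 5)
Your proposal is correct and follows essentially the same route as the paper's proof: unpack $S \in \vf(t)$ via \cref{cor:verdict-function}, obtain a single-matched witness from \hyperref[cnd:sufftp]{$\Suff^{\tp}$}, use \hyperref[cnd:ins-inj]{$\InsI$} and \hyperref[cnd:rep-prop]{$\RepP$} to transport it to the instantiation $\inst$ from the actual trace, upgrade $\cor(t'') \subseteq S$ to equality with \hyperref[cnd:uniqtp]{$\Uniq^{\tp}$} (applied to both $t''$ and $t$), and close with \hyperref[cnd:rel-in]{$\RelI$} to obtain $\rel(t,t'')$. You even spell out two points the paper leaves implicit---the well-definedness of ${\inst'}^{-1}$ and the element-wise computation showing $\cor(t_s)(\inst \circ {\inst'}^{-1}) \subseteq S$---and you correctly identify $\RelI$ as a standing assumption on the counterfactual relation rather than a listed hypothesis of the lemma.
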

\begin{full}
\begin{proof}
  Let $S \in \vf(t)$.
  By \cref{cor:verdict-function} there exists a case test $\ct_i$ and an instantiation $\inst$ such that $t \holds \ct_i \inst$ with $\fv(\ct_i) \inst = S$.
  By \nameref{cnd:sufftp} there exists a trace $t'$ such that $\ctr(t') = \Set[\big]{ (\ct_i, \inst') }$ and $\cor(t') \subseteq \fv(\ct_i) \inst'$.
  Using $(\ct_i, \inst) \in \ctr(t)$ as a witness, from \nameref{cnd:ins-inj} and \nameref{cnd:rep-prop} follows the existence of a single-matched trace $t''$ with $\ctr(t'') = \Set[\big]{ (\ct_i, \inst) }$ and $\cor(t'') = \cor(t')(\inst \circ {\inst'}^{-1})$.
  From the latter follows $\cor(t'') \subseteq S$.
  It remains to show that $t$ and $t''$ are related.
  From \nameref{cnd:uniqtp} follows $S \subseteq \cor(t'')$ and thus $\cor(t'') = S$.
  From \nameref{cnd:uniqtp} follows $S \subseteq \cor(t)$.
  As all requirements are fulfilled, we can apply \nameref{cnd:rel-in} to obtain $\rel(t,t'')$.
  Hence,
  \begin{equation*}
    \vf(t'') = \Set{ S } \land \cor(t'') \subseteq S \land \rel(t,t'')\,,
  \end{equation*}
  and \hyperref[cnd:suff]{$\Suff_S$} holds.
  Since $S$ has been arbitrary, the same argument applies to all $S$, which shows \hyperref[cnd:suff]{$\Suff$}.
\end{proof}
\end{full}
\begin{lemma}[Sufficiency---Completeness]
  \label{lem:suff-cmpl}
  \begin{equation*}
    \hyperref[cnd:suff]{\Suff} \land \hyperref[cnd:single]{\SinM} \land \hyperref[cnd:ver]{\Ver} \implies \hyperref[cnd:sufftp]{\Suff^{\tp}}
  \end{equation*}
\end{lemma}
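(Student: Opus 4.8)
The plan is to prove $\Suff^{\tp}_{\ct_i}$ separately for each $i \in [n]$ and then take the conjunction to obtain $\Suff^{\tp}$. So fix $i$. First I would invoke \nameref{cnd:single} to obtain a single-matched trace $t_0$ with $\ctr(t_0) = \Set[\big]{ (\ct_i, \inst_0) }$, and set $S \defas \fv(\ct_i)\inst_0$. Since $t_0$ carries exactly one match, \cref{cor:verdict-function} gives $\vf(t_0) = \verdict{S}$, and in particular $S \in \vf(t_0)$.

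Next I would feed $t_0$ and $S$ into \nameref{cnd:suff}, obtaining a trace $t'$ with $\vf(t') = \verdict{S}$, $\cor(t') \subseteq S$, and $\rel(t_0,t')$. This $t'$ will be the witness for the existential in $\Suff^{\tp}_{\ct_i}$. It already meets the corruption bound $\cor(t') \subseteq S$; the only thing left to establish is that $t'$ is single-matched with $\ct_i$, i.e.\ $\ctr(t') = \Set[\big]{ (\ct_i, \inst_0) }$.

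This single-matchedness step is the crux and the main obstacle. The point is that \nameref{cnd:suff} only guarantees the singleton \emph{verdict} $\verdict{S}$ for $t'$, and a singleton verdict does not by itself force a single match: the set $S$ could in principle be produced by a different case test $\ct_j$ or by $\ct_i$ under a permuted assignment. To rule this out I would use the counterfactual relation. Since $\vf(t_0)$ and $\vf(t')$ are both the nonempty verdict $\verdict{S}$, \nameref{cnd:ver} yields $t_0 \holds \neg\secprop$ and $t' \holds \neg\secprop$, which are exactly the hypotheses of \nameref{cnd:rel-el}. Applying \nameref{cnd:rel-el} to $\rel(t_0,t')$ then gives $\ctr(t') \subseteq \ctr(t_0) = \Set[\big]{ (\ct_i, \inst_0) }$; and because $\vf(t') \neq \verdict{}$ forces $\ctr(t') \neq \emptyset$ (again by \cref{cor:verdict-function}), we conclude $\ctr(t') = \Set[\big]{ (\ct_i, \inst_0) }$. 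Thus $t'$ is single-matched with $\ct_i$ and $\inst_0$. (Note this uses \nameref{cnd:rel-el}, which is assumed of $\rel$ throughout this section, cf.\ the remark that completeness needs relation elimination; the lemma's stated hypotheses $\Suff \land \SinM \land \Ver$ are supplemented by this standing assumption on the relation.)

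Finally I would translate the two established facts about $t'$ into the trace formula defining $\Suff^{\tp}_{\ct_i}$. Writing $\vec{v}$ for the vector $\fv(\ct_i)\inst_0$: from $\ctr(t') = \Set[\big]{ (\ct_i, \inst_0) }$, the match $t' \holds \ct_i\subst{\vec{v}}$ gives the first conjunct, the absence of any other match of $\ct_i$ (which would be a second element of $\ctr(t')$) gives the uniqueness conjunct $\Forall{\vec{w}} \ct_i\subst{\vec{w}} \impliestp \vec{w} = \vec{v}$, and the absence of matches of $\ct_j$ for $j \neq i$ gives the third conjunct; while $\cor(t') \subseteq S = \fv(\ct_i)\inst_0$ gives the fourth conjunct through the corruption encoding \eqref{eq:cor-sub-tp}. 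Hence $\ps{P} \holds^{\exists} \Suff^{\tp}_{\ct_i}$, and since $i$ was arbitrary, $\Suff^{\tp}$ holds.
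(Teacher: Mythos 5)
Your proposal is correct and takes essentially the same route as the paper's proof: obtain a single-matched trace via \hyperref[cnd:single]{$\SinM$}, feed it and $S = \fv(\ct_i)\inst$ into \hyperref[cnd:suff]{$\Suff$} to get the witness $t'$, then use \hyperref[cnd:ver]{$\Ver$} together with \hyperref[cnd:rel-el]{$\RelE$} and nonemptiness of the verdict to conclude $\ctr(t') = \Set{ (\ct_i, \inst) }$ before translating into the trace formula via \cref{eq:cor-sub-tp}. Your explicit observation that a singleton verdict alone does not force a single match---and that \hyperref[cnd:rel-el]{$\RelE$} is the standing assumption on $\rel$ that closes this gap---is precisely the step the paper's proof performs.
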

\begin{full}
\begin{proof}
  By \hyperref[cnd:single]{$\SinM$} there exists for each case test $\ct_i$ a single-matched trace $t$ such that $\ctr(t) = \Set[\big]{ (\ct_i, \inst) }$.
  Let $S = \fv(\ct_i) \inst$.
  From \cref{cor:verdict-function} follows $S \in \vf(t)$.
  By \hyperref[cnd:suff]{$\Suff_S$} there exists a trace $t'$ such that
  \begin{equation*}
    \vf(t') = \verdict{ S } \land \cor(t') \subseteq S \land \rel(t,t')\,.
  \end{equation*}
  From \hyperref[cnd:ver]{$\Ver$} follows $t \holds \neg \secprop$ and $t' \holds \neg \secprop$.
  Along with $\rel(t,t')$ follows by \nameref{cnd:rel-el} that $\ctr(t') \subseteq \ctr(t)$.
  From \cref{cor:verdict-function} and $S \in \vf(t')$ follows $\ctr(t') = \Set[\big]{ (\ct_i, \inst) }$.
  Hence, with \cref{eq:cor-sub-tp} follows
  \begin{align*}
       t' \holds \Exists{ \vec{v} } \ct_i\subst{\vec{v}} &\land \del[\Big][ \Forall{ \vec{w} } \ct_i \subst{\vec{w}} \impliestp \vec{w} = \vec{v} ] \\
                                                                        &\land \del[\Big][ \And[r]{ j \in [n] \setminus \Set{i} } \NExists{ \vec{x} } \ct_j \subst{\vec{x}} ] \\
                                                                        &\land \TraceProp[\big]{ \fact{Corrupted} \subseteq \vec{v} }\,.
  \end{align*}
  The same argument applies to each single-matched trace for which there exists at least one for each case test by \hyperref[cnd:single]{$\SinM$}.
  This shows \hyperref[cnd:sufftp]{$\Suff^{\tp}$}.
\end{proof}
\end{full}
\begin{lemma}[Completeness]
  \label{lem:comp}
  \begin{equation*}
    \hyperref[cnd:vertp-ne]{\VerNE_{\secprop}^{\tp}} \implies \hyperref[cnd:comp]{\Comp}
  \end{equation*}
\end{lemma}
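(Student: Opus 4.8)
The plan is to prove \hyperref[cnd:comp]{$\Comp_S$} directly for an arbitrary set $S$ and trace $t$, assuming the four antecedents \cref{eq:conj-comp-1,eq:conj-comp-2,eq:conj-comp-3,eq:conj-comp-4} and deriving $S \in \vf(t)$. The guiding observation is that only two of the four antecedents are actually needed: the sufficiency-style witness \cref{eq:conj-comp-1} and the violation assumption \cref{eq:conj-comp-4}. Besides $\VerNE_{\secprop}^{\tp}$ itself, I would silently lean on the two standing facts flagged in the remark preceding \cref{sec:count-rel}: the monotonicity \cref{eq:ctr-vf}, which is a consequence of \cref{def:verdict-function}, and the counterfactual-relation requirement \nameref{cnd:rel-el}.

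First I would extract a witness from \cref{eq:conj-comp-1}: a trace $t' \in \traces(\ps{P})$ with $\vf(t') = \verdict{S}$, $\cor(t') \subseteq S$, and $\rel(t,t')$. In particular $S \in \vf(t')$, so by \cref{cor:verdict-function} there are a case test $\ct_i$ and an instantiation $\inst$ with $t' \holds \ct_i\inst$ and $\fv(\ct_i)\inst = S$. Applying $\VerNE_{\secprop}^{\tp}$ to this match---recall it asserts that any matching instance of a case test forces a violation in the same trace---yields $t' \holds \neg\secprop$.

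Next I would exploit that $\rel(t,t')$ now relates two traces that both violate the security property: $t \holds \neg\secprop$ holds by the assumption \cref{eq:conj-comp-4}, and $t' \holds \neg\secprop$ was just established. This is precisely the hypothesis of \nameref{cnd:rel-el}, whose conclusion gives $\ctr(t') \subseteq \ctr(t)$. Feeding this inclusion into the monotonicity property \cref{eq:ctr-vf} produces $\vf(t') \subseteq \vf(t)$, and since $S \in \vf(t')$ I conclude $S \in \vf(t)$, as required.

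The argument is short, so the only real subtlety---and the step I would be careful to justify---is the applicability of \nameref{cnd:rel-el}: its premise demands that \emph{both} endpoints of the relation violate $\secprop$, and it is exactly $\VerNE_{\secprop}^{\tp}$, applied to the case-test match extracted from $\vf(t') = \verdict{S}$, that supplies $t' \holds \neg\secprop$, while $t \holds \neg\secprop$ comes for free from the completeness antecedent. I expect no difficulty from the remaining antecedents \cref{eq:conj-comp-2} (minimality) and \cref{eq:conj-comp-3} ($S \subseteq \cor(t)$); they simply do not participate in this direction. This is worth remarking explicitly, as it shows that completeness is driven entirely by verifiability of nonempty verdicts together with the elimination property of the counterfactual relation.
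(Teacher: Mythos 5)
Your proposal is correct and follows the paper's own proof essentially verbatim: extract the witness $t'$ from \cref{eq:conj-comp-1}, use \cref{cor:verdict-function} to obtain a matching case test and instantiation, apply \nameref{cnd:vertp-ne} to get $t' \holds \neg\secprop$, combine with $t \holds \neg\secprop$ from \cref{eq:conj-comp-4} to discharge the premise of \nameref{cnd:rel-el}, and conclude $S \in \vf(t)$ via \cref{eq:ctr-vf}. Your side remark that \cref{eq:conj-comp-2,eq:conj-comp-3} play no role is also accurate of the paper's proof, which likewise never invokes them.
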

\begin{full}
\begin{proof}
  From \cref{eq:conj-comp-1} follows the existence of a trace $t'$ such that
  \begin{equation*}
    \vf(t') = \verdict{S} \land \cor(t') \subseteq S \land \rel(t,t')
  \end{equation*}
  and from \cref{eq:conj-comp-4} follows $t \holds \neg \secprop$.
  From \cref{cor:verdict-function} follows the existence of a case test $\ct_i$ and instantiation $\inst$ such that $t' \holds \ct_i \inst$.
  And thus $t' \holds \neg \secprop$ by \nameref{cnd:vertp-ne}.
  As all requirements are satisfied, \nameref{cnd:rel-el} can be applied to obtain $\ctr(t') \subset \ctr(t)$.
  From the latter follows with \cref{eq:ctr-vf} that $\vf(t') \subseteq \vf(t)$ and thus $S \in \vf(t)$.
\end{proof}
\end{full}

With the results above, we can proof the central theorems of this work---soundness and completeness of $\VC^{\tp}$.
\begin{theorem}[Soundness]
  \label{thm:snd-trace-prop}
  For any protocol $\ps{P}$, security property $\secprop$, and case tests $\CT = \ct_1, \dots, \ct_n$, if $\VC^{\tp}$, \hyperref[cnd:ins-inj]{$\InsI$}, and \hyperref[cnd:rep-prop]{$\RepP$} hold, then $\vfup_{\CT}$ provides $\ps{P}$ with accountability for~$\secprop$.
\end{theorem}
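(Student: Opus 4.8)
The plan is to reduce the trace-property conditions back to the axiomatic characterization $\VC$ of \cref{sec:verif-acc} and then invoke the soundness already established there. Concretely, \cref{thm:snd-set} (equivalently, one direction of \cref{thm:axiom-eq}) guarantees that whenever the five set-based conditions $\Ver_{\secprop}$, $\Min$, $\Suff$, $\Uniq$, and $\Comp$ all hold, $\vfup_{\CT}$ provides $\ps{P}$ with accountability for $\secprop$. It therefore suffices to derive each of these five from the hypotheses $\VC^{\tp}$, $\InsI$, $\RepP$, together with the standing assumptions on the counterfactual relation (namely $\RelI$ and $\RelE$ from \cref{sec:count-rel}).

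First I would discharge the three conditions that are pure equivalences. Verifiability $\Ver_{\secprop}$ follows from the two halves $\VerE_{\secprop}^{\tp}$ and $\VerNE_{\secprop}^{\tp}$ by \cref{lem:ver-equiv}; minimality $\Min$ follows from $\Min^{\tp}$ by \cref{lem:min-equiv}; and uniqueness $\Uniq$ follows from $\Uniq^{\tp}$ by \cref{lem:uniq-equiv}. Since $\VerE_{\secprop}^{\tp}$, $\VerNE_{\secprop}^{\tp}$, $\Min^{\tp}$, and $\Uniq^{\tp}$ are all part of $\VC^{\tp}$, these three steps are immediate.

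The remaining two conditions are where the extra hypotheses enter. Sufficiency $\Suff$ follows by \cref{lem:suff-snd} from $\Suff^{\tp} \land \Uniq^{\tp} \land \InsI \land \RepP$, with $\RelI$ used inside that lemma to reintroduce the relation $\rel$ that occurs in $\Suff$ but is abstracted away in the trace properties. Completeness $\Comp$ follows by \cref{lem:comp} from $\VerNE_{\secprop}^{\tp}$, using \cref{eq:ctr-vf} and $\RelE$. Having assembled all five set-based conditions, i.e. $\VC$, I would conclude accountability of $\vfup_{\CT}$ by \cref{thm:snd-set}.

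The genuinely hard content is not in this assembly but in \cref{lem:suff-snd}, which I would take as given: the trace-property form $\Suff^{\tp}$ only exhibits a single-matched witness for \emph{one} instantiation of each case test, whereas $\Suff$ demands, for an arbitrary match $(\ct_i,\inst)$ occurring in an arbitrary—possibly multi-matched—trace $t$, a related single-matched witness corrupting exactly the parties $\fv(\ct_i)\inst$. Bridging this gap is precisely what $\InsI$ and $\RepP$ buy us: injectivity makes the inverse instantiation well defined so the witness can be relabeled, and the replacement property transports it across instantiations. Once \cref{lem:suff-snd} is in hand, the soundness theorem is a clean composition of the bridging lemmas, so I expect no further obstacle at this level.
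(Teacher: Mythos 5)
Your proposal is correct and follows essentially the same route as the paper's own proof: discharge $\Ver_{\secprop}$, $\Min$, and $\Uniq$ via \cref{lem:ver-equiv,lem:min-equiv,lem:uniq-equiv}, obtain $\Suff$ from \cref{lem:suff-snd} (where $\InsI$, $\RepP$, and \hyperref[cnd:rel-in]{$\RelI$} do the real work) and $\Comp$ from \cref{lem:comp}, then conclude by \cref{thm:snd-set}. If anything, your assembly is slightly more careful than the paper's three-line proof, which cites only \cref{lem:ver-equiv,lem:min-equiv,lem:uniq-equiv,lem:suff-snd} while tacitly relying on \cref{lem:comp} for the $\Comp$ conjunct of $\VC$.
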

\begin{full}
\begin{proof}
  Assume $\VC^{\tp}$, \hyperref[cnd:ins-inj]{$\InsI$} and \hyperref[cnd:rep-prop]{$\RepP$} hold.
  From \cref{lem:ver-equiv,lem:min-equiv,lem:uniq-equiv,lem:suff-snd} follows $\VC$.
  By \cref{thm:snd-set} $\vf$ provides $\ps{P}$ with accountability for $\secprop$.
\end{proof}
\end{full}
\begin{theorem}[Completeness]
  \label{thm:cmpl-trace-prop}
  For any protocol $\ps{P}$, security property $\secprop$, and case tests $\CT = \ct_1, \dots, \ct_n$, if $\vfup_{\CT}$ provides $\ps{P}$ with accountability for $\secprop$, and \hyperref[cnd:single]{$\SinM$} holds, then $\VC^{\tp}$ holds.
\end{theorem}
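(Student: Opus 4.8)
The plan is to treat this theorem as the final assembly step: accountability is first converted into the verdict-based conditions $\VC$, and then each component of $\VC$ is transferred to its trace-property counterpart using the equivalence lemmas established above. Concretely, I would start from the hypothesis that $\vfup_{\CT}$ provides $\ps{P}$ with accountability for $\secprop$ and invoke \cref{thm:cmpl-set}, the completeness half of the axiomatic characterization, to obtain $\VC$, i.e.\ that \hyperref[cnd:suff]{$\Suff$}, \hyperref[cnd:ver]{$\Ver_{\secprop}$}, \hyperref[cnd:min]{$\Min$}, \hyperref[cnd:uniq]{$\Uniq$}, and \hyperref[cnd:comp]{$\Comp$} all hold. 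Note that $\SinM$ is \emph{not} needed for this first step; it enters only for sufficiency below.

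Next I would peel off the three conditions whose trace-property encodings are logically equivalent. Using the forward direction of \cref{lem:ver-equiv}, \hyperref[cnd:ver]{$\Ver_{\secprop}$} yields both \hyperref[cnd:vertp-e]{$\VerE_{\secprop}^{\tp}$} and \hyperref[cnd:vertp-ne]{$\VerNE_{\secprop}^{\tp}$}; \cref{lem:min-equiv} turns \hyperref[cnd:min]{$\Min$} into \hyperref[cnd:mintp]{$\Min^{\tp}$}; and \cref{lem:uniq-equiv} turns \hyperref[cnd:uniq]{$\Uniq$} into \hyperref[cnd:uniqtp]{$\Uniq^{\tp}$}. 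The condition \hyperref[cnd:comp]{$\Comp$} plays no role in the conclusion, since $\VC^{\tp}$ contains no separate completeness condition (it is recovered separately via \cref{lem:comp}). The only remaining condition is sufficiency, and this is where the extra hypothesis must be used: I would apply \cref{lem:suff-cmpl}, which asserts exactly $\hyperref[cnd:suff]{\Suff} \land \hyperref[cnd:single]{\SinM} \land \hyperref[cnd:ver]{\Ver} \implies \hyperref[cnd:sufftp]{\Suff^{\tp}}$, all three antecedents being in hand ($\Suff$ and $\Ver$ from $\VC$, and $\SinM$ by assumption). Conjoining the five resulting trace-property conditions then gives $\VC^{\tp}$.

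The substantive obstacle, and the reason $\SinM$ cannot be dropped, lives entirely inside \cref{lem:suff-cmpl} rather than in the present theorem. Verdict-based sufficiency \hyperref[cnd:suff]{$\Suff$} is a $\forall\exists$ statement quantifying over the counterfactual relation, whereas \hyperref[cnd:sufftp]{$\Suff^{\tp}$} is a plain satisfiability claim about a single-matched witness trace that corrupts only the blamed parties. Bridging the two requires (i) producing, for each case test, a witness trace in which that test is the unique match, which is precisely what $\SinM$ guarantees, and (ii) eliminating the counterfactual relation from the sufficiency witness, which relies on \hyperref[cnd:rel-el]{$\RelE$} together with verifiability to confirm that the witness remains single-matched. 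Since these steps are discharged by the cited lemma, the present proof reduces to one application of \cref{thm:cmpl-set}, the three equivalences, and \cref{lem:suff-cmpl}, so no fresh calculation is required at this level.
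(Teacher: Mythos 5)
Your proposal is correct and coincides with the paper's own proof, which likewise invokes \cref{thm:cmpl-set} to obtain $\VC$ and then applies \cref{lem:ver-equiv,lem:min-equiv,lem:uniq-equiv,lem:suff-cmpl} to conclude $\VC^{\tp}$. Your side remarks are also accurate: the hypothesis \hyperref[cnd:single]{$\SinM$} is consumed only inside \cref{lem:suff-cmpl} (whose stated antecedents $\Suff \land \SinM \land \Ver$ you have exactly), and \hyperref[cnd:comp]{$\Comp$} is indeed not needed since $\VC^{\tp}$ contains no separate completeness condition.
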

\begin{full}
\begin{proof}
  Assume $\vf$ provides $\ps{P}$ with accountability for $\secprop$, and \hyperref[cnd:single]{$\SinM$} holds.
  From \cref{thm:cmpl-set} follows $\VC$.
  By \cref{lem:ver-equiv,lem:min-equiv,lem:uniq-equiv,lem:suff-cmpl} follows $\VC^{\tp}$.
\end{proof}
\end{full}

\subsection{Proof of sufficiency condition for \texorpdfstring{\hyperref[cnd:br]{$\BR$}}{BR}}
\label{sec:br-sufficiency-condition}

\begin{full}
Arguments about syntactic conditions inherently depend on the
calculus. However, the high-level argument is the same for both SAPiC
and multiset-rewrite rules. Hence we first provide a proof sketch, and then fully
elaborate the proof for multiset-rewrite rules.
\end{full}
\begin{conf}
  For the proof see \appendixorfull{sec:br-sufficiency-condition}.
\end{conf}

\begin{lemma}
    \label{lem:br-sufficiency}
    \begin{equation}
        \label{eq:br-sufficiency}
        \Parties \subseteq \PN \land \fn(\ps{P}) \cap \PN = \emptyset \implies \hyperref[cnd:br]{\BR}
    \end{equation}
\end{lemma}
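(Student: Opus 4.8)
The plan is to lift the bijection $\sub \colon \Parties \leftrightarrow \Parties$ to a well-sorted name permutation on all of $\Terms$ that acts as the identity outside $\Parties$, and to show that this lifted map carries any execution witnessing $t \in \traces(\ps{P})$ to an execution witnessing $t\sub \in \traces(\ps{P})$. Two ingredients drive the argument. First, the equational theory $E$ is closed under bijective renaming of names, so $u =_E v$ iff $u\sub =_E v\sub$; hence every check modulo $E$---fact matching and rule applicability---is preserved by $\sub$. Second, the hypothesis $\fn(\ps{P}) \cap \PN = \emptyset$ states that no public name occurs syntactically in the specification of $\ps{P}$, so every protocol rule $R$ is invariant, $R\sub = R$: the permutation touches only the ground terms that instantiate the rules, never the rules themselves.

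For the multiset-rewrite case I would argue as follows. An execution witnessing $t \in \traces(\ps{P})$ is a sequence of ground rule instances $R_1\theta_1, \dots, R_k\theta_k$ driving transitions $\emptyset \to S_1 \to \dots \to S_k$ from the empty state, whose sequence of action labels is $t$. For a protocol rule $R$ and a grounding $\theta$ the commutation identity $(R\theta)\sub = R(\theta;\sub)$ holds, where $(\theta;\sub)(x) = \sub(\theta(x))$, precisely because $\sub$ fixes every name occurring in $R$. Thus $(R\theta)\sub$ is again a ground instance of the very same rule. Applying $\sub$ pointwise to every state and instance turns the execution into $\emptyset \to S_1\sub \to \dots \to S_k\sub$, in which each step remains a legal instance of its rule, and whose action trace is exactly $t\sub$. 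Hence $t\sub \in \traces(\ps{P})$, which is $\BR$. Since $\sub$ is a bijection, the symmetric inclusion follows by applying the same argument to $\sub^{-1}$, though only one direction is required.

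The delicate points---and the main obstacle---are the built-in rules and the global well-formedness side conditions of an execution, which are not part of $\ps{P}$ and are therefore not covered directly by $\fn(\ps{P}) \cap \PN = \emptyset$. The Fresh rule draws names from $\FN$; since $\sub$ is the identity on $\FN$, distinct fresh names stay distinct, so the requirement that each fresh name be produced at most once survives. The message-deduction and adversary rules contain no public name either, and because $\sub$ permutes $\PN$ bijectively, any public name the adversary introduces in the original execution is matched by its $\sub$-image in the renamed one; these rules are thus closed under $\sub$ as well. Assembling these facts shows that every transition and every side condition is preserved, establishing $\BR$. The SAPiC case is analogous: one verifies that the operational semantics of a process whose free names avoid $\PN$ commutes with $\sub$, so the reduction sequence and its observable trace are transported by $\sub$ in the same manner.
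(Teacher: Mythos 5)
Your proof is correct and takes essentially the same route as the paper's: you transport an execution pointwise under $\sub$, using exactly the paper's key observation that $\fn(\ps{P}) \cap \PN = \emptyset$ forces every public name in a realized rule to come from the variable realization, so that $(R\theta)\sub = R(\sub \circ \theta)$ is again an instance of the same rule and matching modulo $E$ is preserved under bijective renaming. Your explicit check of the Fresh and adversary-deduction rules is a minor addition of care beyond the paper's write-up, which treats only the protocol rules, but it does not change the argument.
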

\begin{full}
\begin{proofsketch}
    Assume \cref{eq:br-sufficiency} does not hold.
    Then there exists a bijection $\sub \colon \Parties \leftrightarrow \Parties$ and a trace $t \in \traces(\ps{P})$ such that $t \sub \notin \traces(\ps{P})$.
    Since $t \neq_E t \sub$, there exists w.l.o.g.\ an action $\fact{F}(m_1, \dots, m_n)@k \in t$ for messages $m_i$ such that $\fact{F}(m_1, \dots, m_n)@k \neq_E \fact{F}(m_1 \sub, \dots, m_n \sub)@k$.
    Hence, there exists a $j \in [n]$ such that $m_j \neq_E m_j \sub$.
    From $\Parties \subseteq \PN$ and the fact that $\sub$ is a bijection on $\PN$, it follows that $m_j$ and thus $m_j \sub$ contain public names.

    Since $\ps{P}$ does not contain public names by assumption, the public names cannot be hardcoded and must arise from variable realizations.
    For the same reason,
    these variables can be compared to other messages, but not to public names, hence
    the comparison results must be preserved under a bijective renaming of public names.

    Thus, whenever a message $m_i$ can be constructed in $\ps{P}$, the message $m_i \sub$ under the bijection $\sub$ can be constructed.
    This argument extends to all messages in the action $\fact{F}$ and to all actions in $t$.
    Hence, when $t \in \traces(\ps{P})$ then $t \sub \in \traces(\ps{P})$ which violates our assumption that \hyperref[cnd:br]{$\BR$} does not hold.
\end{proofsketch}

In the following, we assume that $\ps{P}$ is defined by a set of multiset-rewrite rules $\Set{ \ru_1, \dots, \ru_n }$.
\begin{proof}
    Assume
    \begin{enumerate*}[label=(\arabic*),ref=(\arabic*)]
        \item \label{it:parties-pubnames} $\Parties \subseteq \PN$ and
        \item \label{it:no-pubnames} $\fn(\ps{P}) \cap \PN = \emptyset$.
    \end{enumerate*}
    Let $\sub \colon \Parties \leftrightarrow \Parties$ be an arbitrary bijection and $t \in \traces(\ps{P})$ be an arbitrary trace.
    Each action $\fact{F}(m_1 \omega, \dots, m_k \omega)@k \in t$ corresponds to the application of a realized rule $\ri_i = \ru_i \omega$, where $\fact{F}(m_1 \omega, \dots, m_k \omega)@k \in \ri_i.a$ and $\ri_i.a$ denotes the multiset of actions in rule $\ri_i$.
    Note that the domain of $\omega$ are variables and the domain of $\sub$ are public names.

    To prove that $t \sub \in \traces(\ps{P})$, it suffices to show that each action $\fact{F}((m_1 \omega) \sub, \dots, (m_k \omega) \sub)@k \in t \sub$ corresponds to the application of a realized rule $\ri_i \sub$.
    To this end, we show that when a rule $\ru_i = (l, a, r)$ is applicable in state $S$ with realization $\omega$ producing actions $a \omega$ and leading to state $S' = S \setminus l \omega \cup r \omega$, then $\ru_i$ is also applicable in state $S \sub$ with realization $\sub \circ \omega$ producing actions $(a \omega) \sub$ and leading to state $S' \sub$.\looseness=-1

    Note that due to assumptions \ref{it:parties-pubnames} and \ref{it:no-pubnames}, all public names in a realized rule $\ri_i$ correspond to variables in $\ru_i$, i.e., are subterms of $\omega(v)$ for some variable in $\ri_i$.
    Thus $\sub \circ \omega$ gives rise to a rule instance $\ri_i' = \ru_i (\sub \circ \omega)$.

    Since $\ri_i$ is applicable, we know that for each fact $f \in l \omega$, there exists a fact $f' \in_E S$.
    Due to assumptions \ref{it:parties-pubnames} and \ref{it:no-pubnames}, $f' \sub \in S \sub$ and thus $(l \omega) \sub \subseteq S \sub$.
    Hence, the rule $\ri_i'$ is applicable in state $S \sub$ leading to state 
    \[
        S \sub \setminus ((l \omega) \sub) \cup ((r \omega ) \sub) = (S \setminus l \omega \cup r \omega) \sub = S' \sub\,.
    \]
    Thus, $t \sub \in \traces(\ps{P})$ and $t \sub$ is generated by the same sequence of multiset-rewrite rules as $t$ where each rule application is substituted by $\sub$. 
\end{proof}

\end{full}

\subsection{Implications of the Results}
\label{sec:inter-results}

\begin{figure}
  \caption[Decision diagram]{Decision diagram for the requirements and verification conditions defined in \cref{sec:verification-conditions-trace-prop}. Each edge represents an implication, each branch a disjunction.}
  \label{fig:decision-diagram}
  \centering
  \resizebox{\linewidth}{!}{
  \begin{tikzpicture}
    \GraphInit[vstyle=Empty]
    \SetVertexNormal[Shape = rectangle, LineWidth  = 0.4pt,LineColor  = white]
    \tikzset{EdgeStyle/.style = {-stealth, line width=0.8pt}}
    
    \Vertex[x=3.75, y=0, L=$\neg \VC^{\tp}$]{VCtp}
    \Vertex[x=6, y=0, L={\hyperref[cnd:single]{$\neg \SinM$}}]{SM}
    \Vertex[x=9, y=0, L={\hyperref[cnd:ins-inj]{$\neg \InsI$}}]{II}
   
    \Vertex[x=0, y=-2, L={\hyperref[cnd:uniqtp]{$\neg \Uniq^{\tp}$}}]{Utp}
    \Vertex[x=1.5, y=-2, L={\hyperref[cnd:mintp]{$\neg \Min^{\tp}$}}]{Mtp}
    \Vertex[x=3, y=-2, L={\hyperref[cnd:vertp-e]{$\neg \VerE_{\secprop}^{\tp}$}}]{VEtp}
    \Vertex[x=4.5, y=-2, L={\hyperref[cnd:vertp-ne]{$\neg \VerNE_{\secprop}^{\tp}$}}]{VNEtp}
    \Vertex[x=6, y=-2, L={\hyperref[cnd:sufftp]{$\neg \Suff^{\tp}$}}]{SFtp}
    \Vertex[x=9, y=-2, L={\hyperref[cnd:rep-prop]{$\neg \RepP$}}]{RP}
    
    \Vertex[x=0, y=-4, L={\hyperref[cnd:uniq]{$\neg \Uniq$}}]{U}
    \Vertex[x=1.5, y=-4, L={\hyperref[cnd:min]{$\neg \Min$}}]{M}
    \Vertex[x=3.75, y=-4, L={\hyperref[cnd:ver]{$\neg \Ver_{\secprop}$}}]{V}
    \Vertex[x=6, y=-4, L={\hyperref[cnd:suff]{$\neg \Suff$}}]{SF}
  
    \Vertex[x=7.5, y=-4, L={\hyperref[cnd:single]{$\neg \SinM$}}]{SMsf}
    \Vertex[x=3.75, y=-6, L=$\neg \VC$]{VC}
    
    \Vertex[x=3.75, y=-8, L={\hyperref[def:acc]{$\neg \Acc$}}]{ACC}
    \Vertex[x=8.25, y=-8, L={\hyperref[def:acc]{$\neg \Acc \mathbin{/} \Acc$}}]{unACC}
    
    \tikzset{VertexStyle/.append style ={inner xsep = 8pt}}
    \Vertex[x=3.75, y=-6, L=$\neg \VC$]{VC}
  
    \Edges(SM,SFtp)
    \Edges(VCtp,Utp,U,VC,ACC)
    \Edges(VCtp,Mtp,M,VC,ACC)
    \Edges(VCtp,VEtp,V,VC,ACC)
    \Edges(VCtp,VNEtp,V,VC,ACC)
    \Edges(VCtp,SFtp,SF,VC,ACC)
    \Edges(SFtp,SMsf,unACC)
    \Edges(II,RP,unACC)
  \end{tikzpicture}
  }
\end{figure}
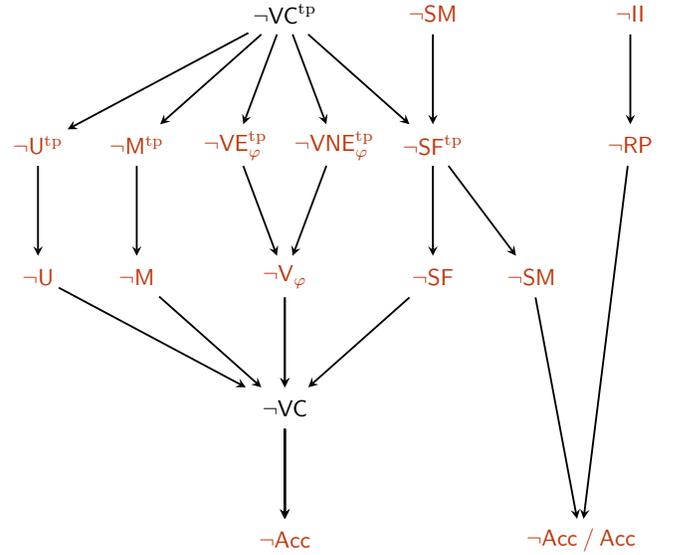

The implications of a failed condition are depicted in \cref{fig:decision-diagram}.
Each arrow in the diagram represents an implication; each branch a disjunction.
The implications follow from \cref{lem:suff-cmpl,lem:ver-equiv,lem:min-equiv,lem:uniq-equiv}, and \cref{def:acc} as well as the definitions of the respective conditions.
For example, if \hyperref[cnd:vertp-ne]{$\neg \VerNE_{\secprop}^{\tp}$}, we know by \cref{lem:ver-equiv} that \hyperref[cnd:ver]{$\neg \Ver_{\secprop}$}.
Hence, $\neg \VC$ and by \cref{thm:cmpl-set} accountability is not provided.

In the following, we discuss the meaning of each failed condition and give hints on how to fix the problems.
\paragraph{Case {\hyperref[cnd:sufftp]{$\neg \Suff_{\ct_i}^{\tp}$}}}
  There does not exist a single-matched trace for $\ct_i$ in which only a subset of the blamed parties is corrupted.
  At least one party, which is needed to cause a violation is not blamed.
  Accountability may still be provided.

  \textbf{Hint:} Assume \nameref{cnd:vertp-ne}.
  If $\neg \nameref{cnd:single}$, we should solve this problem first.
  In all single-matched traces of $\ct_i$, there exists at least one corrupted party which is not one of the instantiated free variables of $\ct_i$.
  It may be possible to revise $\ct_i$ by adding additional free variables and action constraints such that all parties needed for a violation are blamed by $\ct_i$.

\paragraph{Case {\hyperref[cnd:vertp-e]{$\neg \VerE_{\secprop,\ct_i}^{\tp}$}}}
  No case test holds, but the security property is violated.
  This indicates that the case tests are not exhaustive, that is, capture all possible ways to cause a violation.
  Accountability is not provided.

  \textbf{Hint:} The trace \Tamarin{} found as a counterexample may give a hint for an additional case test or shows that the security can be violated in an unintended way.

\paragraph{Case {\hyperref[cnd:vertp-ne]{$\neg \VerNE_{\secprop,\ct_i}^{\tp}$}}}
  The case test $\ct_i$ holds but the security property is not violated.
  This indicates that there exists a trace where $\ct_i$ is not sufficient to cause a violation.
  Accountability is not provided.

  \textbf{Hint:} The trace \Tamarin{} found as a counterexample may give a hint to revise $\ct_i$ such that for all traces in which it holds the security property is violated.

\paragraph{Case {\hyperref[cnd:mintp]{$\neg \Min_{\ct_i}^{\tp}$}}}
  There exists an instantiation of a case test $\ct_j$ which blames strictly fewer parties than an instantiation of $\ct_i$ in the same trace.
  Accountability is not provided.

  \textbf{Hint:} Assume \hyperref[cnd:vertp-ne]{$\VerNE_{\ct_i}^{\tp}$} and \hyperref[cnd:vertp-ne]{$\VerNE_{\ct_j}^{\tp}$}.
   If both $\ct_i$ and $\ct_j$ are necessary for \nameref{cnd:vertp-e} to hold, they need to be separated such that they do not hold simultaneously.
   This can be accomplished by replacing $\ct_i$ with $\ct_i \land \neg \del[\big]( \ct_j \land \TraceProp[\big]{ \fv(\ct_j) \subsetneq \fv(\ct_i) } )$.

\paragraph{Case {\hyperref[cnd:uniqtp]{$\neg \Uniq_{\ct_i}^{\tp}$}}}
  A party is blamed by an instantiation of $\ct_i$ but it has not been corrupted, thereby holding an honest party accountable.
  Accountability is not provided.

  \textbf{Hint:} Assume \nameref{cnd:vertp-ne}.
  If $\neg \nameref{cnd:mintp}$, we should solve this problem first.
  The trace \Tamarin{} found as a counterexample shows which party is blamed without having been corrupted.
  If the corresponding instantiated free variable can never be corrupted, it can be quantified in $\ct_i$ to avoid being blamed.
  If it can be corrupted for some traces, a closer look on $\ct_i$ and the protocol is necessary.

\paragraph{Case {\hyperref[cnd:single]{$\neg \SinM_{\ct_i}^{\tp}$}}}
  There does not exist a single-matched trace for $\ct_i$.
  Either
  \begin{enumerate}[label=(\roman*)]
    \item\label[case]{it:dec-relt-1} there does not exist a trace where $\ct_i$ holds, or
    \item\label[case]{it:dec-relt-2} $\ct_i$ always holds with multiple instantiations, or
    \item\label[case]{it:dec-relt-3} for all traces there exist another case test which holds at the same time
  \end{enumerate}
  Accountability may still be provided.

  \textbf{Hint:} Assume \nameref{cnd:vertp-ne}.
  In \cref{it:dec-relt-1}, $\ct_i$ may be ill-defined or contains a logic error.
  In \cref{it:dec-relt-2}, if all the instantiations are permutations of each other, a single-matched trace may be obtained by making $\ct_i$ antisymmetric.
  Then for all instantiations $\inst$, $\inst$'
  \begin{equation*}
    t \holds \ct_i \inst \land t \holds \ct_i \inst' \land \fv(\ct_i) \inst = \fv(ct_i) \inst' \implies \inst = \inst'\,.
  \end{equation*}
  If the instantiations are not permutations, at least two disjoint groups of parties are always blamed.
  This requires a closer look on $\ct_i$ and the protocol.
  In \cref{it:dec-relt-3}, it may be possible to merge multiple case tests together for which then a single-matched trace exists.

\paragraph{Case {\hyperref[cnd:ins-inj]{$\neg \InsI_{\ct_i}^{\tp}$}}}
  The case test $\ct_i$ is not injective.
  There exists an instantiation mapping distinct free variables to the same party.
  Accountability may still be provided.

  \textbf{Hint:} See \cref{ex:sub-inj-split} for a way to split $\ct_i$.

We note that for the conditions \nameref{cnd:sufftp}, \nameref{cnd:mintp}, \nameref{cnd:uniqtp}, and \nameref{cnd:single}, we assumed above that the case tests satisfy \nameref{cnd:vertp-ne}.
If this is not the case, then the case test has a fatal error---it does not always lead to a violation---which renders the other conditions meaningless.

\begin{full}
\subsection{Stateful Applied Pi Calculus}
\label{sec:sapic}

In this \lcnamecref{sec:sapic}, we introduce the Stateful Applied Pi Calculus---called SAPiC \parencite{kremer2014automated,backes2017novel}---which is an extension to the well-known applied-$\pi$ calculus \parencite{abadi2001mobile}.
In addition to the functionality of the former calculus, SAPiC provides support for accessing and updating an explicit global state.

\begin{figure}
  \caption{SAPiC syntax}
  \label{fig:sapic-syntax}
  \centering
  \begin{minipage}{.49\linewidth}
    \begin{grammar}
      <$\ps{P}$,$\ps{Q}$> ::= $0$
      \alt $\ps{P} \pcp \ps{Q}$
      \alt $\ps{P} + \ps{Q}$
      \alt $\rpl \ps{P}$
      \alt $\new \var{n};\ \ps{P}$
      \alt $\kw{out}([\var{M}{,}]\, \var{N});\ \ps{P}$
      \alt $\kw{in}([\var{M}{,}]\, \var{N});\ \ps{P}$
      \alt \begin{spreadlines}{0pt}$\begin{aligned}[t]&\kw{if}\ \var{M} = \var{N}\ \kw{then}\ \ps{P}\\ &[\kw{else}\ \ps{Q}]\end{aligned}$\end{spreadlines}\vspace{1.5pt}
      \alt $\kw{event}\ \fact{F};\ \ps{P}$
      \alt $\kw{insert}\ \var{M}, \var{N};\ \ps{P}$
      \alt $\kw{delete}\ \var{M};\ \ps{P}$
      \alt \begin{spreadlines}{0pt}$\begin{aligned}[t]&\kw{lookup}\ \var{M}\ \kw{as}\ \var{x}\ \kw{in}\ \ps{P}\\ &[\kw{else}\ \ps{Q}]\end{aligned}$\end{spreadlines}\vspace{1.5pt}
      \alt $\kw{lock}\ \var{M};\ \ps{P}$
      \alt $\kw{unlock}\ \var{M};\ \ps{P}$
    \end{grammar}
  \end{minipage}%
  \begin{minipage}{.51\linewidth}
    \begin{grammar}
      <$\var{M}, \var{N}$> ::= $\var{x}$, $\var{y}$, $\var{z} \in \Vars$
      \alt $\var{p} \in PN$
      \alt $\var{n} \in FN$
      \alt $\fun{f}(\var{M_1}, \dots, \var{M_k})$, $f \in \Sig^k$
    \end{grammar}
  \end{minipage}
\end{figure}

In the following, we explain the syntax and semantics of the calculus.
The syntax of SAPiC is shown in \cref{fig:sapic-syntax}.

\medskip

\noindent $0$:

The terminal process.

\medskip

\noindent $\ps{P} \pcp \ps{Q}$:

The parallel execution of the processes $\ps{P}$ and $\ps{Q}$.

\medskip

\noindent $\ps{P} + \ps{Q}$

\emph{External} non-deterministic choice.
If $\ps{P}$ or $\ps{Q}$ can reduce to a process $\ps{P}'$ or $\ps{Q}'$, $\ps{P} + \ps{Q}$ may reduce to either.

\medskip
\noindent $\rpl \ps{P}$

The replication of $\ps{P}$ allowing an unbounded number of sessions in protocol executions.
It is equivalent to $\ps{P} \pcp \rpl \ps{P}$.

\medskip

\noindent $\new \var{n};\ \ps{P}$
\nopagebreak

This construct binds the name $\var{n} \in \FN$ in $\ps{P}$ and models the generation of a fresh, random value.

\medskip

\noindent \begin{spreadlines}{1pt}
     $\begin{aligned}[t]
       &\kw{out}\del( [\var{M}{,}]\, \var{N} );\ \ps{P} \\
       &\kw{in}\del( [\var{M}{,}]\, \var{N} );\ \ps{P}
     \end{aligned}$
   \end{spreadlines}

These constructs represent the output and input of a message $\var{N}$ on channel $\var{M}$ respectively.
The channel argument is optional and defaults to the public channel $c$.
In contrast to the applied-$\pi$ calculus \parencite{abadi2001mobile}, SAPiC's input construct performs pattern matching instead of variable binding.

\medskip
  
\noindent $\kw{if}\ \var{M} = \var{N}\ \kw{then}\ \ps{P}\ [\kw{else}\ \ps{Q}]$

If $\ps{M} =_E \ps{N}$, this process reduces to $\ps{P}$ and otherwise to $\ps{Q}$.
The else-branch is optional and defaults to the $0$ process.

\medskip
 
\noindent $\kw{event}\ \fact{F};\ \ps{P}$

This construct leaves the fact $\fact{F}$ in the trace of the process execution which is useful in the definition of trace formulas.

\medskip

\noindent $\kw{insert}\ \var{M}, \var{N};\ \ps{P}$

This construct associates the key $\var{M}$ with the value $\var{N}$. 
An insert to an existing key overwrites the old value.

\medskip

\noindent $\kw{delete}\ \var{M};\ \ps{P}$

This construct removes the value associated to the key $\var{M}$.

\medskip

\noindent $\kw{lookup}\ \var{M}\ \kw{as}\ \var{x}\ \kw{in}\ \ps{P}\ [\kw{else}\ \ps{Q}]$

This construct retrieves the value associated to the key $\var{M}$ and binds it to $x$ in $\ps{P}$.
If no value has been associated with $\var{M}$, it reduces to $\ps{Q}$.

\medskip

\noindent \begin{spreadlines}{0pt}
   $\begin{aligned}[t]
      &\kw{lock}\ \var{M};\ \ps{P} \\
      &\kw{unlock}\ \var{M};\ \ps{P}
    \end{aligned}$
  \end{spreadlines}

These constructs protect a term $\var{M}$ from concurrent access similar to Dijkstra's binary semaphores.
If $\var{M}$ has been locked, any subsequent attempt to lock $\var{M}$ will be blocked until $\var{M}$ has been unlocked.
This is important if parallel processes read and modify shared state.

\medskip

\paragraph{Frames and deduction}
During a protocol execution, the adversary may compute new messages from observed ones.
This is formalized by a \emph{deduction relation} and a \emph{frame}.
A frame, denoted by $\new \widetilde{\var{n}}. \sub$, consists of a set of fresh names $\widetilde{\var{n}}$ and a substitution $\sub$.
The fresh names are the secrets generated by the protocol which are \emph{a priori} unknown to the adversary and the substitution represents the observed messages.
The deduction rules of \cref{fig:deduction-rel} allow the adversary
\begin{itemize}
  \item to learn a free or public name if it is not a secret (\textsc{Dname}),
  \item to obtain a message in the substitution (\textsc{Dframe}),
  \item to derive a term equal modulo $E$ to an already deduced term (\textsc{Deq}), or
  \item to apply a non-private function to already deduced terms (\textsc{Dappl}).
\end{itemize}
\begin{definition}[Deduction]
  The deduction relation $\new \widetilde{\var{n}}. \sub \vdash t$ is defined as the smallest relation between frames and terms according to the deduction rules in \cref{fig:deduction-rel}.
\end{definition}

\begin{figure}
  \caption{Deduction relation}
  \label{fig:deduction-rel}
  \centering
  \begin{mathpar}
    \inferrule*[right=Dname, vcenter]
      {\var{a} \in \FN \\ \var{a} \notin \widetilde{\var{n}}}
      {\new \widetilde{\var{n}}. \sub \vdash \var{a}}

    \inferrule*[right=Deq, vcenter]
      {\new \widetilde{\var{n}}. \sub \vdash \var{t} \\ \var{t} =_E \var{t'}}
      {\new \widetilde{\var{n}}. \sub \vdash \var{t'}}

    \inferrule*[right=Dframe, vcenter]
      {\var{x} \in \dom(\sub)}
      {\new \widetilde{\var{n}}. \sub \vdash \var{x} \sub}

    \inferrule*[right=Dappl, vcenter]
      {\new \widetilde{\var{n}}. \sub \vdash \var{t_1}\ \dots\ \new \widetilde{\var{n}}. \sub \vdash \var{t_k} \\ \fun{f} \in \Sig^k \setminus \Sig_{\text{priv}}^k}
      {\new \widetilde{\var{n}}. \sub \vdash \fun{f}(\var{t_1}, \dots, \var{t_k})}
  \end{mathpar}
\end{figure}

\paragraph{Operational semantics}
The semantics of SAPiC is defined by a labeled transition relation between \emph{process configurations}.
A process configuration is a 5-tuple $(\FNames, \Store, \GProcs, \sub, \Locks)$ where
\begin{itemize}
  \item $\FNames \in \FN$ is the set of fresh names generated by the processes;
  \item $\Store \colon \GTerms_{\Sig} \to \GTerms_{\Sig}$ is a partial function modeling the store;
  \item $\GProcs$ is a multiset of ground processes representing the processes executed in parallel;
  \item $\sub$ is a ground substitution modeling the messages output to the environment;
  \item $\Locks \subseteq \GTerms_{\Sig}$ is the set of currently active locks.
\end{itemize}

The transition relation is specified by the rules shown in \cref{tab:sapic-trans-rel}.
Transitions are labeled by sets of ground facts.
Reducing a process means that it can transition from a configuration $c$ to a configuration $c'$ with a set of facts $\Set{\fact{F_1}, \dots, \fact{F_n} }$  which is denoted by $c \transin*{\Set{\fact{F_1}, \dots, \fact{F_n} }} c'$.
Empty sets and brackets around singleton sets are omitted for clarity.
We write $\transin{}$ for $\transin{\emptyset}$ and $\transin{\fact{f}}$ for $\transin{\Set{\fact{f}}}$.
An \emph{execution} is a sequence of consecutive configurations $c_1 \transin{\fact{F_1}} \dots \transin{\fact{F_n}} c_n$.
The \emph{trace} of an execution is the sequence of nonempty facts $\fact{F_i}$.
The reflexive transitive closure of $\transin{}$, which are the transitions labeled by the empty sets, is denoted by $\transin{}^*$ and $\transc{\fact{f}}$ denotes $\transin{}^* \transin{\fact{f}} \transin{}^*$.
The set of traces of a process $\ps{P}$ contains all traces of possible executions of the process.
\begin{definition}[Traces of $\ps{P}$]
  Given a ground process $\ps{P}$, the traces of $\ps{P}$ are defined by
  \begin{align*}
    \traces \del[\big](\ps{P}) = \DisplaySet[\Big]{ \del[\big]( \fact{F_1}, \dots, \fact{F_n} ) \given c_0 \transc{\fact{F_1}} \dots \transc{\fact{F_n}} c_n }\,,
  \end{align*}
  where $c_0 = \del[\big]( \emptyset, \emptyset, \Set{ \ps{P} }, \emptyset, \emptyset )$ is the initial process configuration.
\end{definition}

\begin{table*}
  \caption{Operational semantics of SAPiC}
  \label{tab:sapic-trans-rel}
  \footnotesize
  \centering
  \begin{tabular}{l@{}c@{\hspace{.5\tabcolsep}}l@{\hspace{.75\tabcolsep}}l}
    \toprule
    current configuration ($c_i$) & label & \multicolumn{2}{@{}l}{next configuration ($c_{i+1}$)} \\
    \midrule
    $\GProcs \munion \Set{ 0 }$                  & $\trans{}$ & $\GProcs$ & \\ \midrule
    $\GProcs \munion \Set{ \ps{P} \pcp \ps{Q} }$ & $\trans{}$ & $\GProcs \munion \Set{ \ps{P}, \ps{Q} }$ & \\ \midrule
    $\GProcs \munion \Set{ \rpl \ps{P} }$        & $\trans{}$ & $\GProcs \munion \Set{ \ps{P}, \rpl \ps{P} }$ & \\ \midrule
    $\GProcs \munion \Set{ \new n;\ \ps{P} }$     & $\trans{}$ & $\begin{aligned}
      &\FNames \union \Set{ \var{n'} } \\
      &\GProcs \munion \Set{ \ps{P}\Set{ \sfrac{\var{n'}}{\var{n}} } }
    \end{aligned}$ & if $\var{n'}$ is fresh \\ \midrule
    $\GProcs$                                    & $\trans*{\fact{K}(\var{M})}$ & $\GProcs$ & if $\new \FNames. \sub \vdash \var{M}$ \\ \midrule
    $\GProcs \munion \Set{ \kw{out}\del[\big]( \var{M}, \var{N} );\ \ps{P} }$ & $\trans*{\fact{K}(\var{M})}$ & $\begin{aligned}
      &\GProcs \munion \Set{ \ps{P} } \\
      &\sub \union \Set{ \sfrac{ \var{N} }{ \var{x} } }
    \end{aligned}$ & $\begin{aligned}
      &\text{if $\var{x}$ is fresh}\\
      &\text{and $\new \FNames. \sub \vdash \var{M}$}
    \end{aligned}$ \\ \midrule
    $\GProcs \munion \Set{ \kw{in}\del[\big]( \var{M}, \var{N} );\ \ps{P} }$ & $\trans*{\fact{K}(\del< \var{M}, \var{N} \gamma >)}$ & $\GProcs \munion \Set{ \ps{P} \gamma }$ & $\begin{aligned}
      &\text{if $\new \FNames. \sub \vdash \var{M}$, $\new \FNames. \sub \vdash \var{N} \gamma$} \\
      &\text{and $\gamma$ is grounding for $\var{N}$}
    \end{aligned}$ \\ \midrule
    $\GProcs \munion \Set{ \kw{out}\del[\big]( \var{M}, \var{N} );\ \ps{P}, \kw{in}\del[\big]( \var{M'}, \var{N'} );\ \ps{Q} }$ & $\trans{}$ & $\GProcs \munion \Set{ \ps{P}, \ps{Q} \gamma }$ & $\begin{aligned}
      &\text{if $\var{M} =_E \var{M'}$ and $\var{N} =_E \var{N'} \gamma$}\\
      &\text{and $\gamma$ is grounding for $\var{N'}$}
    \end{aligned}$ \\ \midrule
    \multirow{2}{*}[-0.5em]{$\GProcs \munion \Set{ \kw{if}\ \var{pr}( \var{M_1}, \dots, \var{M_n} )\ \kw{then}\ \ps{P}\ \kw{else}\ \ps{Q} }$} & \multirow{2}{*}[-0.5ex]{$\trans{}$} & $\GProcs \munion \Set{ \ps{P} }$ & if $\phi_{\var{pr}} \Set{ \sfrac{\var{M_1}}{\var{x_1}}, \dots, \sfrac{\var{M_n}}{\var{x_n}} }$ \\ \cmidrule{3-4}
    & & $\GProcs \munion \Set{ \ps{Q} }$ & otherwise \\ \midrule
    $\GProcs \munion \Set{ \kw{event}\ \fact{F};\ \ps{P} }$ & $\trans{\fact{F}}$ & $\GProcs$ & \\ \midrule
    $\GProcs \munion \Set{ \kw{insert}\ \var{M}, \var{N};\ \ps{P} }$ & $\trans{}$ & $\begin{aligned}
      &\Store \del[\big][ \var{M} \mapsto \var{N} ] \\
      &\GProcs \munion \Set{ \ps{P} }
    \end{aligned}$ & \\ \midrule
    $\GProcs \munion \Set{ \kw{delete}\ \var{M};\ \ps{P} }$ & $\trans{}$ & $\begin{aligned}
      &\Store \del[\big][ \var{M} \mapsto \bot ] \\
      &\GProcs \munion \Set{ \ps{P} }
    \end{aligned}$ & \\ \midrule
    \multirow{2}{*}[-1.1em]{$\GProcs \munion \Set{ \kw{lookup}\ \var{M}\ \kw{as}\ \var{x}\ \kw{in}\ \ps{P}\ \kw{else}\ \ps{Q};\ \ps{P} }$} & \multirow{2}{*}[-1.1em]{$\trans{}$} & $\GProcs \munion \Set{ \ps{P} \Set{ \sfrac{\var{V}}{\var{x}} } }$ & $\begin{aligned}
      &\text{if $\Store(\var{N}) =_E V$ is defined}\\
      &\text{and $\var{N} =_E \var{M}$}
    \end{aligned}$ \\ \cmidrule{3-4}
    & & $\GProcs \munion \Set{ \ps{Q} }$ & $\begin{aligned}
      &\text{if $\Store(\var{N})$ is undefined}\\
      &\text{for all $\var{N} =_E \var{M}$}
    \end{aligned}$ \\ \midrule
    $\GProcs \munion \Set{ \kw{lock}\ \var{M};\ \ps{P} }$ & $\trans{}$ & $\begin{aligned}
      &\GProcs \munion \Set{ \ps{P} } \\
      &\Locks \union \Set{ \var{M} }
    \end{aligned}$ & if $\var{M} \notin_E \Locks$ \\ \midrule
    $\GProcs \munion \Set{ \kw{unlock}\ \var{M};\ \ps{P} }$ & $\trans{}$ & $\begin{aligned}
      &\GProcs \munion \Set{ \ps{P} } \\
      &\mathrlap{\Locks \setminus \Set{ \var{M'} \given \var{M'} =_E \var{M} }}
    \end{aligned}$ & \\
    \bottomrule
  \end{tabular}
\end{table*}

\subsection{Ensuring Guardedness}
\label{sec:guardedness}

Trace properties in \Tamarin{} are specified by trace formulas from a guarded fragment of two-sorted first-order logic.
Guardedness imposes requirements on the structure of the trace formulas.
Universally and existentially quantified variables have to be \emph{guarded} by an action constraint directly after the quantifier in which all the variables occur.
For universally quantified trace formulas, the outermost logical operator inside the quantifier has to be an implication; for existentially quantified trace formulas a conjunction.
Formally, we can define guardedness as follows.
\begin{definition}[Guarded trace formula]
  A trace formula $\varphi$ is guarded if there exists a fact $\fact{Action} \in \Facts$ and a trace formula $\psi$ such that
  \begin{align}
    \varphi &= \Exists{ \vec{x}, i } \fact{Action}(\vec{x})@i \land \psi(\vec{x}) \text{ or} \label{eq:guarded-exists} \\
    \varphi &= \Forall{ \vec{x}, i } \fact{Action}(\vec{x})@i \impliestp \psi(\vec{x})\,. \label{eq:guarded-forall}
  \end{align}
\end{definition}

In order for \Tamarin{} to verify the generated lemmas, we have to ensure that they conform to either \cref{eq:guarded-exists} or \cref{eq:guarded-forall}.
Taking a closer look on the defined trace properties in \cref{sec:verification-conditions-trace-prop}, we see that a case test $\ct_i$ occurs in exactly three different kinds of subformulas.
\begin{align}
  &\Exists{ \vec{v} } \ct_i \subst{\vec{v}} \label{eq:cnd-ct-sub-1} \\
  &\Forall{ \vec{v} } \neg \ct_i \subst{\vec{v}} \label{eq:cnd-ct-sub-2} \\
  &\Forall{ \vec{v} } \ct_i \subst{\vec{v}} \impliestp \gamma(\vec{v}) \label{eq:cnd-ct-sub-3}
\end{align}

Assume $\ct_i$ is guarded and has the form of \cref{eq:guarded-exists}.
Expanding $\ct_i$ in the above formulas yields
\begingroup
\allowdisplaybreaks
\begin{gather*}
  \begin{aligned}
    &\Exists{ \vec{v} } \Exists{ \vec{x}, i } \fact{Action}(\vec{x})@i \land \del[\big](\psi(\vec{x}))\subst{\vec{v}} \\
    \isequiv &\Exists{ \vec{v}, \vec{x}, i } \fact{Action}(\vec{x})@i \land \del[\big](\psi(\vec{x}))\subst{\vec{v}}
  \end{aligned} \\[0.65em]
  \begin{aligned}
    &\Forall{ \vec{v} } \neg \del[\big]( \Exists{ \vec{x}, i } \fact{Action}(\vec{x})@i \land \del[\big](\psi(\vec{x}))\subst{\vec{v}} ) \\
    \isequiv &\Forall{ \vec{v}, \vec{x}, i } \fact{Action}(\vec{x})@i \impliestp \neg \del[\big](\psi(\vec{x}))\subst{\vec{v}}
  \end{aligned} \\[0.65em]
  \begin{aligned}
    &\Forall{ \vec{v} } \del[\big]( \Exists{ \vec{x}, i } \fact{Action}(\vec{x})@i \land \del[\big](\psi(\vec{x}))\subst{\vec{v}} ) \impliestp \gamma(\vec{v}) \\
    \isequiv &\Forall{ \vec{v}, \vec{x}, i } \fact{Action}(\vec{x})@i \land \del[\big](\psi(\vec{x}))\subst{\vec{v}} \impliestp \gamma(\vec{v})\,.
  \end{aligned}
\end{gather*}
\endgroup

All formulas are in the form of \cref{eq:guarded-exists} or \cref{eq:guarded-forall} and are thus themselves guarded.

Assume $\ct_i$ is guarded and has the form of \cref{eq:guarded-forall}.
Expanding $\ct_i$ in the above formulas yields
\begingroup
\allowdisplaybreaks
\begin{gather*}
  \begin{aligned}
    &\phantom{\isequiv} \Exists{ \vec{v} } \Forall{ \vec{x}, i } \fact{Action}(\vec{x})@i \impliestp \del[\big](\psi(\vec{x}))\subst{\vec{v}} 
  \end{aligned} \\[0.65em]
  \begin{aligned}
    &\Forall{ \vec{v} } \neg \del[\big]( \Forall{ \vec{x}, i } \fact{Action}(\vec{x})@i \impliestp \del[\big](\psi(\vec{x}))\subst{\vec{v}} ) \\
    \isequiv &\Forall{ \vec{v} } \Exists{ \vec{x}, i } \fact{Action}(\vec{x})@i \land \neg \del[\big](\psi(\vec{x}))\subst{\vec{v}}
  \end{aligned} \\[0.65em]
  \begin{aligned}
    &\Forall{ \vec{v} } \del[\big]( \Forall{ \vec{x}, i } \fact{Action}(\vec{x})@i \impliestp \del[\big](\psi(\vec{x}))\subst{\vec{v}} ) \impliestp \gamma(\vec{v}) \\
    \isequiv &\Forall{ \vec{v} } \Exists{ \vec{x}, i } \del[\big]( \neg \fact{Action}(\vec{x})@i \lor \del[\big](\psi(\vec{x}))\subst{\vec{v}} ) \impliestp \gamma(\vec{v})\,.
  \end{aligned}
\end{gather*}
\endgroup

We notice that none of the formulas has the form of \cref{eq:guarded-exists} or \cref{eq:guarded-forall}, since universal and existential quantifiers cannot be combined.
Trace formulas of this form are outside the guarded fragment that \Tamarin{} can verify.
However, a case test $\ct_i$ in the form of \cref{eq:guarded-forall} can be transformed into a case test $\ct'_i$ in the form of \cref{eq:guarded-exists} by adding guardedness constraints.
We have
\begin{equation*}
  \ct_i' \subst{\vec{v}} = \del[\Big][ \Exists{ k } \And*{ \ell \in \idx(\vec{v}) } \fact{Guarded}(\vec{v}_{\ell})@k ] \land \ct_i \subst{\vec{v}}\,,
\end{equation*}
which is guarded and in the form of \cref{eq:guarded-exists}.
If the protocol is adapted to issue $\fact{Guarded}$ facts for all parties in $\Parties$, then $\ct_i'$ is equivalent to $\ct_i$.

In summary, to ensure the guardedness of the generated lemmas, it is sufficient to require that all case tests are guarded and in the form of \cref{eq:guarded-exists}.
Since a guarded case test can always be transformed into this form, this is only a technical requirement and does not limit the expressiveness of the case tests.
Moreover, we note that this is a sufficient but not necessary condition.
There are case tests that are not guarded themselves but the generated lemmas are.
\end{full}

\end{document}